\documentclass[11pt]{amsart}
\usepackage{amssymb,amsmath,mathtools}
\usepackage{enumitem}
\usepackage{xcolor}
\usepackage{caption}
\linespread{1.1}
\usepackage[centering]{geometry}
\usepackage[utf8]{inputenc}
\usepackage{graphicx}
\usepackage{float}
\usepackage{tabu}
\usepackage{makecell}
\usepackage{booktabs}
\usepackage{multirow}

\newtheorem{theorem}{Theorem}[section]
\newtheorem{proposition}[theorem]{Proposition}
\newtheorem{lemma}[theorem]{Lemma}
\newtheorem{corollary}[theorem]{Corollary}

\theoremstyle{definition}
\newtheorem{definition}[theorem]{Definition}

\newtheorem{remark}[theorem]{Remark}

\newcommand{\eins}{\text{\ensuremath{1\hspace*{-0.9ex}1}}}
\numberwithin{equation}{section}

\newcommand{\R}{\mathbb{R}}

\newcommand{\E}{\mathbb{E}}
\renewcommand{\P}{\mathbb{P}}
\newcommand{\C}{\mathrm{\mathbb{C}ov}}
\newcommand{\V}{\mathrm{\mathbb{V}ar}}

\newcommand{\ES}{\mathrm{ES}}
\newcommand{\VAR}{\mathrm{VaR}}

\numberwithin{equation}{section}

\setlength\marginparwidth{7em}



\begin{document}

\title{Optimal nonparametric estimation of the expected shortfall risk}

\author{Daniel Bartl \and Stephan Eckstein}
\address{National University of Singapore, Department of Mathematics, Department of Statistics and Data Science}
\email{bartld@nus.edu.sg}
\address{University of T\"{u}bingen, department of mathematics, Auf der Morgenstelle 10, 72076 Tübingen, Germany}
\email{stephan.eckstein@uni-tuebingen.de}
\keywords{Sub-Gaussian estimators, adversarial robustness, heavy tails,  non-asymptotic statistics, risk measures}
\date{\today}

\begin{abstract}
We address the problem of estimating the expected shortfall risk of a financial loss using a finite number of i.i.d.~data.
It is well known that the classical plug-in estimator suffers from poor statistical performance when faced with (heavy-tailed) distributions that are commonly used in financial contexts.
Further, it lacks robustness, as the modification of  even a single data point can cause a significant distortion.
We propose a novel procedure for the estimation of the expected shortfall and prove that it recovers the best possible statistical properties (dictated by the central limit theorem) under minimal assumptions and for all finite sample sizes.
Further, this estimator is adversarially robust: even if a (small) proportion of the data is maliciously modified, it still provides optimal estimates of the true expected shortfall risk.
We demonstrate that our estimator outperforms the classical plug-in estimator through a variety of numerical experiments across a range of standard loss distributions.
\end{abstract}

\maketitle
\setcounter{equation}{0}
\setcounter{tocdepth}{2}


\section{Introduction}

A central task in risk management involves accurately quantifying the level of risk associated with a given financial position.
A rigorous framework addressing this question is provided by the theory of risk measures (see, e.g., \cite{Artzner1999, dhaene2006risk, mcneil2015quantitative}), and the most widely employed method in practice involves using the \emph{expected shortfall} risk measure (cf.~\cite{embrechts2014academic,Rock-Ury00,rockafellar2002conditional}).
For a random financial loss $X$,  the expected shortfall at the level $\alpha\in(0,\frac{1}{2})$ is defined as
\begin{align*}
\mathrm{ES}_\alpha(X)
&:= \frac{1}{\alpha} \int_{1-\alpha}^1 \VAR_u (X)\,du,
\end{align*}
where $\VAR_u(X):=\inf\{ t \in\mathbb{R} : \P(X\leq t)\geq u \}$ is the value at risk at the level $u \in [1-\alpha, 1)$ of the loss $X$ (cf.~\cite{mcneil2015quantitative, duffie1997overview}).

However, calculating $\mathrm{ES}_\alpha(X)$ requires knowledge of the distribution function $F_X(t) = \P(X \leq t)$ of $X$, which is rarely known precisely. 
Instead, one commonly relies on \emph{historical data}, denoted as $X_1, \dots, X_N$. 
While these data may typically be subject to some mild correlation, in this article, we adopt the standard approach found in many papers addressing the estimation of the expected shortfall (see, e.g., \cite{acerbi2007coherent,brazauskas2008estimating,cont2010robustness,hill2015expected,jones2003empirical,yamai2002comparative}) and assume the idealized scenario where the data are independent and identically distributed.
While this simplifying assumption is used throughout most of the paper and in the following discussion, we emphasize that the framework developed in this paper is sufficiently flexible to accommodate weak dependence structures as well, one of which is treated in Section \ref{sec:non.iid}.

\subsection{The plug-in estimator}

The arguably most natural estimator for $\ES_\alpha(X)$ using the data $X_1,\dots,X_N$ is the \emph{plug-in estimator}.
Indeed, noting that $\ES_\alpha(X)$ is a quantity that only depends on the distribution function $F_X$ of $X$, the plug-in estimator $\widehat{T}_N$ consists in taking the  empirical distribution function $\widehat{F}_N(t):= \frac{1}{N} \sum_{i=1}^N \eins_{(-\infty,t]}(X_i)$ as a proxy for $F_X$ and computing 
\begin{align*} 
\widehat{T}_N
&:= \frac{1}{\alpha}\int_{1-\alpha}^1 \VAR_u\left( \widehat{F}_N^{-1} \right)\,du,
\end{align*}
where $\VAR_u( \widehat{F}_N )=\widehat{F}_N^{-1}(u):=\inf\{t: \widehat{F}_N(t)\geq u\}$.

The central question of interest in the present context  is then the extent to which the estimator $\widehat{T}_N$ approximates $\ES_\alpha(X)$.
More accurately, since $\widehat{T}_N$ is a random quantity, one is interested in obtaining bounds on the error $\widehat{T}_N- \ES_\alpha(X)$ that hold with  \emph{high confidence}.
As a first step towards that goal, it often turns out helpful to analyse the \emph{asymptotic} behavior as $N\to\infty$. 
In fact, the latter has been studied in detail and is well understood by now.
To formulate the results, denote by 
\begin{align*}
\sigma^2_{\ES_\alpha(X)}
:=\frac{1}{\alpha^2} \int_{\VAR_{1-\alpha}(X)}^\infty \int_{\VAR_{1-\alpha}(X)}^\infty  F_X(\min\{t,s\}) - F_X(t)F_X(s) \,dtds
\end{align*}
the variance associated to the estimation problem of the expected shortfall.
Note that $\sigma^2_{\ES_\alpha(X)}$ is always well-defined taking values in $[0,\infty]$, and it is finite whenever the positive part of $X$ is square integrable.
With this notation set in place, the following central limit theorem for the convergence of $\widehat{T}_N$ towards $\ES_{\alpha}(X)$ is known.

\begin{theorem}[\cite{brazauskas2008estimating,pflug2010asymptotic}]
\label{thm:clt}
	Assume that $\sigma^2_{\ES_\alpha(X)}$ is finite  and that  $u\mapsto \VAR_{u}(X)$ is continuous in $1- \alpha$.
	Then, as $N\to\infty$, we have the weak convergence
\begin{align*}
\sqrt{N}\left( \widehat{T}_N-  \ES_\alpha(X)  \right)
\to \mathcal{N}\left( 0, \sigma^2_{\ES_\alpha(X)} \right).
\end{align*}
\end{theorem}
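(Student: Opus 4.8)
The plan is to reduce the statement to the ordinary central limit theorem for i.i.d.\ square-integrable random variables by means of the Rockafellar--Uryasev variational representation of the expected shortfall. Set $q:=\VAR_{1-\alpha}(X)$ and, for $z\in\R$, define $g(z):=z+\frac1\alpha\E[(X-z)^+]$ and $g_N(z):=z+\frac1\alpha\,\frac1N\sum_{i=1}^N(X_i-z)^+$. From the definition of $\VAR$ one checks directly that $\ES_\alpha(X)=\min_{z\in\R}g(z)=g(q)$ and, by the same identity applied to $\widehat F_N$, that $\widehat T_N=\min_{z\in\R}g_N(z)=g_N(\hat q_N)$ where $\hat q_N:=\widehat F_N^{-1}(1-\alpha)$ is the empirical quantile. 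The benefit of this representation is that, near its minimizer $q$, the value of $z$ enters $g(z)$ only to second order; quantitatively, bounding $\widehat T_N\le g_N(q)$ from above and writing $\widehat T_N=g_N(\hat q_N)=g(\hat q_N)+(g_N-g)(\hat q_N)\ge g(q)+(g_N-g)(\hat q_N)$ from below yields the sandwich
\[
(g_N-g)(\hat q_N)\ \le\ \widehat T_N-\ES_\alpha(X)\ \le\ (g_N-g)(q),
\]
where the right-hand side equals $\frac1\alpha\big(\frac1N\sum_{i=1}^N(X_i-q)^+-\E[(X-q)^+]\big)$ because the linear terms in $z=q$ cancel.

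I would then show that the two sides of the sandwich differ by $o_\P(N^{-1/2})$. Denoting by $\mathbb G_N:=\sqrt N(\widehat\E_N-\E)$ the empirical process (with $\widehat\E_N$ the empirical mean) and by $h_z(x):=(x-z)^+-(x-q)^+$ the increments, this reduces to proving $\mathbb G_N h_{\hat q_N}\to 0$ in probability. Two facts are needed: consistency $\hat q_N\to q$ in probability, which follows from the Glivenko--Cantelli theorem together with the assumed continuity of $u\mapsto\VAR_u(X)$ at $1-\alpha$; and asymptotic equicontinuity of $\mathbb G_N$ over the class $\{h_z:|z-q|\le\delta\}$ as $\delta\downarrow0$. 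Granted this, $\sqrt N(\widehat T_N-\ES_\alpha(X))=\tfrac1\alpha\mathbb G_N\big((\,\cdot-q)^+\big)+o_\P(1)$, and since the assumption $\sigma^2_{\ES_\alpha(X)}<\infty$ is equivalent to $(X-q)^+\in L^2$, the classical CLT gives a centered Gaussian limit with variance $\alpha^{-2}\V((X-q)^+)$. Finally I would identify this variance: writing $(X-q)^+=\int_q^\infty\eins\{X>t\}\,dt$ and applying Fubini,
\[
\V\big((X-q)^+\big)=\int_q^\infty\!\!\int_q^\infty\!\Big(\P\big(X>\max\{t,s\}\big)-\P(X>t)\P(X>s)\Big)\,dt\,ds=\int_q^\infty\!\!\int_q^\infty\!\big(F_X(\min\{t,s\})-F_X(t)F_X(s)\big)\,dt\,ds,
\]
which is exactly $\alpha^2\sigma^2_{\ES_\alpha(X)}$.

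The main obstacle is the stochastic equicontinuity step, which is where the minimality of the moment hypothesis must be handled with care. The relevant class $\{x\mapsto(x-z)^+:|z-q|\le\delta\}$ is monotone in the parameter $z$, each map $z\mapsto(x-z)^+$ is $1$-Lipschitz (so that $\|h_z\|_{L^2}\le|z-q|$), and under the standing assumption $\E[(X^+)^2]<\infty$ it admits the square-integrable envelope $x\mapsto(x-q+\delta)^+$; consequently its $L^2$-bracketing numbers grow only polynomially, the class is $\P$-Donsker, and the asymptotic equicontinuity modulus vanishes as $\delta\downarrow0$ because the $L^2$-diameter of $\{h_z:|z-q|\le\delta\}$ is at most $2\delta$. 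Together with $\P(|\hat q_N-q|\le\delta)\to1$ this closes the argument. Alternatively one can package the whole proof as an instance of the functional delta method applied to the Hadamard-differentiable functional $F\mapsto\ES_\alpha(F)$ combined with Donsker's theorem for $\sqrt N(\widehat F_N-F_X)$, but the self-contained sandwiching route sketched above seems to require the least machinery.
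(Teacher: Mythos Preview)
The paper does not prove Theorem~\ref{thm:clt}: it is quoted with citations to \cite{brazauskas2008estimating,pflug2010asymptotic} and then used as a benchmark, so there is no ``paper's own proof'' to compare against. Your sketch is a correct and standard route to the result: the Rockafellar--Uryasev sandwich reduces the problem to a single i.i.d.\ sum plus a remainder governed by $\hat q_N-q$, the continuity hypothesis on $u\mapsto\VAR_u(X)$ is used exactly where it should be (consistency of the empirical quantile), the class $\{(\,\cdot-z)^+:|z-q|\le\delta\}$ is Lipschitz-parametrized with a square-integrable envelope and hence Donsker, and your variance identification is correct.

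It is worth noting, however, that the decomposition you use is essentially the same one the paper employs in Section~\ref{sec:mainproofs} for its \emph{own} finite-sample results: there the distortion representation~\eqref{eq:es.distrition} and a first-order Taylor expansion of $\psi$ split $\widehat T_{I_j}-\ES_\alpha(X)$ into a linear term $\mathcal L_j$ and an error term $\mathcal E_j$, and a short computation shows
\[
\mathcal L_j=\int_\R\psi'(F(t))\big(\widehat F_{I_j}(t)-F(t)\big)\,dt=\frac{1}{\alpha}\Big(\frac{1}{m}\sum_{i\in I_j}(X_i-q)^+-\E[(X-q)^+]\Big),
\]
i.e.\ exactly your leading term $(g_N-g)(q)$. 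The difference in viewpoint is only that the paper, aiming at non-asymptotic bounds, controls $\mathcal E_j$ deterministically via the Lipschitz constant~$L$ of $F^{-1}$ (Lemma~\ref{lem:error}), whereas your asymptotic argument kills the analogous remainder through Donsker equicontinuity and the qualitative continuity of $F^{-1}$ at $1-\alpha$. Both approaches hinge on the same linearisation; yours buys a clean CLT under minimal moments, the paper's buys explicit finite-$N$ constants at the price of the extra Lipschitz assumption.
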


Here, $\mathcal{N}( 0,\sigma^2)$ denotes the normal distribution with mean zero and variance $\sigma^2$.
While it is  obvious that $\sigma^2_{\ES_\alpha(X)}<\infty$ is a necessary assumption in Theorem \ref{thm:clt}, it might be less obvious  whether  the assumption regarding the continuity of   $u\mapsto \VAR_{u}(X)$ is necessary.
As it happens, this assumption is indeed necessary, since without continuity $u\mapsto \VAR_{u}(X)$ in general 
\[\sqrt{N}\left( \widehat{T}_N-  \ES_\alpha(X)  \right)
\not\to \mathcal{N}\left( 0, \sigma^2_{\ES_\alpha(X)} \right).\]
We refer to \cite[Example 3.4]{pflug2010asymptotic} as well as Remark \ref{rem:lipschitz} and  Section \ref{sec:continuity.inverse} for more details.

\subsection{Finite sample performance of the plug-in estimator}

What is considerably more relevant than the central-limit-theorem–type result of Theorem~\ref{thm:clt}---which describes the asymptotic distribution as the sample size $N \to \infty$---is the finite-sample performance: how well does $\widehat{T}_N$ approximate $\ES_\alpha(X)$ for finite $N$? 
Mathematically, given a confidence level $1 - \delta \in (0,1)$, the goal is to find a number $r=r(\delta,N,X)$ such that the following holds:
\begin{align*}
\P\left( \left| \widehat{T}_N-  \ES_\alpha(X)  \right| \leq r \right) \geq 1-\delta, 
\end{align*}
where $\P$ is the probability taken over the i.i.d.\ samples $X_1,\dots,X_N$.
Naturally, $r$ should be as small as possible.

Upon initial examination, it may seem that Theorem \ref{thm:clt} offers a satisfactory answer to that question. 
Indeed, an overly optimistic interpretation of the normal approximation presented in Theorem \ref{thm:clt} consists in assuming that the normal approximation holds as an equality for finite $N$, which leads to 
\begin{align}
\label{eq:exp.rate.confidence}
\P\left( \left|   \widehat{T}_N  - \ES_\alpha(X)\right| 
\leq \sigma_{\ES_\alpha(X)}  \sqrt{ \frac{2}{N}  \log \left(\frac{2}{\delta} \right) }  \right) \approx 1-\delta .
\end{align}
In particular, the dependence of $r$ on $\delta$ is only logarithmically, meaning that one can obtain a very high confidence (i.e.\ $\delta$ very close to zero) without increasing the error by too much.
Note that an equivalent reformulation of \eqref{eq:exp.rate.confidence}---which we mostly adopt in this paper---is in terms of a Gaussian tail-estimate on the confidence:
\begin{align}
\label{eq:exp.rate}
\P\left( \left| \widehat{T}_N
- \ES_\alpha(X)\right| \geq \varepsilon \sigma_{\ES_\alpha(X)} \right)
	\approx 2\exp\left(- \frac{1}{2}N\varepsilon^2 \right).
	\end{align}

However, it is crucial to emphasize 
that the interpretation of Theorem \ref{thm:clt} in  \eqref{eq:exp.rate.confidence} and \eqref{eq:exp.rate}
is in general \emph{incorrect}.

The foremost reason is that an exponential rate as in  \eqref{eq:exp.rate} is simply false unless $X$ is very light-tailed, in the sense that it satisfies a Gaussian-like tail decay. 
Indeed, while it can be established that the estimate in \eqref{eq:exp.rate} is the best that one can hope for (see Section \ref{sec:no.estimtor.better.than.clt} for more details), the actual statistical behavior is significantly worse for distributions typically considered in finance and insurance.

 To illustrate this point, let us consider the case where $X$ follows a Pareto distribution with a scale parameter $\lambda>2$ (ensuring a finite variance). 
This distribution class is known for its heavy tails and is commonly used to model losses in insurance (see, e.g., \cite{beirlant1992modeling, wuthrich2023statistical}). 
Focusing on the dependence on $N$ for simplicity, we shall show in  Section \ref{sec:Pareto} that 
\begin{align}
	\label{eq:intro.paret}
	\P\left( \left| \widehat{T}_N
- \ES_\alpha(X)\right| 
\geq \varepsilon \sigma_{\ES_\alpha(X)} \right)
	\geq  
	\frac{C_\varepsilon}{ N^{\lambda-1}},
	\end{align}
which is far from the behavior suggested in \eqref{eq:exp.rate}.
	In fact, for values of $\lambda$ close to $2$, the right-hand side in \eqref{eq:intro.paret} scales almost as $\frac{1}{N}$ --- in other words, linearly as opposed to exponentially as in \eqref{eq:exp.rate}. 
	In this regard, Section \ref{sec:minimax} establishes that for more general classes of distributions, the linear rate 
	\[\P\left( \left| \widehat{T}_N
	- \ES_\alpha(X)\right| 
	\geq \varepsilon \sigma_{\ES_\alpha(X)} \right) \gtrsim \frac{1}{N\varepsilon^2}\] 
	is the best one can hope for.
    Equivalently, this means that 
    \[\P\left( | \widehat{T}_N	- \ES_\alpha(X)|  \lesssim \sigma_{\ES_\alpha(X)} \sqrt{\frac{1}{\delta N}}\right) \geq 1-\delta\]
    is the best dependence of the error on $\delta$ that one might hope for.
    For small $\delta$, $\sqrt{1/\delta N}$ is significantly larger than $\sqrt{\log(2/\delta) /N}$.

	An additional reason why \eqref{eq:exp.rate} is false stems from a distinct aspect, one that remains unaffected by the assumed tail behavior of $X$. 
	Indeed, Theorem \ref{thm:clt} states that if $u\mapsto \VAR_{u}(X)$ is continuous, then the typical behavior of the error is
	\begin{align}
	\label{eq:no.second.oder}
	\left| \widehat{T}_N
- \ES_\alpha(X)\right| 
= \mathcal{O}\left( \frac{\sigma_{\ES_\alpha(X)}}{\sqrt N} \right) 
	\end{align}
for \emph{sufficiently large} $N$, but it does not provide a characterization of when $N$ falls within this regime. 
It turns out that such a characterization is intricately tied to the quantitative continuity of the function $u\mapsto \VAR_{u}(X)$. 
Specifically, assuming $L$-Lipschitz continuity of the latter function (and, say, that $|X|\leq 1$ to alleviate concerns related to the tail behavior of $X$), we show that the typical behavior of the error has an additional `second order' term: 
	\begin{align}
	\label{eq:second.order}
	\E\left[ \left| \widehat{T}_N
- \ES_\alpha(X)\right|  \right]
 \gtrsim   \frac{\sigma_{\ES_\alpha(X)}}{\sqrt N}  +  \frac{L}{N} 
	\end{align}
	for all $N$, see Section \ref{sec:continuity.inverse} for more details and in particular Proposition \ref{prop:inverse.continuity} for a precise minimax result in this regard in particular showing that \eqref{eq:second.order} is also true with constant probability instead of just in-expectation.
	This shows that the regime suggested in \eqref{eq:no.second.oder} is the right one \emph{only} for  $N \gtrsim L^2/\sigma^2_{\ES_\alpha(X)}$.
	
	\begin{remark}
	\label{rem:lipschitz}
The reader familiar with quantile estimation will recognize the condition imposed on 
\(u \mapsto \VAR_{u}(X)\).
It is well known that continuity of \(u \mapsto \VAR_{u}(X)\) at \(1 - \alpha\) is necessary for the asymptotic consistency of the plug-in quantile estimator (and, in fact, of any estimator) whereas a non-asymptotic setting one typically requires Lipschitz continuity of \(u \mapsto \VAR_{u}(X)\) at \(1 - \alpha\), see, e.g., \cite{bahadur1966note,Boucheron2012}.

Note that if \(X\) admits a strictly positive density \(f_X\), then 
\[
\frac{d}{du}\VAR_{u}(X) = \frac{1}{f_X(F_X^{-1}(u))},
\]
so that Lipschitz continuity is essentially equivalent to the standard assumption in non-asymptotic quantile estimation, namely that 
\(f_X(F_X^{-1}(u)) \geq \frac{1}{L}\) for \(u\) close to \(1 - \alpha\).

Since \(\ES_\alpha\) represents the conditional mean beyond the level \(1 - \alpha\), 
it is natural that a quantitative continuity condition on the quantile function is required in the non-asymptotic estimation regime. 
To the best of our knowledge, however, this fundamental observation has not been explicitly stated or analyzed in previous work.
\end{remark}

\noindent
{\bf In summary:} the most optimistic estimate for estimating the expected shortfall risk that one can aspire to achieve  is
\begin{align}
\label{eq:best.possible}
\P\left( \left| \widehat{T}_N
- \ES_\alpha(X)\right| 
\gtrsim   \varepsilon \sigma_{\ES_\alpha(X)} + \varepsilon^2 L   \right)
	\leq 2 \exp\left(- \frac{1}{2}N\varepsilon^2 \right).
	\end{align}
Crucially, this estimate can only be true if $X$ has very light tails, comparable to a Gaussian distribution.
In more general settings, the performance of $\widehat{T}_N$ is typically significantly worse.

\subsection{An improved estimator}

Given that the tails of distributions used in finance and insurance are usually significantly heavier than those of a Gaussian distribution, the poor statistical performance of the plug-in estimator was addressed by many researchers.
A natural course of action going back to the early days of robust statistics  is to explore \emph{alternative} estimators --- i.e.\ functions $\widehat{R}_N\colon \R^N\to\R$ that assign to the data $(X_i)_{i=1}^N$ an estimate for $\ES_\alpha(X)$, and we refer to \cite{chen2008nonparametric, hill2015expected,necir2010estimating,yamai2005value,yamai2002comparative} for several works in this direction and the survey \cite{nadarajah2014estimation}.
Broadly speaking, the proposed estimators were usually demonstrated to enhance performance in statistical experiments, and/or their asymptotic statistical behavior was studied (with results similar to those in Theorem \ref{thm:clt}).
However, to the best of our knowledge, the statistical performance for finite samples has not been addressed so far --- in particular the presence of the second order term in \eqref{eq:second.order} remained unnoticed.

In this article, we construct an estimator that exhibits the  \emph{optimal} statistical behavior: 
irrespective of the tail-decay of $X$, if  $u\mapsto\VAR_{u}(X)$ is $L$-Lipschitz and $\sigma_{\ES_\alpha(X)}^2$ is finite, then the optimal estimate \eqref{eq:best.possible} is true once we replace $\widehat{T}_N$ by our proposed estimator.

\vspace{0.5em}
We start with the formal definition of the estimator and explain the intuition behind it in Remark \ref{rem:bias}. 
To that end, for $x>0$, denote by $\lfloor x \rfloor $ the largest integer smaller than $x$ and by  $\lceil x \rceil $ the smallest integer larger than $x$. 

\begin{definition}\label{def:proposed}
	Set $m:=\lceil \frac{11}{\varepsilon^2} \rceil$, $n:=\lfloor \frac{N}{m} \rfloor$ and
	\[I_j:=\{(j-1)m + 1,\dots, j m\}, ~~~ \text{ for } j = 1, \dots, n,\]
	so that $nm\leq N$ and $(I_j)_{j=1}^n$ forms a disjoint partition of $\{1,\dots, mn\}$. 
	For each  $j$, denote by $\widehat{T}_{I_j}$ the plug-in estimator using only the sample $(X_i)_{i\in I_j}$, i.e.,
	\[\widehat{T}_{I_j} 
	:= \frac{1}{\alpha} \int_{1-\alpha}^1 \VAR_u \left( \widehat{F}_{I_j} \right) \,du, ~~~ \text{where }~~~ \widehat{F}_{I_j}(t):=\frac{1}{m}\sum_{i\in I_j} \eins_{(-\infty,t]}(X_i). \]
	Denote by $\beta \mapsto \hat{Q}(\beta)$ the linearly interpolated empirical quantile function of $(\widehat{T}_{(j)})_{j=1}^n$.\footnote{
	Following \cite[Definition 7]{Hyndman-1996}, denote by  $(\widehat{T}_{(j)})_{j=1}^n$ the order statistics of  $(\widehat{T}_{I_j})_{j=1}^n$, set $\hat{Q}(0) = \widehat{T}_{(1)}, \hat{Q}(1) = \widehat{T}_{(n)}$,
	\[
	\hat{Q}\left(\frac{j-1}{n-1}\right) = \widehat{T}_{(j)}, ~ \text{ for } j=2, \dots, n-1,
	\]
	and let $\hat{Q}$ be the function that linearly interpolates between these values.
	}
	Then, for some hyperparameters $0.35 \leq \beta_1 \leq \beta_2 \leq 0.65$, we set
	\[
	\widehat{S}_N := \min\left\{ \max\left\{ \widehat{T}_N, \hat{Q}(\beta_1) \right\}, \hat{Q}(\beta_2) \right\}.
	\]
\end{definition}

To unpack Definition \ref{def:proposed}, we restate the estimation procedure more algorithmically:
	\begin{enumerate}
		\item Hyperparameters: Block size $m \in \mathbb{N}$ and quantile levels $0.35 \leq \beta_1 \leq \beta_2 \leq 0.65$.
		\item Input: Data $X_1, \dots, X_N$
        \item Step 1: Calculate the plug-in estimator $\widehat{T}_N$ on the whole dataset
        \item Step 2: Split the data into $n \approx \frac{N}{m}$ many disjoint blocks $I_1, \ldots, I_n$ of size $m$ each and calculate the block-wise plug-in estimators $\widehat{T}_{I_1}, \ldots, \widehat{T}_{I_n}$
        \item Step 3: Get the $\beta_1$- and $\beta_2$-quantiles $\hat{Q}(\beta_1)$ and $\hat{Q}(\beta_2)$ of $\widehat{T}_{I_1}, \ldots, \widehat{T}_{I_n}$
        \item Output $\widehat{S}_N$: Return $\widehat{T}_N$ truncated to the interval $[\hat{Q}(\beta_1), \hat{Q}(\beta_2)]$
	\end{enumerate}

We will explain the reasoning behind the estimator $\widehat{S}_N$ and discuss the importance of the hyperparameters $\beta_1$ and $\beta_2$ in Remark \ref{rem:bias}.
Before doing so, let us present our first main result pertaining to the statistical behaviour of $\widehat{S}_N$.

\begin{theorem}
\label{thm:ES}
	There are absolute constants $c_0,c_1,c_2, c_3>0$ such that the following holds.
	Assume that $\sigma^2_{\ES_\alpha(X)}$ is finite and that $u\mapsto \VAR_u(X)$ is $L$-Lipschitz continuous on $[1-2\alpha , 1- \frac{1}{2}\alpha]$.
	Then, for every $\varepsilon< c_0\sqrt\alpha$ and $N \geq c_1 \varepsilon^{-2}$, 
	\[ \P\left( \left| \widehat{S}_N-\mathrm{ES}_\alpha(X) \right| 
	\geq   \varepsilon\sigma_{\ES_\alpha(X)}  + c_2 L \varepsilon^2   \right)
	 \leq \exp\left(-c_3 N\varepsilon^2\right).\]
\end{theorem}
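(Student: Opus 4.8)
The plan is to exploit the classical median-of-means philosophy: the block estimators $\widehat{T}_{I_1}, \dots, \widehat{T}_{I_n}$ are i.i.d.\ copies of the plug-in estimator on $m = \lceil 11 \varepsilon^{-2}\rceil$ samples, and we use their empirical quantiles around the median to build a sub-Gaussian confidence interval. First I would establish a one-block estimate: control the bias $|\E[\widehat{T}_{I_1}] - \ES_\alpha(X)|$ and the fluctuation of $\widehat{T}_{I_1}$ around its mean. The bias is the delicate part here — unlike the mean, the plug-in ES estimator is biased, and I expect the bias to be of order $L/m \asymp L\varepsilon^2$ under the $L$-Lipschitz assumption on $u\mapsto\VAR_u(X)$ near $1-\alpha$; this is exactly where the second-order term $c_2 L\varepsilon^2$ in the statement comes from. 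For the fluctuation, I would use the variance bound: $\V(\widehat{T}_{I_1}) \approx \sigma^2_{\ES_\alpha(X)}/m$ (this is essentially the CLT variance computed for finite $m$, controllable since $\sigma^2_{\ES_\alpha(X)}<\infty$), so by Chebyshev a single block satisfies $|\widehat{T}_{I_1} - \ES_\alpha(X)| \le \varepsilon\sigma_{\ES_\alpha(X)}/\sqrt{11} \cdot (\text{const}) + (\text{bias})$ with probability bounded below by, say, $0.9$, once $m\ge 11\varepsilon^{-2}$ makes the constant work out. The choice of the constant $11$ in $m$ is tuned precisely so that this per-block success probability exceeds $1/2$ by a fixed margin.

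The second step is the quantile/concentration argument. Let $A$ be the (good) event that $\widehat{T}_{I_j}$ lies in the target interval $J := [\ES_\alpha(X) - r, \ES_\alpha(X)+r]$ with $r := \varepsilon\sigma_{\ES_\alpha(X)} + c_2 L\varepsilon^2$ (absorbing bias and fluctuation). By step one, $\P(\widehat{T}_{I_j}\in J)\ge 0.9 > 0.65$. A standard binomial/Hoeffding bound then gives that, except on an event of probability $\le \exp(-c N\varepsilon^2)$ (using $n\asymp N\varepsilon^2$), strictly more than a $0.65$-fraction of the blocks land in $J$ and strictly more than a $0.35$-fraction land below $\sup J$ and above $\inf J$ respectively. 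This forces all empirical quantiles $\hat Q(\beta)$ for $\beta\in[0.35,0.65]$ — in particular $\hat Q(\beta_1)$ and $\hat Q(\beta_2)$ — to lie inside $J$. The final step handles $\widehat{S}_N$ itself: by its definition as $\min\{\max\{\widehat{T}_N,\hat Q(\beta_1)\},\hat Q(\beta_2)\}$, we have the deterministic sandwiching $\hat Q(\beta_1)\wedge \hat Q(\beta_2)\le \widehat{S}_N$ when $\widehat{T}_N$ is large and $\widehat{S}_N\le \hat Q(\beta_2)$ always, and symmetrically — more carefully, $\widehat{S}_N$ always lies between $\min\{\hat Q(\beta_1),\widehat{T}_N\}$-type bounds, and since both $\hat Q(\beta_1),\hat Q(\beta_2)\in J$, the clipping guarantees $\widehat{S}_N\in J$ regardless of where the (possibly wild) full-sample plug-in $\widehat{T}_N$ sits. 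That yields exactly the claimed bound.

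\textbf{Main obstacle.}
The hard part will be the sharp one-block bias estimate. Decomposing $\widehat{T}_{I_j} - \ES_\alpha(X)$ via the Hoeffding/L-statistic representation $\widehat{T}_{I_j} = \frac1\alpha\int_{1-\alpha}^1 \widehat{F}_{I_j}^{-1}(u)\,du$, the bias comes from the nonlinearity of the map $F\mapsto F^{-1}$ near the level $1-\alpha$, and making its size $\lesssim L/m$ requires the Lipschitz control of $u\mapsto \VAR_u(X)$ precisely on a neighborhood like $[1-2\alpha, 1-\tfrac12\alpha]$ — this is why that exact interval appears in the hypothesis. One must ensure that the empirical quantile $\widehat F_{I_j}^{-1}(u)$ for $u$ near $1-\alpha$ stays, with overwhelming probability within the block, in the region where the Lipschitz bound applies; the $\varepsilon<c_0\sqrt\alpha$ condition guarantees $m\gg 1/\alpha$ so that the order statistic indices concentrate around $(1-\alpha)m$ with fluctuations $O(\sqrt{\alpha m})$ that stay inside the window. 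A clean way to organize this is to split $\widehat{T}_{I_j}$ into a "central" part where Lipschitz continuity linearizes the inverse (contributing the $\sigma^2/m$ variance and $L/m$ bias) and a "tail" part $\frac1\alpha\int \widehat F_{I_j}^{-1}(u)\mathbf{1}\{u\ge 1-\tfrac12\alpha\}$, for which one uses only integrability of $X_+^2$ (finiteness of $\sigma^2_{\ES_\alpha(X)}$) to get an $L^2$ bound, no tail decay needed. I would also want a matching lower-quantile control for the left side of $J$; since ES only looks at the upper tail, the lower side is actually easier (the integrand $\widehat F_{I_j}^{-1}(u)$ for $u\in[1-\alpha,1)$ is bounded below by the $(1-\alpha)$-quantile region, which is Lipschitz-controlled), so the genuinely new work is all on the upper side and the bias.
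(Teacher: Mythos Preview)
Your high-level architecture --- per-block estimate with constant success probability, then Hoeffding/binomial concentration across the $n\asymp N\varepsilon^2$ blocks, then deterministic sandwiching of $\widehat{S}_N$ between $\hat Q(\beta_1)$ and $\hat Q(\beta_2)$ --- is exactly the paper's. The difference is entirely in how the per-block bound
\[
\P\bigl(|\widehat{T}_{I_j}-\ES_\alpha(X)|>\varepsilon\sigma_{\ES_\alpha}+cL\varepsilon^2\bigr)\le \tfrac{1}{10}
\]
is obtained.

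The paper does \emph{not} split into bias and variance of $\widehat{T}_{I_j}$, and it works in CDF space rather than quantile space. It uses the distortion representation $\ES_\alpha(Y)=\int_\mathbb{R}\psi(F_Y(t))\,dt$ (up to an additive constant) with $\psi(x)=\min\{(1-x)/\alpha,1\}$, and Taylor-expands $\psi$ at $F(t)$. This gives $\widehat{T}_{I_j}-\ES_\alpha(X)=\mathcal{L}_j+\mathcal{E}_j$ where the linearised piece $\mathcal{L}_j=\int\psi'(F(t))(\widehat F_{I_j}(t)-F(t))\,dt$ is mean zero and has second moment \emph{exactly} $\sigma^2_{\ES_\alpha}/m$ (this is a one-line Fubini computation that recovers the very definition of $\sigma^2_{\ES_\alpha}$), so Chebyshev handles it. The remainder $\mathcal{E}_j$ is nonzero only where $F(t)$ and $\widehat F_{I_j}(t)$ straddle the kink level $1-\alpha$; a two-point Bernstein bound on $\widehat F_{I_j}$ at $t_0=F^{-1}(1-\alpha-\gamma)$ and $t_1=F^{-1}(1-\alpha+\gamma)$ with $\gamma\asymp\varepsilon\sqrt\alpha$ localises this to $[t_0,t_1]$, of length $\le 2L\gamma$ by the Lipschitz hypothesis, giving $|\mathcal{E}_j|\lesssim L\varepsilon^2$ with probability $\ge 1-\tfrac{1}{110}$.

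Your bias--variance route is different and, as stated, has a soft spot. The claim $\V(\widehat{T}_{I_j})\approx\sigma^2_{\ES_\alpha}/m$ is the \emph{asymptotic} variance; for finite $m$ and heavy-tailed $X$ the nonlinear part contributes additional variance that is not obviously $O(\sigma^2_{\ES_\alpha}/m)$ with an absolute constant --- and indeed the paper never bounds $\V(\mathcal{E}_j)$ or $\E|\mathcal{E}_j|$, only $\P(|\mathcal{E}_j|>\cdot)$. Likewise, your bias bound $|\E[\widehat{T}_{I_j}]-\ES_\alpha(X)|\lesssim L/m$ would require controlling $\E[\mathcal{E}_j]$ on the low-probability bad event, which you do not address. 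Your ``central/tail'' split in quantile space faces the same issue: an $L^2$ bound on $\frac{1}{\alpha}\int_{1-\alpha/2}^1(\widehat F_{I_j}^{-1}(u)-F^{-1}(u))\,du$ with the right constant is essentially rederiving the CLT variance for finite $m$. The paper's CDF-space linearisation delivers the exact identity $\E[\mathcal{L}_j^2]=\sigma^2_{\ES_\alpha}/m$ for free and confines all the Lipschitz work to a compact window around the kink, bypassing any moment control of the nonlinear part.
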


In particular, the estimator $\widehat{S}_N$ achieves the optimal statistical behavior.
Indeed, we have previously identified that no estimator can have a better performance regarding confidence intervals as those  given in \eqref{eq:best.possible} (and, crucially, standard estimators such as the plug-in estimators do not achieves this performance).
By Theorem \ref{thm:ES},  the estimator $\widehat{S}_N$ achieves those bounds  under minimal and necessary assumptions on $X$, perhaps up to the specific choice of the multiplicative constants $c_0,\dots,c_3$.

\begin{remark}
\label{rem:constants}
	Our proof shows that one may choose $c_0=\frac{1}{6}$, $c_1=140$, $c_2=10$, and $c_3=\frac{1}{700}$ though we believe that it is possible to obtain considerably better estimates for these constants.
	Moreover, in \mbox{Definition \ref{def:proposed}}, the constant $11$ when setting $m:=\lceil \frac{11}{\varepsilon^2} \rceil$ should be regarded suggestively: a different constant would change the statement of Theorem \ref{thm:ES} only via the potential need for different multiplicative constants. 
\end{remark}

\begin{remark}
\label{rem:bias}
Let us shortly discuss the intuition behind the proposed estimator.
While Theorem \ref{thm:ES} is valid for all choices of $0.35\leq \beta_1\leq \beta_2\leq 0.65$, our numerical experiments reveal that these values significantly affect the empirical \emph{bias} of $\widehat{S}_N$. 
This happens because the block estimators $\widehat{T}_{I_j}$ tend to be right-skewed. 
Thus, setting e.g.\ $\beta_1=\beta_2=0.5$ (resulting in $\widehat{S}_N$ being the median of blocks estimator, see Section \ref{subsec:relatedliterature} for a discussion) often leads to a negative bias. 
Conversely, choosing $\beta_1=\beta_2>0.5$ would induce a positive bias in situations where the central limit approximation for $\widehat{T}_{I_j}$ takes its effects (i.e.\ when $m$ is large).
Thus, the optimal choice of $\beta_1,\beta_2$ should depend on whether one believes the underlying distribution to be skewed and on the size of $m$. 
We present some numerical examples in Section \ref{sec:numerics} which showcase the impact of the choice of $\beta_1$ and $\beta_2$.

For choices $\beta_1<0.5<\beta_2$ the estimator $\widehat{S}_N$ can be interpreted in terms of adaptive truncation: if the data $(X_i)_{i=1}^N$ has no outliers, the plug-in estimator $\widehat{T}_N$ is likely to fall between $\widehat{Q}(\beta_1)$ and $\widehat{Q}(\beta_2)$; thus $\widehat{S}_N=\widehat{T}_N$ and since the latter averages all data points, it should have a small bias.
Conversely, if $\widehat{T}_N$ lies outside that range, one suspects the presence of outliers and our estimator truncates $\widehat{T}_N$ at $\widehat{Q}(\beta_1)$ and $\widehat{Q}(\beta_2)$.
In this regard, we can also understand the importance of the block size $m$: If $m$ is smaller, the variance of the estimates among each block is larger, but there are more blocks to obtain high confidence in the obtained quantiles. This means that with smaller $m$, the obtained truncation interval $[\widehat{Q}(\beta_1), \widehat{Q}(\beta_2)]$ is usually larger and truncation gives a high level of confidence against larger deviations $\varepsilon$. Increasing the size $m$ makes the truncation interval smaller and thus can have an effect already at smaller levels of deviations $\varepsilon$, although this is then based on a smaller number of blocks and thus confidence in these intervals lowers compared to smaller $m$.
\end{remark}

\begin{figure}
	\begin{minipage}{0.333\textwidth}
		\includegraphics[width=1\textwidth]{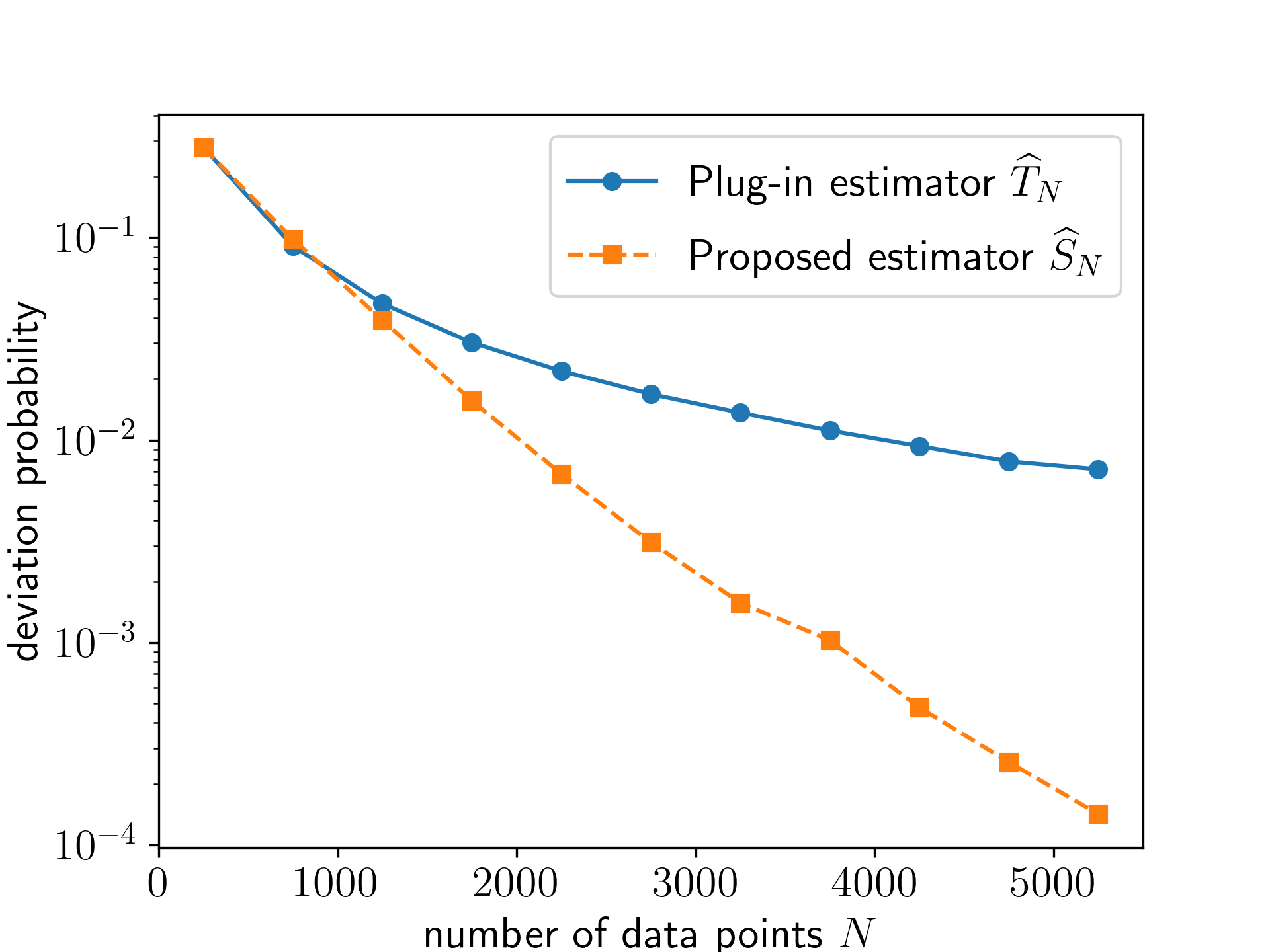}
	\end{minipage}%
	\begin{minipage}{0.333\textwidth}
		\includegraphics[width=1\textwidth]{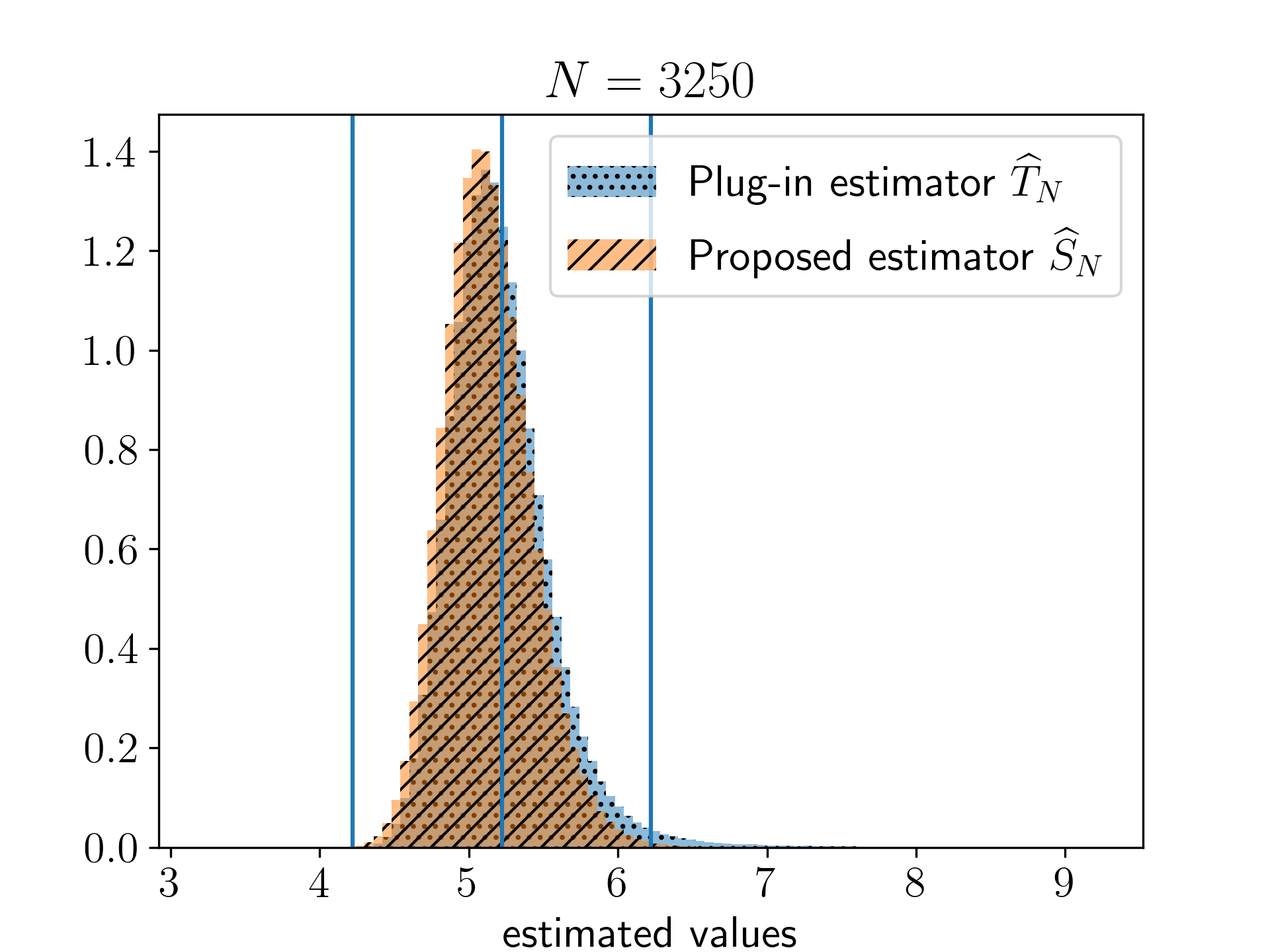}
	\end{minipage}%
	\begin{minipage}{0.333\textwidth}
		\includegraphics[width=1\textwidth]{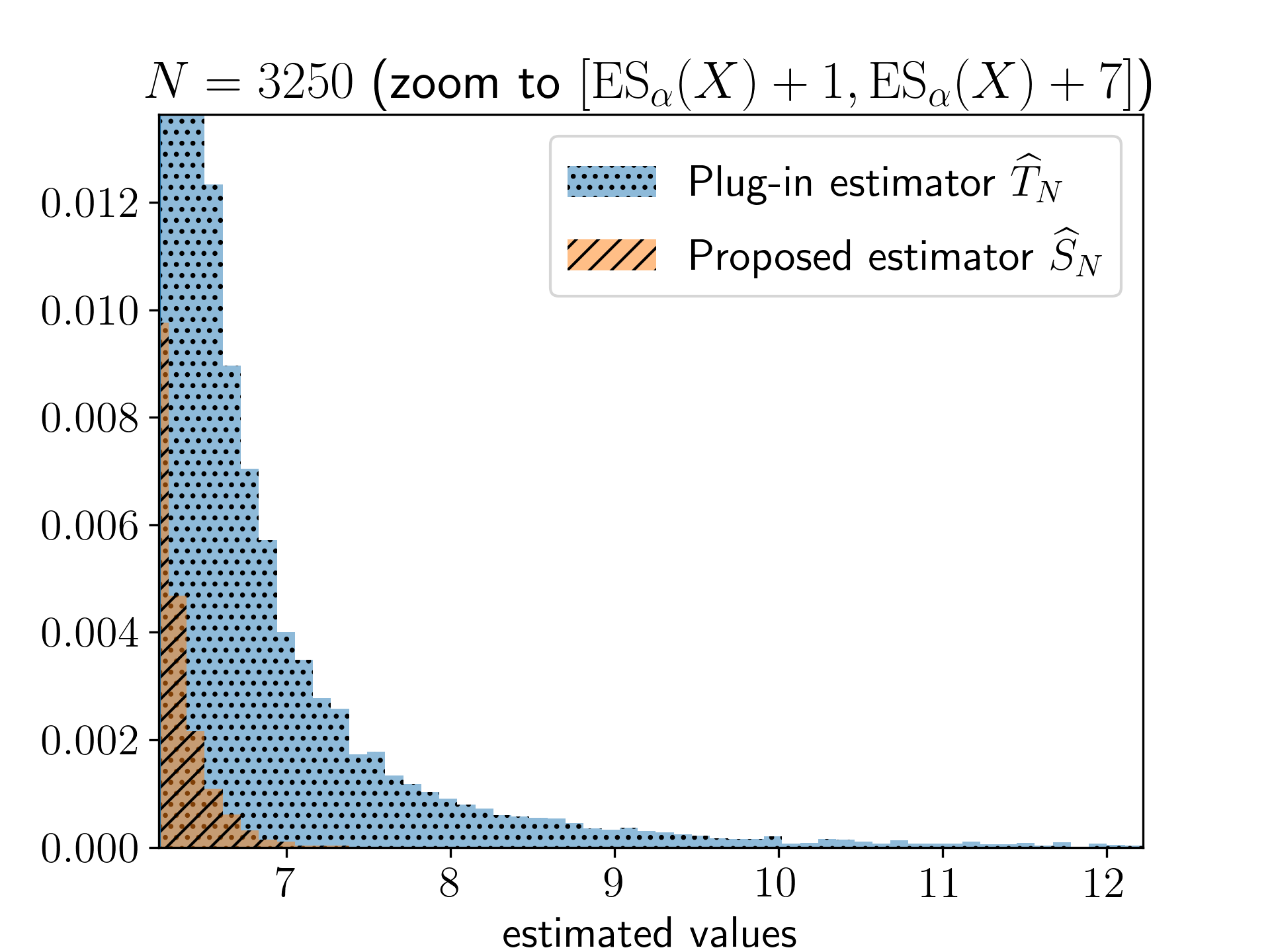}
	\end{minipage}
	\caption{The left image depicts $\mathbb{P}(|\widehat{T}_N - \ES_{\alpha}(X)| \geq 1)$ (plug-in estimator) and $\mathbb{P}(|\widehat{S}_N - \ES_{\alpha}(X)| \geq 1)$ (the proposed estimator uses $\beta_1 = 0.5, \beta_2 = 0.6$ and $m=250$) for varying values of $N$, showcasing the exponential rate for $\widehat{S}_N$ and the lack thereof for $\widehat{T}_N$. Here $\alpha=0.1$, $X$ is Pareto distributed with $\lambda=2.2$, and the deviation probabilities are estimated using $10^6$ many simulations. The middle image showcases the histogram of estimated values across simulations for $N=3250$, with the blue vertical lines depicting the values $\ES_{\alpha}(X)-1, \ES_{\alpha}(X), \ES_{\alpha}(X)+1$. The right hand image shows the tail behavior of the estimators. Among the $10^6$ simulations for $N = 3250$, there are 13637 errors larger than one for $\widehat{T}_N$ (largest realized value 98.98), and 1568 for $\widehat{S}_N$ (largest realized value 7.83).}\label{fig:comparisonestimator}
\end{figure}

A simple illustration of the performance of our estimator $\widehat{S}_N$ compared with the  plug-in estimator $\widehat{T}_N$ is provided in Figure \ref{fig:comparisonestimator}. 
Beyond the rate of estimation, another interesting aspect which can be seen from the illustrated example is that the proposed estimator never fails drastically.
Indeed, while the plug-in estimator $\widehat{T}_N$ sometimes yields extreme values (e.g.\ a larger than 1000\% deviation from the true value even with $N=3250$), the proposed estimator $\widehat{S}_N$ is much more robust in this regard, never deviating more than 45\% from the true value across all $10^6$ many simulations. 

Figure \ref{fig:comparisonestimator} further underlines the intuition given in Remark \ref{rem:bias}: whenever the values of the plug-in estimator $\widehat{T}_N$ lie within the typical regions, then $\widehat{S}_N\approx \widehat{T}_N$ (middle image).
Conversely, the right image showcases that in the presence of outliers in the data, $\widehat{S}_N$ outperforms $\widehat{T}_N$.

\vspace{0.5em}
For completeness, let us note that in the regime  when $\varepsilon$ is small, the `second order' term $\mathcal{O}(L\varepsilon^2)$ may be neglected and the estimate in Theorem \ref{thm:ES} is of the same order as the (in general too) optimistic guess from the central limit theorem in \eqref{eq:exp.rate}.
To state the result formally, let $c_0,\dots,c_3$ be the constants appearing in Theorem \ref{thm:ES}.

\begin{corollary}\label{cor:regimeswitch}
	In the setting  of Theorem \ref{thm:ES}, if in addition  $\varepsilon\leq \frac{ \sigma_{\ES_\alpha(X)}}{c_2L}$, then
	\begin{align}
	\label{eq:intro.small.eps}
	 \P\left( \left| \widehat{S}_N-\mathrm{ES}_\alpha(X) \right| \geq 2  \varepsilon\sigma_{\ES_\alpha(X)}    \right)
	 \leq \exp\left(-c_3 N\varepsilon^2\right).
	\end{align}
\end{corollary}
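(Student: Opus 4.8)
The plan is to derive Corollary \ref{cor:regimeswitch} directly from Theorem \ref{thm:ES} by a purely algebraic manipulation of the additional hypothesis $\varepsilon \leq \frac{\sigma_{\ES_\alpha(X)}}{c_2 L}$. First I would observe that this inequality is equivalent to $c_2 L \varepsilon \leq \sigma_{\ES_\alpha(X)}$, hence, multiplying both sides by $\varepsilon > 0$, to $c_2 L \varepsilon^2 \leq \varepsilon \sigma_{\ES_\alpha(X)}$. Consequently the deviation threshold appearing inside the probability in Theorem \ref{thm:ES} satisfies
\[
\varepsilon \sigma_{\ES_\alpha(X)} + c_2 L \varepsilon^2 \leq \varepsilon \sigma_{\ES_\alpha(X)} + \varepsilon \sigma_{\ES_\alpha(X)} = 2\varepsilon\sigma_{\ES_\alpha(X)}.
\]

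Next I would use monotonicity of the probability measure: since the event $\{|\widehat{S}_N - \ES_\alpha(X)| \geq 2\varepsilon\sigma_{\ES_\alpha(X)}\}$ is contained in the event $\{|\widehat{S}_N - \ES_\alpha(X)| \geq \varepsilon\sigma_{\ES_\alpha(X)} + c_2 L\varepsilon^2\}$ by the displayed inequality, we have
\[
\P\left( \left| \widehat{S}_N-\mathrm{ES}_\alpha(X) \right| \geq 2\varepsilon\sigma_{\ES_\alpha(X)} \right) \leq \P\left( \left| \widehat{S}_N-\mathrm{ES}_\alpha(X) \right| \geq \varepsilon\sigma_{\ES_\alpha(X)} + c_2 L\varepsilon^2 \right).
\]
Applying Theorem \ref{thm:ES} — whose hypotheses ($\sigma^2_{\ES_\alpha(X)}$ finite, $u \mapsto \VAR_u(X)$ Lipschitz on the relevant interval, $\varepsilon < c_0\sqrt\alpha$, $N \geq c_1\varepsilon^{-2}$) are all inherited from "the setting of Theorem \ref{thm:ES}" — bounds the right-hand side by $\exp(-c_3 N\varepsilon^2)$, which yields \eqref{eq:intro.small.eps}.

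There is essentially no obstacle here; the only point requiring a moment's care is that the $\sigma_{\ES_\alpha(X)}$ appearing in the extra hypothesis is the square root of the (finite, by assumption) quantity $\sigma^2_{\ES_\alpha(X)}$, so the manipulation $c_2 L \varepsilon \leq \sigma_{\ES_\alpha(X)} \Rightarrow c_2 L \varepsilon^2 \leq \varepsilon \sigma_{\ES_\alpha(X)}$ is valid multiplication by the positive scalar $\varepsilon$; if $\sigma_{\ES_\alpha(X)} = 0$ the hypothesis forces $L = 0$ (or $\varepsilon = 0$, which is excluded), and the statement degenerates consistently. Thus the corollary is a one-line consequence of Theorem \ref{thm:ES}, and I would present it as such.
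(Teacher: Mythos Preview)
Your argument is correct and is exactly the (implicit) reasoning the paper intends: the corollary is stated without proof precisely because the additional hypothesis $\varepsilon \leq \sigma_{\ES_\alpha(X)}/(c_2 L)$ immediately gives $c_2 L \varepsilon^2 \leq \varepsilon\sigma_{\ES_\alpha(X)}$, whence the threshold in Theorem~\ref{thm:ES} is at most $2\varepsilon\sigma_{\ES_\alpha(X)}$ and monotonicity of $\P$ finishes. Your remark on the degenerate case $\sigma_{\ES_\alpha(X)}=0$ is slightly garbled (the hypothesis then forces $\varepsilon\leq 0$, so the statement is vacuous rather than forcing $L=0$), but this does not affect the proof.
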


Combining the conditions on $\varepsilon$ from Theorem \ref{thm:ES} and Corollary \ref{cor:regimeswitch} implies that  $\varepsilon\lesssim \min\{ \sqrt{\alpha},\frac{ \sigma_{\ES_\alpha(X)}}{L} \}$.
While the value of $\frac{ \sigma_{\ES_\alpha(X)}}{L} $ heavily depends on the underlying distribution of $X$ and can be significantly smaller than $\sqrt{\alpha}$ in general, we shall show in Section \ref{sec:Pareto} that if $X$ has a Pareto distribution, then $\frac{ \sigma_{\ES_\alpha(X)}}{L}\lesssim \sqrt{\alpha}$.
Thus, for the Pareto distribution, the additional restriction on $\varepsilon$ in Corollary \ref{cor:regimeswitch} is (essentially) satisfied automatically.

In the reminder of this paragraph, we will consider random variables $X$ that have a non-negative density $f_X$.
In that case,  
\[D(\alpha):= \frac{d}{du} \VAR_u(X) |_{u=1-\alpha}
= \frac{1}{f_X(F_X^{-1}(1-\alpha))} \]
is the local Lipschitz constant of $u\mapsto \VAR_u(X)$ at $u=1-\alpha$.
In Table \ref{table:parameters}, we numerically compute $D(\alpha)$ and $\sigma_{\ES_\alpha}$ for various choices of $\alpha$ and distributions. 

\begin{table}
	\label{tableComparison}
	\centering
	\begin{tabular}{@{}l c l c c c c c c c c c}\toprule
        \multirow{2}{*}{Distribution} & & \multirow{2}{*}{Parameters} & & \multicolumn{2}{c}{$\alpha=0.1$} & & \multicolumn{2}{c}{$\alpha=0.05$}  & &  \multicolumn{2}{c}{$\alpha=0.01$} \\
        & & & & $D(\alpha)$ & $\sigma_{\ES_{\alpha}}$ & & $D(\alpha)$ & $\sigma_{\ES_{\alpha}}$ & & $D(\alpha)$ & $\sigma_{\ES_{\alpha}}$ \\ \midrule
        Normal & & $\mu=0, \sigma=1$ & & 5.70 & 1.93 & & 9.70 & 2.47 & & 37.5 & 4.59 \\
        Student-t & & $\nu=5$ & & 7.79 & 3.88 & & 15.7 & 5.99 & & 91.7 & 17.2\\
        Logistic & & $\alpha=0, \beta=1$ & & 11.1 & 4.53 & & 21.1 & 6.36 & & 101 & 14.2 \\
        Lognormal & & $\mu=0, \sigma=1$ & & 20.5 & 14.9 & & 50.2 & 25.2 & & 384 & 82.0 \\
        Pareto & & $x_m=1, \lambda=2$ & & 15.8 & $\infty$ & & 44.7 & $\infty$ & & 500 & $\infty$ \\
        Pareto & & $x_m=1, \lambda=4$ & & 4.45 & 3.18 & & 10.6 & 5.39 & & 79.1 & 18.2 \\
        Exponential & & $\lambda=1$ & & 10 & 4.35 & & 20 & 6.24 & & 100 & 14.1 \\\bottomrule
	\end{tabular}
	\caption{The local Lipschitz constant and the standard deviation $\sigma_{\ES_\alpha}$ for different values of $\alpha$ and different distributions. The values are reported up to three significant digits and, for $\sigma_{\ES_\alpha}$, based on numerical integration.}
    \label{table:parameters}
\end{table}

Note that the Lipschitz constant $L = L(\alpha)$ appearing in Theorem~\ref{thm:ES} is defined as 
\[
L = \max_{\beta \in [\alpha/2,\, 2\alpha]} D(\beta).
\]
The interval \([\alpha/2, 2\alpha]\) was chosen for analytical convenience in order to obtain clean bounds. 
In practice, however, the value of \(L\) is often much closer to \(D(\alpha)\). 
For this reason, we report \(D(\alpha)\) instead of \(L(\alpha)\) in Table \ref{table:parameters}.

\subsection{Adversarial robustness}

In addition to its favourable statistical behavior, our estimator $\widehat{S}_N$ possesses another advantage: it is \emph{adversarially robust}. 
The latter means that even if an adversary can modify a certain fraction of the data $X_1,\dots, X_N$ arbitrarily (even under knowledge of the real data and knowledge of the estimation procedure), one still has guarantees that the estimator $\widehat{S}_N$ performs well. 
It is worth noting that this is in stark contrast to the behavior of the classical plug-in estimator, where (regardless of the distribution of $X$) the modification of even a \emph{single} sample $X_i$ can completely distort the resulting estimate.

\begin{theorem}\label{thm:adversarial}
	Let the assumptions in Theorem \ref{thm:ES} hold.
	Then there are absolute constants $C_0,C_1,C_2$ such that even if at most $C_0 N\varepsilon^2$ of the $N$ samples are maliciously modified,  
	\[ \P\left( \left| \widehat{S}_N-\mathrm{ES}_\alpha(X) \right| \geq   \varepsilon\sigma_{\ES_\alpha(X)}  + C_1 L \varepsilon^2   \right)
	\leq \exp\left(-C_2 N\varepsilon^2\right).\]
\end{theorem}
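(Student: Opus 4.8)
The plan is to reduce Theorem \ref{thm:adversarial} to Theorem \ref{thm:ES} by exploiting the robustness that is already built into the construction of $\widehat S_N$. First I would observe that $\widehat S_N$ depends on the data only through the $n=\lfloor N/m\rfloor$ block estimators $\widehat T_{I_1},\dots,\widehat T_{I_n}$ together with the global plug-in estimator $\widehat T_N$, and that the final clipping step forces $\widehat S_N\in[\widehat Q(\beta_1),\widehat Q(\beta_2)]\subseteq[\widehat T_{(1)},\widehat T_{(n)}]$. The key structural fact is \emph{localisation}: if an adversary corrupts $k$ of the samples, then at most $k$ of the blocks $I_j$ contain a corrupted sample (since the $I_j$ are disjoint), so at least $n-k$ of the block estimators $\widehat T_{I_j}$ retain their original, uncorrupted value. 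I would formalise this by writing $\widehat T^{\mathrm{clean}}_{I_j}$ for the block estimators computed from the original data and noting that $\widehat T_{I_j}=\widehat T^{\mathrm{clean}}_{I_j}$ for all $j$ outside some set $B$ with $|B|\le k$.

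The second step is to show that an empirical quantile $\widehat Q(\beta)$ at level $\beta\in[0.35,0.65]$ is insensitive to moving $k$ of the $n$ points, provided $k$ is a small enough fraction of $n$. Concretely, the $\beta$-quantile of the corrupted sample $(\widehat T_{(j)})$ lies between the $(\beta-k/n)$- and $(\beta+k/n)$-quantiles of the clean sample $(\widehat T^{\mathrm{clean}}_{(j)})$. Hence, as long as $k/n\le \gamma$ for a suitable absolute $\gamma$ (say $\gamma=0.05$, keeping $[\beta_1-\gamma,\beta_2+\gamma]\subseteq[0.3,0.7]$), the corrupted $\widehat Q(\beta_i)$ is sandwiched between clean quantiles at levels still bounded away from $0$ and $1$. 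Translating $k/n\le\gamma$ back to a budget on $N$: since $n\ge N/m-1$ and $m=\lceil 11/\varepsilon^2\rceil\le 12/\varepsilon^2$ (for $\varepsilon$ in the admissible range), we get $n\gtrsim N\varepsilon^2$, so the condition $k\le\gamma n$ is implied by $k\le C_0 N\varepsilon^2$ for an appropriate absolute $C_0$. This is where the constant $C_0$ in the statement comes from.

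The third step is a high-probability comparison, under the clean data, of quantiles at slightly shifted levels. From the proof of Theorem \ref{thm:ES} one extracts (this should be essentially already contained there) that with probability at least $1-\exp(-c_3N\varepsilon^2)$ all of the empirical quantiles $\widehat Q^{\mathrm{clean}}(\beta)$ for $\beta\in[0.3,0.7]$ lie in the window $[\ES_\alpha(X)-\tfrac12(\varepsilon\sigma_{\ES_\alpha(X)}+c_2L\varepsilon^2),\,\ES_\alpha(X)+\tfrac12(\varepsilon\sigma_{\ES_\alpha(X)}+c_2L\varepsilon^2)]$ up to adjusting the absolute constants — indeed, the argument for $\widehat S_N$ already shows that a whole band of quantile levels concentrates around $\ES_\alpha(X)$ at the claimed scale, because each $\widehat T_{I_j}$ has bias $O(L\varepsilon^2)$ and fluctuation controlled through $\sigma_{\ES_\alpha(X)}$, and the median-of-blocks/quantile-of-blocks mechanism is what gives the sub-Gaussian tail. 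On this event, by Step 2 the corrupted $\widehat Q(\beta_1)$ and $\widehat Q(\beta_2)$ still lie in the same window (with constants inflated by a bounded factor). Since $\widehat S_N$ is clipped to $[\widehat Q(\beta_1),\widehat Q(\beta_2)]$ regardless of how corrupted $\widehat T_N$ is, it follows that $|\widehat S_N-\ES_\alpha(X)|\le \varepsilon\sigma_{\ES_\alpha(X)}+C_1L\varepsilon^2$ on that event, which has probability at least $1-\exp(-C_2N\varepsilon^2)$. Choosing $C_1,C_2$ to absorb the bounded constant inflations finishes the proof.

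The main obstacle I anticipate is Step 3: making precise, and with the right sub-Gaussian tail, the claim that a \emph{band} of clean block-quantiles $\widehat Q^{\mathrm{clean}}(\beta)$, $\beta\in[0.3,0.7]$, concentrates around $\ES_\alpha(X)$ at scale $\varepsilon\sigma_{\ES_\alpha(X)}+O(L\varepsilon^2)$ — as opposed to just the single clipped quantity $\widehat S_N$ that Theorem \ref{thm:ES} literally controls. Concretely one needs two-sided deviation bounds: that at least, say, a $0.25$-fraction of the $\widehat T^{\mathrm{clean}}_{I_j}$ exceed $\ES_\alpha(X)-(\varepsilon\sigma+O(L\varepsilon^2))$ with the stated failure probability, and symmetrically that at least a $0.25$-fraction fall below $\ES_\alpha(X)+(\varepsilon\sigma+O(L\varepsilon^2))$. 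Each such statement is a binomial concentration (Hoeffding) applied to the events $\{\widehat T^{\mathrm{clean}}_{I_j}\le \text{threshold}\}$, whose individual probabilities are bounded away from $0$ and $1$ by a Chebyshev/one-sided-bias estimate on a single block using $m\asymp\varepsilon^{-2}$ samples — and this is exactly the computation the authors must already perform to prove Theorem \ref{thm:ES}. So the real work is bookkeeping: verifying that the per-block probability bounds hold with margins uniform over $\beta\in[0.3,0.7]$, that the adversarial budget $C_0N\varepsilon^2$ leaves enough clean blocks, and that all the absolute constants can be chosen consistently; the conceptual content is the disjointness/localisation observation plus the quantile-robustness sandwich, both of which are short.
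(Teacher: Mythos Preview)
Your proposal is correct and follows essentially the same route as the paper: the localisation observation (at most $k$ corrupted samples touch at most $k$ blocks), plus the fact that order statistics of the $\widehat T_{I_j}$ at central levels are insensitive to moving a small fraction of the points, combined with the block-level concentration already established for Theorem~\ref{thm:ES}. Your anticipated ``obstacle'' in Step~3 is not one: the sandwich \eqref{eq:sandwich} in the proof of Theorem~\ref{thm:ES} already shows that \emph{all} clean order statistics $\widehat T^{\mathrm{clean}}_{(k)}$ with $0.3n\le k\le 0.7n$ lie in the desired window (not just $\widehat S_N$), so the paper simply notes that corruption moves at most $n/10$ blocks from ``good'' to ``bad'', leaving still $\ge 0.7n$ good blocks and hence the same sandwich for the corrupted order statistics at levels in $[0.3,0.7]$.
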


We show that  $C_0:=\frac{1}{140}$, $C_1=10$, and $C_2:=\frac{1}{2800}$ are valid choices, but believe that the optimal values of these constants are considerably better. 

Finally, let us note that if a sufficiently large number of data points are corrupted (say, all of them), then clearly there  can be no procedure that estimates the expected shortfall in a satisfactory manner. 
In our context, denoting by $K$ the number of corrupted data points, we require that $K \leq C_0N\varepsilon^2$; or, in other words, if we expect that $K$ data points are corrupted, the lower bound on the level of accuracy of the estimator is $\varepsilon \geq \sqrt{ \frac{K}{C_0N}}$.
As it happens, this dependence of the error on the number of corrupted data turns out to be  minimax-optimal in the class of distributions satisfying the assumption of Theorem \ref{thm:adversarial}:
We show in Proposition \ref{prop:adversiarl} that  no estimator can have  a better error rate than $ \mathcal{O}(\sigma_{\ES_\alpha(X)} \sqrt{K/N})$, where $K$ is the number of corrupted data points.

\subsection{Related literature}\label{subsec:relatedliterature}
The estimation of Expected Shortfall is a classical and widely studied topic, motivated both by regulatory requirements in financial practice and by long-standing theoretical interest. 
A substantial literature has investigated non-parametric and empirical approaches to ES estimation in the i.i.d.\ setting (see, e.g., \cite{acerbi2007coherent,brazauskas2008estimating,cont2010robustness,hill2015expected,jones2003empirical,yamai2002comparative}). 
In parallel, numerous works address more realistic data-generating mechanisms—such as stochastic volatility, time-series dependence, and heavy-tail behavior—including dynamic semiparametric forecasting models (see, e.g., \cite{patton2019dynamic}), extreme-value based estimation (see, e.g., \cite{drees2004maximum,kumar2017}), and non-parametric methods for dependent data (see, e.g., \cite{moutanabbir2024}). 
Further important developments concern the assessment and evaluation of Expected Shortfall estimates through backtesting (see, e.g., \cite{acerbi2014back,berens2018,FisslerZiegel2016,moldenhauer2017backtesting}), as well as quantification of estimation error and finite-sample uncertainty (see, e.g., \cite{aichele2025coherent,bartl2023nonasymptotic,lonnbark2013role,pflug2010asymptotic,pitera2018}).
We further refer to \cite{kamronnaher2024estimating,nadarajah2014estimation} for surveys and to \cite{mcneil2015quantitative,wuthrich2023statistical} for comprehensive monographs.

Moving on to literature more closely related to this paper, note that
for the particular choice of $\beta_1=\beta_2=1/2$, the estimator $\widehat{S}_N$ is the median of the block estimators $(\widehat{T}_{I_j})_{j=1}^n$, which we call the median of blocks estimator for short.
In the context of estimating the mean of a random variable, this median of blocks estimator constitutes the analogue to the so-called median-of-means estimator, whose optimal statistical performance is well understood (see, e.g.\ \cite{alon1996space,nemirovskij1983problem,jerrum1986random}). Contrary to mean estimation, properties for estimating the expected shortfall have only been obtained more recently.
For instance, proving a central limit theorem for the empirical plug-in estimator $\widehat{T}_N$ was obtained in \cite{brazauskas2008estimating,pflug2010asymptotic}, and, to the best of the authors' knowledge, no statistically optimal finite sample guarantees for any expected shortfall estimator are available in the literature.

Another topic this paper relates to is that of stability and (adversarial) robustness of estimators.  Adversarial robustness of data-driven systems (see, e.g., \cite{carlini2019evaluating, goodfellow2018making,art2018}) can be regarded as a strong form of robustness against outliers (see, e.g., \cite{kim2012robust,rousseeuw2011robust, rousseeuw2005robust}), and has recently gathered increasing attention, particularly in fields adjacent to machine learning. In line with this, Theorem \ref{thm:adversarial} provides statistical guarantees even when a small fraction of the data may be erroneous or even manipulated. We believe such guarantees are particularly helpful if growing amounts of data are used which make strict quality control harder to enforce.
Finally, we emphasize that adversarial robustness can also be seen as a stability property of estimators with respect to perturbations in the data (see, e.g.,~\cite{andrews1986stability,christmann2018total,eckstein2023dimensionality}), particularly since modifying a part of the data corresponds to a perturbation of the data distribution with respect to the total variation norm. In this respect, Theorem \ref{thm:adversarial} may also be interpreted not just as robustness with respect to errors in the data, but also as robustness with respect to rare structural breaks in the data-generating distribution.


\subsection{Extensions}
\label{sec:extensions}

While the focus of this paper is the construction of a statistically optimal nonparametric estimator, we want to emphasize that there are several further aspects (beyond statistical) which need to be taken into account towards a practically useful estimation procedure. Indeed, when estimating expected shortfall risks in practice, $\alpha$ may be very low, for instance $\alpha=0.01$, and one has relatively little non-i.i.d.~data, perhaps $N=250$, which means there are around three data points in the tail. Clearly, in such situations, purely nonparametric estimators must fail. Hence it is a relevant question whether the ideas developed in this paper can lead to new insights in these practical regimes as well. While a full treatment of this question would go far beyond the scope of this paper, we shortly discuss natural directions in the following.

First, going towards non-i.i.d.~data, a natural next step towards more realistic assumptions can be pursued in the form of mixing conditions (see, e.g., \cite{chen2008nonparametric,scaillet2004nonparametric}). We believe that applying the proposed estimator (or slight variations thereof) is still very valuable in such situations and we outline a first approach in Section \ref{sec:non.iid}.

Further, in relation to practical tools employed in finance, it is worth mentioning that the proposed estimator in Definition \ref{def:proposed} and the median of blocks estimator can be understood in terms of resampling methods, particularly non-parametric bootstrapping (cf.~\cite{horowitz2019bootstrap}). Indeed, the splitting method for the intervals in Definition \ref{def:proposed} can be regarded as resampling $n$ many samples of size $m$ (not independently as usually done in bootstrapping, but non-overlapping). This resampling perspective is still in line with recent improvements proposed to the median-of-mean estimator using overlapping intervals (cf.~\cite{minsker2022u, minsker2023efficient}). In this direction, practical improvements to the estimator proposed in Definition \ref{def:proposed} may be possible simply be taking larger non-overlapping intervals (whether theoretically justified or not), which significantly decreases the minimal number of data points necessary to apply the estimator at all.

In a similar spirit, the estimator proposed in Definition \ref{def:proposed} could be combined with alternatives to the empirical plug-in estimator, such as parametric methods (cf.~ \cite[Section~3]{nadarajah2014estimation}) or structural assumptions on the tail behavior of the distribution. This means, one could simply use other (parametric) estimators compared to $\widehat{T}_N$ and/or $\widehat{T}_{I_j}$ within the definition (with potentially different choices of $m$ and $n$).
Such approaches become particularly relevant in situations where only a very small number of observations fall into the tail region relevant for computing $\mathrm{ES}_\alpha(X)$ (e.g., $N=250$ and $\alpha=0.01$).
In these cases, purely non-parametric estimation becomes unreliable, and incorporating parametric information is often necessary to achieve reasonable accuracy.
Developing a principled combination of the methods for improving statistical guarantees proposed in this article with parametric techniques for extreme-tail estimation appears to be a highly relevant direction for future research.

Another important point is that while we numerically test our estimator in Section \ref{sec:numerics} in terms of the statistical learning task described in this paper, estimators in practical settings with non-i.i.d.~(and sometimes even non-stationary) data are typically evaluated using backtesting, which itself may require certain model assumptions in the case of the expected shortfall risk (cf.~\cite[Section 5]{embrechts2014academic}).

Finally, we believe that our methods extend to other risk measures beyond the expected shortfall, in particular to the class of distortion risk measures (see, e.g., \cite{wirch2001distortion}) since the representation of the expected shortfall as a distortion risk measures is a crucial property used for the main results in this paper.

\subsection{Structure of the paper}
The remainder of the paper is structured as follows. 
The following Section \ref{sec:mainproofs} contains the proofs of Theorems \ref{thm:ES} and \ref{thm:adversarial}. 
In Section \ref{sec:onoptimality} we focus on the results regarding optimality of the estimates.
The focus of Section \ref{sec:Pareto} is on Pareto distributions, in which case an explicit lower bound for the plug-in estimator is also established.
In Section \ref{sec:numerics} we present several numerical experiments.

\section{Proofs of Theorem \ref{thm:ES} and Theorem \ref{thm:adversarial}}\label{sec:mainproofs}

For shorthand notation, we write $\sigma^2_{\ES_\alpha}$ instead of $\sigma^2_{\ES_\alpha(X)}$ and $F$ instead of $F_X$.
Moreover, recall that $m=\lceil  \frac{11}{\varepsilon^2} \rceil$, that $\varepsilon\leq \frac{1}{6}\sqrt{\alpha}$  where   $\alpha < \frac{1}{2}$, and that  $n=\lfloor \frac{N}{m}\rfloor$.
In particular, the assumptions that $\varepsilon\leq 1$  and that $N\varepsilon^2\geq 140$ immediately imply that
\begin{align}
\label{eq:estimates.m.n} 
 \frac{12}{\varepsilon^2}\geq m \geq 11,
\quad\text{and}\quad
n \geq \frac{ N\varepsilon^2}{14} \geq 10. 
\end{align}
We shall use these facts many times.

To make the analysis more transparent, we will assume (in this section only) that $F$ is continuous and strictly increasing.
This can be done without loss of generality: Otherwise, instead of $X$ consider $Y^\eta:=X+\eta G$ where $G$ is a standard Gaussian random variable and $\eta>0$ is arbitrarily small. 
Clearly
\[\ES_\alpha(Y^\eta)\to \ES_\alpha(X)\quad\text{and}\quad
\sigma^2_{\ES_\alpha(Y^\eta)}\to \sigma^2_{\ES_\alpha(X)}\]
 as $\eta\to 0$.
In a similar manner, for every arbitrarily small $\gamma>0$, if $\eta$ is sufficiently small, then the samples from $X$ and from $Y^\eta$ are almost the same, i.e., $|Y_i^\eta-X_i|\leq \gamma$ for all $i$ with arbitrarily high probability.
Thus computing the estimator $\widehat{S}_N$ using the samples from $X$ or from $Y$ will results in an error of at most $\gamma$, which is arbitrarily small. 

The proof of Theorem \ref{thm:ES} is based on the following standard fact about Binomial concentration.
For the sake of completeness, we provide the short proof.

\begin{lemma}
\label{lem:binomial.concentration}
	Let $y_0\in\mathbb{R}$, let  $\delta\geq0$, and set  $(Y_j)_{j=1}^n$  to be independent random variables satisfying $	\P( | Y_j- y_0| > \delta ) \leq \frac{1}{10}$ 	for every $j=1,\dots,n$.
	Then,  
	\[\P\left( \left| \{ j\in\{1,\dots,n\} : |  Y_j - y_0 |>\delta \right| \} \geq 0.3 n \right) 
	\leq \exp\left( - \frac{n}{50} \right).   \]
\end{lemma}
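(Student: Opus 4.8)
The plan is to reduce the statement to a standard Chernoff bound for a binomial random variable. Define, for each $j$, the indicator $Z_j:=\eins_{\{|Y_j-y_0|>\delta\}}$. The $Z_j$ are independent Bernoulli random variables with success probability $p_j:=\P(|Y_j-y_0|>\delta)\le \tfrac1{10}$, and the quantity of interest is $S:=\sum_{j=1}^n Z_j=|\{j:|Y_j-y_0|>\delta\}|$. Thus we want to bound $\P(S\ge 0.3n)$. Since increasing any $p_j$ only increases $S$ stochastically, it suffices to treat the worst case $p_j=\tfrac1{10}$, i.e.\ $S\sim \mathrm{Bin}(n,\tfrac1{10})$, so that $\E[S]=\tfrac{n}{10}$; the event $\{S\ge 0.3n\}=\{S\ge 3\E[S]\}$ is a large-deviation-above-the-mean event.

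First I would invoke the multiplicative Chernoff bound: for a sum of independent $[0,1]$-valued random variables with mean $\mu:=\E[S]$ and any $t>1$, one has $\P(S\ge t\mu)\le \bigl(e^{t-1}/t^{t}\bigr)^{\mu}=\exp\bigl(-\mu(t\ln t - t + 1)\bigr)$. Here I would take $t=3$ and $\mu\ge \tfrac{n}{10}$ (using monotonicity in the $p_j$ to replace each $p_j$ by $\tfrac1{10}$, or equivalently dominating $S$ by a $\mathrm{Bin}(n,\tfrac1{10})$ variable). Then $t\ln t-t+1=3\ln 3-2\approx 1.2958$, so $\P(S\ge 0.3n)\le \exp\bigl(-\tfrac{n}{10}\,(3\ln3-2)\bigr)\le \exp(-\tfrac{n}{50})$ since $\tfrac{3\ln3-2}{10}\ge\tfrac1{50}$ (indeed $3\ln 3 - 2 > 0.2$). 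For completeness, since the paper asks for a short self-contained proof, I would include a one-line derivation of the Chernoff bound itself: for $\theta>0$, Markov's inequality gives $\P(S\ge 0.3n)\le e^{-0.3n\theta}\,\E[e^{\theta S}]=e^{-0.3n\theta}\prod_j\E[e^{\theta Z_j}]=e^{-0.3n\theta}\prod_j(1+p_j(e^\theta-1))\le e^{-0.3n\theta}\exp\bigl(\tfrac{n}{10}(e^\theta-1)\bigr)$, using $1+x\le e^x$ and $p_j\le\tfrac1{10}$; optimizing over $\theta$ (taking $\theta=\ln 3$) yields the displayed bound.

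There is no real obstacle here; the only point requiring a touch of care is the reduction to the worst-case probabilities $p_j=\tfrac1{10}$, which is immediate either by the stochastic-domination argument or simply by noting that the bound $\E[e^{\theta Z_j}]=1+p_j(e^\theta-1)\le 1+\tfrac1{10}(e^\theta-1)\le \exp\bigl(\tfrac1{10}(e^\theta-1)\bigr)$ holds termwise for $\theta>0$ whenever $p_j\le\tfrac1{10}$, so the product estimate goes through directly without ever naming a binomial. I would present the argument in the latter form since it is the most economical. Finally I would check the numerics once more: with $\theta=\ln 3$ we get exponent $-0.3n\ln 3+\tfrac{n}{10}\cdot 2=-n(0.3\ln3-0.2)$, and $0.3\ln 3-0.2=0.3\cdot 1.0986\ldots-0.2=0.32958\ldots-0.2=0.12958\ldots\ge \tfrac1{50}=0.02$, which comfortably gives $\P(\cdot)\le\exp(-n/50)$, completing the proof.
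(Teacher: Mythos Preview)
Your proof is correct. The paper takes a slightly different but closely related route: it applies Hoeffding's inequality directly to the sum $\sum_j \eins_{\{|Y_j-y_0|>\delta\}}$, obtaining that with probability at least $1-\exp(-\lambda^2/(2n))$ the sum is below its mean plus $\lambda$, and then sets $\lambda=\tfrac{2}{10}n$; since the mean is at most $\tfrac{n}{10}$, this yields exactly the exponent $-n/50$. Your argument instead uses the multiplicative Chernoff bound (derived from the moment generating function with $\theta=\ln 3$), which is a bit more work but actually produces a sharper exponent of roughly $-0.13n$ before you relax it to $-n/50$. Both approaches avoid any issue with the heterogeneous $p_j$ in the same way, by bounding each term by the worst case $p_j=\tfrac1{10}$. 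The paper's version is more economical as a citation; yours is self-contained and incidentally shows there is slack in the constant.
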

\begin{proof}
	An application of Hoeffding's inequality (see, e.g., \cite[Theorem 2.8]{boucheron2003concentration}) implies that for every $\lambda>0$, with probability at least $ 1-\exp( -\frac{\lambda^2}{2n})$, 
	\[
	\sum_{j=1}^n \eins_{\{|Y_j-y_0|>\delta\}}
	<  \sum_{j=1}^n  \E[\eins_{\{|Y_j-y_0|>\delta\}}] + \lambda.
\]
	To complete the proof, set  $\lambda=\frac{2}{10}n$.
\end{proof}

	Let us recall that 
	\[\widehat{T}_{I_j} 
	= \ES_\alpha\left( \widehat{F}_{I_j} \right)
	= \frac{1}{\alpha} \int_{1-\alpha}^1 \VAR_u \left( \widehat{F}_{I_j} \right) \,du\]
	is the plug-in estimator of the sample restricted to the block $I_j$.
	Moreover, as the blocks $(I_j)_{j=1}^n$ are disjoint, the family $(\widehat{T}_{I_j})_{j=1}^n$ is independent.
	Hence, with Lemma \ref{lem:binomial.concentration} in mind, if we can show that 
	\begin{align}
	\label{eq:need.to.show}
	\P\left(	 \left| \widehat{T}_{I_j} -  \ES_\alpha(X) \right| >  \varepsilon \sigma_{\ES} +  10 L \varepsilon^2\right) 
 \leq \frac{1}{10},
	\end{align}
	then letting $0.3n\leq k \leq \ell\leq 0.7n$, Lemma \ref{lem:binomial.concentration} implies that with exponentially high probability, 
	\begin{equation}\label{eq:sandwich} \widehat{T}_{(k)} , \widehat{T}_{(\ell)} \in \left [ \ES_\alpha(X) - \varepsilon \sigma_{\ES} -  10 L \varepsilon^2 \, , \, \ES_\alpha(X) + \varepsilon \sigma_{\ES} +  10 L \varepsilon^2 \right],\end{equation}
	where we recall that $(\widehat{T}_{(j)})_{j=1}^n$ are the order statistics of $(\widehat{T}_{I_j})_{j=1}^n$.
	In particular, since $\widehat{S}_N$ is sandwiched between $\widehat{T}_{(k)}$ and $\widehat{T}_{(\ell)}$,  this will imply the statement in Theorem \ref{thm:ES}.

	\vspace{0.5em}
	
	In order to prove \eqref{eq:need.to.show},  we make use of the following representation of the expected shortfall as a so-called distortion risk measure (see, e.g., \cite{sereda2010distortion,tsukahara2014estimation}):
	Setting
	\[ \psi\colon[0,1]\to[0,1],\qquad x\mapsto \min\left\{ \frac{1-x}{\alpha} , 1\right\},\]
	then, for every random variable $Y$,
	\begin{align}
	\label{eq:es.distrition}
	\ES_\alpha(Y)
	=\int_{-\infty}^0 \psi(F_Y(t)) - 1\, d t + \int_0^\infty \psi(F_Y(t))\,dt.
	\end{align}
	As our arguments for showing that \eqref{eq:need.to.show} holds crucially rely on  \eqref{eq:es.distrition}, we give the short proof of \eqref{eq:es.distrition} for the convenience of the reader.
	
	\begin{proof}[Proof of \eqref{eq:es.distrition}]
	In a first step, assume that  $F_Y(0) \leq 1-\alpha$, i.e.\ $\P(Y>0)\geq \alpha$.
	Then $\psi(F_Y(t))=1$ for every $t< 0$ and the first integral in \eqref{eq:es.distrition} is equal to zero.
	An application of Fubini's theorem shows that
	\begin{align*}
	\ES_\alpha(Y)
	= \frac{1}{\alpha}\int_{1-\alpha}^1 F_Y^{-1}(u)\,du
	&=\frac{1}{\alpha}\int_{1-\alpha}^1 \int_0^\infty \eins_{[0, F_Y^{-1}(u)]}(t)\,dt\,du \\
	&=\int_0^\infty \frac{1}{\alpha}\int_{1-\alpha}^1  \eins_{(F_Y(t),1]}(u)\,du\,dt\\
	&=\int_0^\infty \frac{1}{\alpha}\min\left\{ 1-F_Y(t), \alpha\right\}\,dt
	=\int_0^\infty  \psi(F_Y(t)) \,dt,
	\end{align*}
	which proves \eqref{eq:es.distrition} in case that $\P(Y>0) \geq \alpha$.
	
	As for the general case, let $y$ be a constant that satisfies $\P(Y+y>0)\geq \alpha$ and set $Y':=Y+y$.
	The proof then follows from a short computation using the fact that  \eqref{eq:es.distrition} holds for $Y'$ (by the first step), that  $\ES_\alpha(Y')=\ES_\alpha(Y)+y$, and that $F_{Y'}(t)=F_Y(t-y)$.
	\end{proof}	
	
	To control the difference between $\widehat{T}_{I_j}$ and $\ES_\alpha(X)$, note that $\psi$ is almost everywhere differentiable with $\psi'=-\frac{1}{\alpha}\eins_{[1-\alpha,1]}$.
	Therefore, a first order Taylor expansion yields a decomposition  of the estimation error
	\begin{align*}
	\widehat{T}_{I_j}- \ES_\alpha(X) 
	=\ES_\alpha \left( \widehat{F}_{I_j} \right)- \ES_\alpha(X) 
	&=\int_\mathbb{R} \psi\left( \widehat{F}_{I_j}(t) \right) -\psi(F(t))\,dt\\
	&= \mathcal{L}_j + \mathcal{E}_j
	\end{align*}
	into a \emph{linearised term}  and an \emph{error term}
	\begin{align*}
	\mathcal{L}_j
	&:=\int_\mathbb{R} \psi'(F(t)) \left(\widehat{F}_{I_j}(t)-F(t) \right) \,dt, \\
	\mathcal{E}_j
	&:= \int_\mathbb{R} \Big( \psi'\big(F(t)+ \xi (\widehat{F}_{I_j}(t)-F(t))\big)-\psi'(F(t)) \Big) \left( \widehat{F}_{I_j}(t)-F(t) \right) \,dt,
	\end{align*}
	where  $\xi\in (0,1)$ is some midpoint (that depends on $t$, $\widehat{F}_{I_j}$ and $F$). 
	
	In the following we analyse the linearised term and the error term  separately.
		
\begin{lemma}
\label{lem:linear}
	We have that 
	\[\P \left(|\mathcal{L}_j|> \varepsilon \sigma_{\ES_\alpha} \right)\leq \frac{1}{11}.\]
\end{lemma}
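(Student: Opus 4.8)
The plan is to bound $\mathcal L_j$ via Chebyshev's inequality, for which I need to compute its mean and variance. Since $\psi'(F(t)) = -\frac{1}{\alpha}\eins_{[1-\alpha,1]}(F(t)) = -\frac1\alpha \eins_{\{t \ge \VAR_{1-\alpha}(X)\}}$ (using that $F$ is continuous and strictly increasing, as assumed in this section), the linearised term simplifies to
\[
\mathcal L_j = -\frac{1}{\alpha}\int_{\VAR_{1-\alpha}(X)}^\infty \big(\widehat F_{I_j}(t) - F(t)\big)\,dt
= -\frac{1}{\alpha m}\sum_{i\in I_j}\int_{\VAR_{1-\alpha}(X)}^\infty \big(\eins_{(-\infty,t]}(X_i) - F(t)\big)\,dt.
\]
Thus $\mathcal L_j$ is an average of $m$ i.i.d.\ centered random variables $Z_i := -\frac1\alpha\int_{\VAR_{1-\alpha}(X)}^\infty (\eins_{(-\infty,t]}(X_i) - F(t))\,dt$ (each term has mean zero because $\E[\eins_{(-\infty,t]}(X_i)] = F(t)$). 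Hence $\E[\mathcal L_j] = 0$ and $\V(\mathcal L_j) = \frac1m \V(Z_1)$.

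Next I would identify $\V(Z_1)$ with $\alpha^2 \sigma^2_{\ES_\alpha}$. Indeed, by Fubini,
\[
\V(Z_1) = \frac{1}{\alpha^2}\int_{\VAR_{1-\alpha}(X)}^\infty\int_{\VAR_{1-\alpha}(X)}^\infty \C\big(\eins_{(-\infty,t]}(X_1), \eins_{(-\infty,s]}(X_1)\big)\,dt\,ds,
\]
and $\C(\eins_{(-\infty,t]}(X_1),\eins_{(-\infty,s]}(X_1)) = F(\min\{t,s\}) - F(t)F(s)$, which is exactly $\alpha^2\sigma^2_{\ES_\alpha}$ by the definition of $\sigma^2_{\ES_\alpha(X)}$ given in the introduction. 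Therefore $\V(\mathcal L_j) = \frac{\alpha^2\sigma^2_{\ES_\alpha}}{m} \cdot \frac{1}{\alpha^2}$; wait — more precisely $\V(\mathcal L_j) = \frac{1}{m}\V(Z_1) = \frac{\sigma^2_{\ES_\alpha}}{m}$. Since $m \ge 11/\varepsilon^2$, we get $\V(\mathcal L_j) \le \frac{\varepsilon^2 \sigma^2_{\ES_\alpha}}{11}$. Chebyshev's inequality then gives
\[
\P\big(|\mathcal L_j| > \varepsilon\sigma_{\ES_\alpha}\big) \le \frac{\V(\mathcal L_j)}{\varepsilon^2\sigma^2_{\ES_\alpha}} \le \frac{1}{11},
\]
which is the claim. (One should note the edge case $\sigma_{\ES_\alpha} = 0$, where $Z_1$ is a.s.\ constant $= 0$ and the inequality is trivial, or handle it by the Gaussian-smoothing reduction already in place.)

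The steps are all routine once the simplification of $\psi'$ is in hand; the only mildly delicate point — and the one I would be most careful about — is justifying the interchange of expectation/variance with the integral over $t$ (and the double integral in the variance computation), i.e.\ the Fubini applications. This is legitimate because $\sigma^2_{\ES_\alpha}$ is assumed finite, which is precisely the integrability condition needed: finiteness of $\sigma^2_{\ES_\alpha}$ is equivalent to square-integrability of $Z_1$ (equivalently, of the positive part of $X$), so all the integrals converge absolutely and Tonelli/Fubini apply. I would also double-check the sign/normalization bookkeeping so that the factor of $\frac1\alpha$ from $\psi'$ and the factor in the definition of $\sigma^2_{\ES_\alpha}$ (which carries $\frac{1}{\alpha^2}$) match up correctly — they do, since the $\frac{1}{\alpha^2}$ in $\sigma^2_{\ES_\alpha}$ comes exactly from squaring the $\frac1\alpha$ in $\psi'$. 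No deeper obstacle is expected here; this lemma is the "easy half" of the estimate \eqref{eq:need.to.show}, with the error term $\mathcal E_j$ (handled in a subsequent lemma) being where the Lipschitz assumption and the genuinely harder work enters.
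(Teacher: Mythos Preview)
Your proof is correct and follows essentially the same route as the paper: compute $\E[\mathcal L_j]=0$ and $\V(\mathcal L_j)=\sigma_{\ES_\alpha}^2/m$ via Fubini and the identity $\C(\eins_{(-\infty,t]}(X_1),\eins_{(-\infty,s]}(X_1))=F(\min\{t,s\})-F(t)F(s)$, then apply Chebyshev together with $m\ge 11/\varepsilon^2$. The only difference is cosmetic---you first express $\mathcal L_j$ as an average of i.i.d.\ terms $Z_i$ whereas the paper computes $\E[\mathcal L_j^2]$ directly from the double-integral representation---and your momentary bookkeeping wobble (``$\alpha^2\sigma^2_{\ES_\alpha}$'') is self-corrected to the right value $\V(Z_1)=\sigma^2_{\ES_\alpha}$.
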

\begin{proof}
	We first claim that $\E[ \mathcal{L}_j^2 ]= \sigma^2_{\ES} / m$.
	To that end, note that
	\[\mathcal{L}_j^2 
	=\int_{\mathbb{R}}\int_{\mathbb{R}} \psi'(F(t))\psi'(F(s)) \left( \widehat{F}_{I_j}(t)-F(t) \right) \left( \widehat{F}_{I_j}(s)-F(s) \right)\, dt ds.\]
	Moreover, since  $\E[\widehat{F}_{I_j}(t)]=F(t)$ for every $t\in\mathbb{R}$, an application of  Fubini's theorem shows that 
	\begin{align*}
	\E[ \mathcal{L}_j^2 ]
	&=\int_{\mathbb{R}}\int_{\mathbb{R}}  \psi'(F(t))\psi'(F(s)) \C\left[\widehat{F}_{I_j}(t), \widehat{F}_{I_j}(s) \right] \, dt ds \\
	&= \frac{1}{\alpha^2} \int_{F^{-1}(1-\alpha)}^\infty \int_{F^{-1}(1-\alpha)}^\infty  \C\left[\widehat{F}_{I_j}(t), \widehat{F}_{I_j}(s) \right] \, dt ds,
	\end{align*}
	where the second equality holds because $\psi'=-\frac{1}{\alpha} \eins_{[1-\alpha,1]}$ and $F(t)<1-\alpha$ for every $t<F^{-1}(1-\alpha)$. 
	Next, by independence of the sample $(X_i)_{i\in I_j}$, 
	\begin{align*}
	\C\left[\widehat{F}_{I_j}(t), \widehat{F}_{I_j}(s)\right]
	&=\frac{1}{m^2} \sum_{i\in I_j} \E\left[ (\eins_{(-\infty,t]}(X_i)  -F(t) ) (\eins_{(-\infty,s]}(X_i) - F(s))\right]\\
	&=\frac{ F(\min\{t, s\}) -F(t)F(s) }{m}.
	\end{align*}
	Therefore, it follows from the definition of $\sigma^2_{\ES_\alpha}$ that  	$\E[ \mathcal{L}_j^2 ] =\sigma^2_{\ES_\alpha} / m$.
	
	To complete the proof, we use  Chebychev's inequality which guarantees that 
	\[ \P\left(|\mathcal{L}_j|> \varepsilon \sigma_{\ES} \right)
	\leq \frac{\E[\mathcal{L}_j^2]}{(\varepsilon \sigma_{\ES})^2}
	= \frac{\sigma^2_{\ES}}{m  \varepsilon^2 \sigma_{\ES}^2}
	\leq \frac{1}{11},\]
	where the last inequality holds since  $m \geq \frac{11}{\varepsilon^2}$ (see \eqref{eq:estimates.m.n}).
\end{proof}

\begin{lemma}
\label{lem:error}
	Set $\gamma:= \frac{5}{4} \varepsilon\sqrt\alpha$.
	Then we have that
\[ \P\left(|\mathcal{E}_j|>  \frac{ 3\gamma  }{\alpha} \left( F^{-1}(1-\alpha+\gamma) -F^{-1}(1-\alpha- \gamma) \right)\right)
	\leq \frac{1}{110}.\]
\end{lemma}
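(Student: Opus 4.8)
The plan is to exploit the explicit form $\psi'=-\tfrac1\alpha\eins_{[1-\alpha,1]}$. On the one hand it shows that the integrand defining $\mathcal{E}_j$ has absolute value at most $\tfrac1\alpha\,|\widehat{F}_{I_j}(t)-F(t)|$ for every $t$; on the other hand it shows that this integrand vanishes unless the level $1-\alpha$ lies (weakly) between $F(t)$ and $\widehat{F}_{I_j}(t)$, the point being that the Taylor midpoint $F(t)+\xi(\widehat{F}_{I_j}(t)-F(t))$ lies strictly between $F(t)$ and $\widehat{F}_{I_j}(t)$ because $\xi\in(0,1)$. Thus the task reduces to localising, with high probability, the set of $t$ on which the integrand is nonzero to a small neighbourhood of $F^{-1}(1-\alpha)$, and to controlling $|\widehat{F}_{I_j}-F|$ there.

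Accordingly, put $s_\pm:=F^{-1}(1-\alpha\pm\gamma)$; these are well defined with $1-\alpha\pm\gamma\in[1-2\alpha,1-\tfrac\alpha2]$, since $\gamma=\tfrac54\varepsilon\sqrt\alpha\le\tfrac{5}{24}\alpha$ by $\varepsilon\le\tfrac16\sqrt\alpha$. Consider the event
\[
G:=\big\{\,|\widehat{F}_{I_j}(s_-)-(1-\alpha-\gamma)|\le\gamma\,\big\}\cap\big\{\,|\widehat{F}_{I_j}(s_+)-(1-\alpha+\gamma)|\le\gamma\,\big\}.
\]
On $G$, monotonicity of both $\widehat{F}_{I_j}$ and $F$ yields two facts. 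First, for $t<s_-$ we have $\widehat{F}_{I_j}(t)\le\widehat{F}_{I_j}(s_-)\le1-\alpha$ and $F(t)<1-\alpha$, so (again using $\xi\in(0,1)$) the whole integrand vanishes, and symmetrically for $t>s_+$. Second, for $t\in[s_-,s_+]$,
\[
-3\gamma\ \le\ \widehat{F}_{I_j}(s_-)-F(s_+)\ \le\ \widehat{F}_{I_j}(t)-F(t)\ \le\ \widehat{F}_{I_j}(s_+)-F(s_-)\ \le\ 3\gamma,
\]
since $F(s_+)-F(s_-)=2\gamma$ and the two endpoint deviations are at most $\gamma$. Combining, on $G$
\[
|\mathcal{E}_j|\ \le\ \frac1\alpha\int_{s_-}^{s_+}|\widehat{F}_{I_j}(t)-F(t)|\,dt\ \le\ \frac{3\gamma}{\alpha}\,(s_+-s_-)\ =\ \frac{3\gamma}{\alpha}\big(F^{-1}(1-\alpha+\gamma)-F^{-1}(1-\alpha-\gamma)\big),
\]
which is exactly the asserted bound.

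It then remains to prove $\P(G^c)\le\tfrac1{110}$. Since $F$ is continuous (recall the reduction made at the start of the section), $m\,\widehat{F}_{I_j}(s_\pm)$ is $\mathrm{Binomial}(m,1-\alpha\pm\gamma)$, so each of the two conditions defining $G$ asks a binomial average to stay within $\gamma$ of its mean. By Bernstein's inequality (\cite{boucheron2003concentration}) the relevant Bernoulli variance is at most $\alpha+\gamma\le\tfrac{29}{24}\alpha$, and substituting $\gamma^2=\tfrac{25}{16}\varepsilon^2\alpha$ and $m\ge\tfrac{11}{\varepsilon^2}$ bounds the exponent by an absolute negative constant; each of the two tail probabilities is thus at most $\tfrac1{220}$, and a union bound completes the proof.

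I expect the main obstacle to be the choice of the localisation window. One cannot afford the coarser event $\{\sup_t|\widehat{F}_{I_j}(t)-F(t)|\le\gamma\}$ over the whole upper tail $\{t:F(t)\ge1-2\alpha\}$: on that range $F$ varies by order $\alpha$, so a discretisation plus union bound would involve of order $\sqrt\alpha/\varepsilon$ points (unbounded as $\varepsilon/\sqrt\alpha\to0$), whereas the available Bernstein exponent is only of order $m\gamma^2/\alpha\sim11$. Restricting to the narrow window $[s_-,s_+]$, where $F$ moves by only $2\gamma$, reduces the union bound to the two endpoints — at the price of the slack factor $3$ in $\tfrac{3\gamma}{\alpha}$, which is harmless: in the next step the $L$-Lipschitz hypothesis turns $F^{-1}(1-\alpha+\gamma)-F^{-1}(1-\alpha-\gamma)$ into at most $2L\gamma$, so that $|\mathcal{E}_j|\le\tfrac{6L\gamma^2}{\alpha}=\tfrac{150}{16}L\varepsilon^2<10L\varepsilon^2$, matching \eqref{eq:need.to.show}. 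Apart from this, the only remaining points are the routine boundary cases in the step where the integrand vanishes, which all hinge on $\xi$ lying in the open interval $(0,1)$.
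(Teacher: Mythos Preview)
Your proposal is correct and follows essentially the same route as the paper: the paper likewise defines the two points $t_0,t_1$ with $F(t_0)=1-\alpha-\gamma$, $F(t_1)=1-\alpha+\gamma$, uses Bernstein's inequality (with $\lambda=\ln 440$) at these two points to get the event of probability at least $1-\tfrac1{110}$, then shows the integrand vanishes outside $[t_0,t_1]$ and that $|\widehat F_{I_j}-F|\le 3\gamma$ on $[t_0,t_1]$ by the same monotonicity sandwich. Your closing commentary on why a uniform bound over the whole upper tail would fail is not in the paper but is a nice piece of motivation.
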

\begin{proof}	
	Note that $\gamma\leq \frac{5}{24} \alpha$ by the assumption that $\varepsilon< \frac{1}{6}\sqrt{\alpha}$, and let $t_0,t_1\in\mathbb{R}$ satisfy that
	\[   F(t_0)= 1-\alpha - \gamma ,\quad F(t_1)= 1-\alpha+ \gamma . \]
	Such $t_0,t_1$ exist since $F$ is strictly increasing and continuous in this section.

	\vspace{0.5em}
	\noindent
	{\em Step 1:}
	We claim that 
	\[ \mathcal{A}
	:= \left\{ \left|  \widehat{F}_{I_j} (s) - F(s) \right| \leq \gamma  \text{ for } s = t_0,t_1 \right\} \]
	satisfies $\P( \mathcal{A}) \leq  \frac{1}{110}$.
	To that end, recall that  Bernstein's inequality 	(see, e.g., \cite[Theorem 6.12]{steinwart2008support}) guarantees that if $Y_1,\dots, Y_m$ are i.i.d.~random variables which take values between 0 and 1, then for every $\lambda\geq 0$, with probability at least $1-2\exp(-\lambda)$,
	\[  \left| \frac{1}{m}\sum_{i=1}^m Y_i - \E[Y_1]  \right|  
	\leq \sqrt \frac{2 \V[Y_1] \lambda }{m} + \frac{2\lambda}{3m} .\]
	Applied to $Y_i= 1_{(-\infty,s]}(X_i)$ (for $s=t_0,t_1$) and $\lambda:=\ln(440)$, it follows that with probability at least $1-\frac{1}{220}$, 
	\[  \left| \widehat{F}_{I_j} ( s) - F(s)  \right|  
	\leq  \sqrt \frac{2F(s)(1-F(s)) \ln(440) }{m} + \frac{2\ln(440)}{3m}
	=:\varphi(s) .\]
	Thus, it remains to show that $\varphi(s)\leq\gamma$ for $s=t_0,t_1$.
	
	To that end, fix $s\in\{t_0,t_1\}$ and note that since   $\gamma\leq \frac{5}{24}\alpha$,
	\[F(s)(1-F(s)) \leq \alpha+\gamma
	\leq  \frac{29}{24}\alpha.\]
	Therefore, using that $m\geq \frac{11}{\varepsilon^2}$ and that $\varepsilon^2\leq \frac{1}{6}\varepsilon\sqrt\alpha$,
	\begin{align*}
	\varphi(s)
	&\leq  \sqrt{ \frac{2 \cdot 29 \cdot  \alpha  \varepsilon^2  \cdot \ln(440) }{24\cdot 11} } +  \frac{2\cdot \ln(440)  \varepsilon^2}{33} \\
	&\leq \varepsilon\sqrt{\alpha} \cdot \left( \sqrt{\frac{ 2\cdot 29 \cdot \ln(440) }{24\cdot 11} } + \frac{2\cdot \ln(440) }{33 \cdot 6} \right)
	\leq\varepsilon\sqrt{\alpha}  \cdot \frac{5}{4} = \gamma .
	\end{align*}

	\vspace{0.5em}
	\noindent
	{\em Step 2:}
	We claim that in the event $\mathcal{A}$, 
	\begin{align}
	\label{eq:2}
	|\mathcal{E}_j|
	\leq \frac{1}{\alpha} \int_{t_0}^{t_1 } \left| F(t)-\widehat{F}_{I_j}(t) \right|\,dt.
	\end{align}
	Indeed, recall that 
	\[\mathcal{E}_j
	= \int_\mathbb{R} \Big( \psi'\big(F(t)+ \xi (\widehat{F}_{I_j}(t)-F(t))\big)-\psi'(F(t)) \Big) \left( \widehat{F}_{I_j}(t)-F(t) \right) \,dt \]
	and observe that for $t<t_0$ or $t>t_1$, the term in the first bracket is equal to zero.
	Indeed, consider $t<t_0$ and note that by the strict monotonicity of $F$ and the definition of $t_0$, $F(t) < 1- \alpha-\gamma$.
	Moreover, by the  monotonicity of $\widehat{F}_{I_j}$,
	\[\widehat{F}_{I_j}(t)
	\leq \widehat{F}_{I_j}(t_0)
	\leq F(t_0)+\gamma
	\leq 1-\alpha,\]
	where the second inequality follows from the definition of $\mathcal{A}$. 
	Since $\psi'=-\frac{1}{\alpha} \eins_{[1-\alpha,1]}$, it therefore follows that for  $\xi\in(0,1)$, 
	\[ \psi'\big(F(t)+ \xi (\widehat{F}_{I_j}(t)-F(t))\big)-\psi'(F(t)) =0-0=0. \]

	In a similar manner, for every $t>t_1$, $F(t)>1-\alpha+\gamma$ and $\widehat{F}_{I_j}(t)\geq 1-\alpha$,
	and thus for $\xi \in (0, 1)$,
	\[ \psi'\big(F(t)+ \xi (\widehat{F}_{I_j}(t)-F(t))\big)-\psi'(F(t)) =1-1=0 \]
	and \eqref{eq:2} follows. 
	
		\vspace{0.5em}
	\noindent
	{\em Step 3:}
	First observe that  in the event $\mathcal{A}$, for every $t\in[t_0,t_1]$,
	\begin{align}
	\label{eq:diff.F.Fm}
	\left|F(t)-\widehat{F}_{I_j}(t) \right|
	\leq 3 \gamma.
	\end{align}
	Indeed, fix $t\in[t_0,t_1]$.
	Then, by the monotonicity of $\widehat{F}_{I_j}$ and the definition of $\mathcal{A}$, 
	\[  \widehat{F}_{I_j}(t)
	\leq \widehat{F}_{I_j}(t_1)
	\leq F(t_1)+\gamma
	\leq F(t)+2\gamma + \gamma .\]
	The same argument implies that $\widehat{F}_{I_j}(t)
	\geq F(t)-3\gamma$ which shows \eqref{eq:diff.F.Fm}.

	Finally, using  \eqref{eq:2} and \eqref{eq:diff.F.Fm}, we conclude that in the event $\mathcal{A}$,
	\begin{align*}
	|\mathcal{E}_j|
	\leq \frac{1}{\alpha} \int_{t_0}^{t_1 }3  \gamma \,dt
	&= \frac{3(t_1-t_0) \gamma}{\alpha},
	\end{align*}
	and the proof is completed by the definition of $t_0$ and $t_1$.
\end{proof}	

\begin{proof}[Proof of Theorem \ref{thm:ES}]
	Recall that 
	\[ \widehat{T}_{I_j} - \ES_\alpha(X) = \mathcal{L}_j + \mathcal{E}_j\]
	and that by Lemma \ref{lem:linear},  $\P(|\mathcal{L}_j|\geq \varepsilon\sigma_{\ES_\alpha})\leq \frac{1}{11}$.
	To control the term $\mathcal{E}_j$, set $\gamma:=\frac{5}{4}\varepsilon\sqrt{\alpha}$ so that  Lemma \ref{lem:error} guarantees that with probability at least $1-\frac{1}{110}$,
	\[ |\mathcal{E}_j| \leq   \frac{ 3\gamma  }{\alpha} \left( F^{-1}(1-\alpha+\gamma) -F^{-1}(1-\alpha- \gamma)	 \right)
	=:(\ast) .\]
	Since $\gamma\leq \frac{1}{2}\alpha$ (by the assumption that $\varepsilon\leq \frac{1}{6}\sqrt\alpha$) and  $F^{-1}$ is $L$-Lipschitz on  $[1-2\alpha,1-\frac{1}{2}\alpha]$, it follows that
	\[ (\ast)  
	\leq\frac{ 6 \gamma^2 L }{\alpha}
	\leq 10 \varepsilon^2L .\] 
	
	Combining the estimates on $\mathcal{L}_j$ and $\mathcal{E}_j$ shows that
	\[\P\left(\left| \widehat{T}_{I_j} - \ES_\alpha(X) \right| \geq  \varepsilon\sigma_{\ES_\alpha} +  10 L \varepsilon^2  \right)
	\leq\frac{1}{10}.\]
 	Therefore, \eqref{eq:need.to.show} holds, and the proof is completed by an application of Lemma \ref{lem:binomial.concentration}.
\end{proof}

\begin{proof}[Proof of Theorem \ref{thm:adversarial}]
	The proof requires only minor modification to the proof of Theorem \ref{thm:ES}.
	For shorthand notation, set $\delta:= \varepsilon\sigma_{\ES_\alpha(X)}  + 10 L \varepsilon^2$.
	
	In a first step, suppose that the data $X_1,\dots,X_N$ are not modified, i.e.\ we work in the setting of Theorem \ref{thm:ES}.
	Then, by Lemma \ref{lem:linear} and Lemma \ref{lem:error}, $\P(|\widehat{T}_{I_j}- \ES_\alpha(X)|>\delta)\leq\frac{1}{10}$ for all $j$.
	Hence, the same arguments as presented in the proof of  Lemma \ref{lem:binomial.concentration} together with the estimates on $n$ and $m$ (see \eqref{eq:estimates.m.n}) imply that with probability at least $1-\exp(-\frac{1}{14\cdot 200}N\varepsilon^2)$, 
	\begin{align}
	\label{eq:adversiral}
	| \{j : |\widehat{T}_{I_j}- \ES_\alpha(X) | >\delta\}| 
	< \sum_{j=1}^n \P(|\widehat{T}_{I_j}- \ES_\alpha(X)|>\delta) + \frac{n}{10}
	\leq \frac{2n}{10}.
	\end{align}
	
	In a next step, recall that at most $K\leq \frac{1}{140} N \varepsilon^2$ samples are modified and that $\frac{1}{140}N\varepsilon^2\leq \frac{n}{10}$, see \eqref{eq:estimates.m.n}.
	Therefore, if at most $K$ of the samples $X_1,\dots,X_N$ are modified, this can yield at most a modification of $\widehat{T}_{I_j}$ on $\frac{n}{10} $ of the blocks $I_1,\dots, I_n$.
	Hence, in the high probability event in which \eqref{eq:adversiral} holds, we still have that for more than $\frac{7}{10}n$ of the $j$'s, $|\widehat{T}_{I_j}- \ES_\alpha(X) |\leq \delta$ and thus \eqref{eq:sandwich} holds, which yields the claim.
\end{proof}

For completeness, let us state the following immediate consequence of  Theorem \ref{thm:adversarial} on the consistency of $\widehat{S}_N$ (as $N\to\infty$) in the presence of corrupted data.

\begin{corollary}
\label{cor:adversarial}
	Suppose that $\sigma_{\ES_\alpha}^2$ is finite, that $u\mapsto \VAR_u(X)$ is Lipschitz continuous on a neighbourhood of $1-\alpha$,  that at most $K_N<N$ data points are maliciously modified, and that $\lim_{N\to\infty} K_N/N=0$.
	Then, setting $\varepsilon_N:=12\sqrt{K_N/N}$, it follows that
	\[ \left| \widehat{S}_N-\mathrm{ES}_\alpha(X) \right| \stackrel{\P}{\longrightarrow} 0 \quad\text{as } N\to\infty,\]
	where $\stackrel{\P}{\to} 0$ denotes convergence in probability.
\end{corollary}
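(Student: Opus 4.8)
The plan is to invoke Theorem~\ref{thm:adversarial} with the choice $\varepsilon=\varepsilon_N=12\sqrt{K_N/N}$ and let $N\to\infty$. Fix a target accuracy $\delta>0$; it suffices to show $\P(|\widehat{S}_N-\mathrm{ES}_\alpha(X)|\ge\delta)\to0$. Since $K_N/N\to0$ we have $\varepsilon_N\to0$. Denoting by $L$ the Lipschitz constant of $u\mapsto\VAR_u(X)$ on a fixed closed neighbourhood $[1-\alpha-\eta,1-\alpha+\eta]$ of $1-\alpha$, the error level appearing in Theorem~\ref{thm:adversarial},
\[
\varepsilon_N\sigma_{\ES_\alpha(X)}+C_1L\varepsilon_N^2
=12\,\sigma_{\ES_\alpha(X)}\sqrt{K_N/N}+144\,C_1L\,\frac{K_N}{N},
\]
tends to $0$ and hence lies below $\delta$ for all large $N$; so it is enough to bound $\P\big(|\widehat{S}_N-\mathrm{ES}_\alpha(X)|\ge\varepsilon_N\sigma_{\ES_\alpha(X)}+C_1L\varepsilon_N^2\big)$.

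I would then check that the hypotheses of Theorem~\ref{thm:adversarial} are met for $N$ large. As $\varepsilon_N\to0$, we have $\varepsilon_N<c_0\sqrt\alpha$ eventually. Although Theorem~\ref{thm:ES} is stated with Lipschitz continuity of $u\mapsto\VAR_u(X)$ on $[1-2\alpha,1-\tfrac12\alpha]$, an inspection of its proof shows that this is used only through the bound on the quantity denoted $(\ast)$ following Lemma~\ref{lem:error}, which involves $\VAR_u(X)$ solely for $u\in[1-\alpha-\tfrac54\varepsilon\sqrt\alpha,\,1-\alpha+\tfrac54\varepsilon\sqrt\alpha]$; since $\tfrac54\varepsilon_N\sqrt\alpha\to0$, this interval lies inside $[1-\alpha-\eta,1-\alpha+\eta]$ once $N$ is large, so the conclusions of Theorems~\ref{thm:ES} and~\ref{thm:adversarial} hold with the local constant $L$. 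Finally, the corruption level is admissible because $K_N\le\tfrac{144}{140}K_N=C_0N\varepsilon_N^2$ with $C_0=\tfrac1{140}$ (this is precisely what the numerical constant $12$ in the definition of $\varepsilon_N$ secures, since $12>\sqrt{140}$), and $N\varepsilon_N^2=144K_N\ge c_1$ provided $K_N\ge1$, which we may assume (for $K_N=0$ one simply enlarges the corruption budget to $1$). Theorem~\ref{thm:adversarial} then gives
\[
\P\big(|\widehat{S}_N-\mathrm{ES}_\alpha(X)|\ge\varepsilon_N\sigma_{\ES_\alpha(X)}+C_1L\varepsilon_N^2\big)\le\exp(-C_2N\varepsilon_N^2)=\exp(-144\,C_2K_N),
\]
and if $K_N\to\infty$ the right-hand side tends to $0$, which together with the first paragraph proves the claim.

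The main obstacle is that $K_N$ need not tend to infinity; when $K_N$ stays bounded, the exponential bound above is merely a constant strictly below $1$. I would handle this by a subsequence argument: since $\big(\P(|\widehat{S}_N-\mathrm{ES}_\alpha(X)|\ge\delta)\big)_{N}$ is a deterministic sequence in $[0,1]$, it suffices that every subsequence admit a further subsequence along which it converges to $0$. Along any subsequence with $K_N\to\infty$, the two paragraphs above apply. Along any subsequence with $K_N\le M$, we have $m_N=\lceil11/\varepsilon_N^2\rceil\to\infty$ while $n_N=\lfloor N/m_N\rfloor\le N\varepsilon_N^2/11=\tfrac{144}{11}K_N$ stays bounded; since $\mathrm{ES}_\alpha(X)$ is finite, the plug-in estimator based on $m_N$ i.i.d.\ samples of $X$ is consistent, so $\P(|\widehat{T}_{I_j}-\mathrm{ES}_\alpha(X)|>\delta)\to0$ for each uncorrupted block, and a union bound over the at most $n_N$ blocks shows that, with probability tending to $1$, every one of the at least $n_N-K_N\ge\tfrac9{10}n_N$ uncorrupted blocks satisfies $|\widehat{T}_{I_j}-\mathrm{ES}_\alpha(X)|\le\delta$ (here $K_N\le n_N/10$, as in~\eqref{eq:estimates.m.n}). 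Hence more than $0.7\,n_N$ of the $\widehat{T}_{I_j}$ lie in $[\mathrm{ES}_\alpha(X)-\delta,\mathrm{ES}_\alpha(X)+\delta]$, and exactly as in the passage from~\eqref{eq:need.to.show} to~\eqref{eq:sandwich} the order statistics $\widehat{T}_{(k)}$ with $0.3\,n_N\le k\le0.7\,n_N$---and therefore $\widehat{Q}(\beta_1)$, $\widehat{Q}(\beta_2)$ and $\widehat{S}_N$---all lie in that interval, giving $\P(|\widehat{S}_N-\mathrm{ES}_\alpha(X)|\ge\delta)\to0$ along the subsequence. (Alternatively, replacing $\varepsilon_N$ by $\max\{12\sqrt{K_N/N},\,N^{-1/4}\}$ preserves both $\varepsilon_N\to0$ and $K_N\le C_0N\varepsilon_N^2$ while forcing $N\varepsilon_N^2\to\infty$, so that Theorem~\ref{thm:adversarial} applies directly in every case and the subsequence argument becomes unnecessary.)
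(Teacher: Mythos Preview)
The paper gives no proof, calling the corollary an ``immediate consequence'' of Theorem~\ref{thm:adversarial}; your approach---apply that theorem with $\varepsilon=\varepsilon_N$ and let $N\to\infty$---is exactly what is intended, and your argument is correct. You are in fact more careful than the paper: you rightly note that the Lipschitz hypothesis on $[1-2\alpha,1-\tfrac12\alpha]$ is invoked in the proof only on the shrinking interval $[1-\alpha-\tfrac54\varepsilon_N\sqrt\alpha,\,1-\alpha+\tfrac54\varepsilon_N\sqrt\alpha]$, which eventually lies in the assumed neighbourhood, and you spot that when $K_N$ stays bounded the bound $\exp(-144\,C_2K_N)$ does not vanish---a genuine gap in the word ``immediate'' which your subsequence argument correctly fills.

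One minor caveat: the parenthetical alternative of replacing $\varepsilon_N$ by $\max\{12\sqrt{K_N/N},\,N^{-1/4}\}$ (and likewise the remark ``for $K_N=0$ one simply enlarges the corruption budget to $1$'') does not prove the corollary \emph{as stated}, because $\widehat{S}_N$ itself depends on $\varepsilon_N$ through $m=\lceil 11/\varepsilon_N^2\rceil$; changing $\varepsilon_N$ changes the estimator. The subsequence argument is the one that keeps the estimator fixed as specified and should be regarded as the actual proof.
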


\section{On the optimality}
\label{sec:onoptimality}
In this section, we rigorously prove the optimality claims made in the introduction.
To that end, let us start with some preliminaries.
We shall often make use of scaled Bernoulli distributions:
For $x>0$, write 
\[X\sim {\rm B}(p,x) \quad\text{if } \P(X=0)=1-p \text{ and } \P(X=x)=p.\]
In particular, if  $X\sim {\rm B}(p,x)$, then
\begin{align}
\label{eq:Binom.F}
F_X = (1-p) \eins_{[0,x)} + \eins_{[x,\infty)} \quad\text{and}\quad 
F_X^{-1}= -\infty \eins_{\{0\}} + 0\eins_{(0,1-p]}+ x\eins_{(1-p,1]}.
\end{align} 
The reason to consider the Bernoulli distribution steams from the fact that its empirical distribution remains a Bernoulli distribution (with a random success parameter), and that explicit calculations are simpler.

\begin{lemma}
\label{lem:ES.bernoulli}
	Let $p\in[0,1]$, $x>0$, and $X\sim {\rm B}(p,x)$.
	Then
	\begin{align*}
	\ES_\alpha(X)
	&=x \min\left\{1,\frac{p}{\alpha}\right\},\\
	\sigma^2_{\ES_\alpha(X)}
	&= \begin{cases}
	\frac{x^2(p-p^2)}{\alpha^2} 
	&\text{if } p\leq \alpha, \\
	0 
	&\text{else}.
	\end{cases}
	\end{align*}
\end{lemma}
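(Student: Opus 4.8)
The plan is to compute both quantities directly from the explicit form of $F_X$ and $F_X^{-1}$ given in \eqref{eq:Binom.F}, distinguishing the cases $p\le\alpha$ and $p>\alpha$.

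First I would handle $\ES_\alpha(X)$. Using the definition $\ES_\alpha(X)=\frac1\alpha\int_{1-\alpha}^1 F_X^{-1}(u)\,du$ together with $F_X^{-1}=-\infty\,\eins_{\{0\}}+0\,\eins_{(0,1-p]}+x\,\eins_{(1-p,1]}$, the integrand on $(1-\alpha,1]$ is $0$ on $(1-\alpha,1-p]$ (when $p\le\alpha$, this sub-interval is nonempty) and $x$ on $(\max\{1-\alpha,1-p\},1]$. Hence if $p\le\alpha$ the integral equals $x\cdot p$, giving $\ES_\alpha(X)=\frac{xp}{\alpha}$, while if $p>\alpha$ the integrand is $x$ on all of $(1-\alpha,1]$, giving $\ES_\alpha(X)=x$. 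Both cases are summarized by $x\min\{1,p/\alpha\}$.

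Next I would compute $\sigma^2_{\ES_\alpha(X)}$ from
\[
\sigma^2_{\ES_\alpha(X)}=\frac1{\alpha^2}\int_{\VAR_{1-\alpha}(X)}^\infty\int_{\VAR_{1-\alpha}(X)}^\infty F_X(\min\{t,s\})-F_X(t)F_X(s)\,dt\,ds.
\]
When $p>\alpha$ one has $\VAR_{1-\alpha}(X)=F_X^{-1}(1-\alpha)=x$, and for $t,s>x$ we get $F_X(t)=F_X(s)=F_X(\min\{t,s\})=1$, so the integrand vanishes identically and $\sigma^2_{\ES_\alpha(X)}=0$. When $p\le\alpha$ one has $\VAR_{1-\alpha}(X)=0$ (since $F_X(0)=1-p\ge 1-\alpha$, using the convention that yields the value $0$ here — note the continuity-at-$1-\alpha$ subtlety only matters for the CLT, not for this formula), so the domain of integration is $(0,\infty)^2$. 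On this domain $F_X(t)=(1-p)\eins_{[0,x)}(t)+\eins_{[x,\infty)}(t)$, which equals $1-p$ for $0<t<x$ and $1$ for $t\ge x$. Splitting the square $(0,\infty)^2$ into the four regions determined by whether $t,s$ are below or above $x$: on $\{t\ge x\}\cap\{s\ge x\}$ the integrand is $1-1\cdot1=0$; on $\{t<x\}\cap\{s\ge x\}$ (and symmetrically) it is $F_X(t)-F_X(t)F_X(s)=(1-p)-(1-p)\cdot1=0$; and on $\{t<x\}\cap\{s<x\}$, a square of side $x$, it is $(1-p)-(1-p)^2=p-p^2$. Hence the double integral equals $x^2(p-p^2)$ and $\sigma^2_{\ES_\alpha(X)}=\frac{x^2(p-p^2)}{\alpha^2}$, as claimed.

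There is no real obstacle here; the only point requiring a moment's care is the value of $\VAR_{1-\alpha}(X)$ when $p\le\alpha$, where $F_X$ jumps at $0$ from $0$ to $1-p\ge 1-\alpha$ — one must use that $\inf\{t:F_X(t)\ge 1-\alpha\}=0$, matching the lower limit $0$ of the defining double integral (and consistent with integrating the $F_X^{-1}$ formula, whose atom at $\{0\}$ has zero Lebesgue measure). Everything else is the bookkeeping of integrating a piecewise-constant function over a square, which I would present compactly rather than region by region.
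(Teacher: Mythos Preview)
Your proposal is correct and follows essentially the same route as the paper: both compute $\ES_\alpha(X)$ directly from the piecewise form of $F_X^{-1}$ in \eqref{eq:Binom.F}, and both compute $\sigma^2_{\ES_\alpha(X)}$ by first identifying $\VAR_{1-\alpha}(X)$ in the two cases $p\le\alpha$ and $p>\alpha$, then observing that the integrand $F_X(\min\{t,s\})-F_X(t)F_X(s)$ vanishes whenever $t\ge x$ or $s\ge x$ and equals $p-p^2$ on $[0,x)^2$. The only cosmetic difference is that you spell out the four-region split explicitly, whereas the paper dispatches the three vanishing regions in one line.
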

\begin{proof}
	By definition of the expected shortfall and  by \eqref{eq:Binom.F},
	\[\ES_\alpha(X)
	=\frac{1}{\alpha} \int_{1-\alpha}^1 x \eins_{(1-p,1]}(u) \,du
	=
	\begin{cases}
	\frac{1}{\alpha} x\alpha &\text{if } p\geq\alpha,\\
	\frac{1}{\alpha} xp &\text{if } p<\alpha
	\end{cases}\]
	from which the first claim readily follows.
	
	As for the second claim, assume first that $p\leq \alpha$.
	Then  $F_X^{-1}(1-\alpha)=0$ (see \eqref{eq:Binom.F}) and therefore
	\[\sigma^2_{\ES_\alpha(X)}=\frac{1}{\alpha^2}\int_0^\infty \int_0^\infty F_X(\min\{t, s\}) - F_X(t)F_X(s)\, dtds.\]
	Moreover, note that for $t\geq x$, $F_X(t)=1$, hence, if either $t\geq x$ or $s\geq x$, then $F_X(\min\{t, s\}) - F_X(t)F_X(s)=0$.
	On the other hand, for $t,s\in[0,x)$, the latter term is equal to $1-p-(1-p)^2=p-p^2$ from which the claim follows.	

	The proof in the case $p>\alpha$ follows because  $F^{-1}_X(1-\alpha)=x$ (see \eqref{eq:Binom.F}) and hence
	\begin{align*}
	\sigma^2_{\ES_\alpha(X)}
	&=\frac{1}{\alpha^2}\int_x^\infty \int_x^\infty F_X(\min\{t, s\}) - F_X(t)F_X(s)\, dtds \\
	&= \frac{1}{\alpha^2} \int_{x}^\infty \int_x^\infty 1 - 1 \,dtds = 0.
	\qedhere
	\end{align*}
\end{proof}

In what follows, we will often use the following standard estimate, the proof of which we provide for completeness.

\begin{lemma}
\label{lem:prob.exists}
	Let $N\geq 1$, let $p\in[0,\frac{1}{N}]$, and let  $(\mathcal{A}_i)_{i=1}^N$ be independent events that satisfy $\P(\mathcal{A}_i) \geq p$ for all $i=1,\dots,N$.
	Then
	\[ \P\left( \bigcup\nolimits_{i=1}^N \mathcal{A}_i \right)
	\geq \frac{1}{4} Np. \]
\end{lemma}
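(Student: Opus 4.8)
The plan is to prove a simple one-sided inclusion–exclusion (Bonferroni / union bound) estimate under the smallness assumption $p \le 1/N$. Write $q_i := \P(\mathcal{A}_i)$, so $q_i \ge p$ for each $i$. By independence and the standard lower bound $\P(\bigcup_i \mathcal{A}_i) = 1 - \prod_{i=1}^N (1-q_i)$, so it suffices to show $\prod_{i=1}^N(1-q_i) \le 1 - \tfrac14 Np$. Since the left-hand side is decreasing in each $q_i$, it is maximised when $q_i = p$ for all $i$, and hence it is enough to prove $(1-p)^N \le 1 - \tfrac14 Np$ for $p \in [0, 1/N]$.

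For that final inequality I would use the elementary bound $(1-p)^N \le 1 - Np + \binom{N}{2} p^2 \le 1 - Np + \tfrac12 N^2 p^2$, valid for $p \in [0,1]$ (this follows, e.g., from the fact that for $0 \le p \le 1$ the alternating binomial expansion can be truncated after a positive term, or simply from $\ln(1-p) \le -p$ handled carefully near the boundary). Then, using $p \le 1/N$, we get $\tfrac12 N^2 p^2 = \tfrac12 (Np)(Np) \le \tfrac12 Np$, so $(1-p)^N \le 1 - Np + \tfrac12 Np = 1 - \tfrac12 Np \le 1 - \tfrac14 Np$, which is even slightly stronger than claimed. Putting this together with the monotonicity reduction from the first paragraph gives
\[
\P\Bigl( \bigcup\nolimits_{i=1}^N \mathcal{A}_i \Bigr) = 1 - \prod_{i=1}^N (1-q_i) \ge 1 - (1-p)^N \ge \tfrac14 Np.
\]

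There is no real obstacle here; the only mild subtlety is justifying the truncated-binomial inequality $(1-p)^N \le 1 - Np + \binom{N}{2}p^2$ cleanly for all $p \in [0,1]$ and $N \ge 1$ (the case $N=1$ is trivial since then $\binom{N}{2}=0$ and $(1-p) = 1 - p$; for $N \ge 2$ one can argue by induction on $N$ or by grouping consecutive terms of the binomial expansion in pairs, each pair $-\binom{N}{k}p^k + \binom{N}{k+1}p^{k+1}$ being nonpositive for $p$ small — but since we only need it for $p \le 1/N \le 1/2$ when $N \ge 2$, the pairing argument is immediate as $\binom{N}{k+1}p \le \binom{N}{k}$ there). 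I would state this as a one-line computation rather than belabour it. The factor $\tfrac14$ in the statement is deliberately loose, which gives plenty of slack and lets us avoid any delicate estimate.
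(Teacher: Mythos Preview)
Your proof is correct and follows essentially the same route as the paper: reduce by independence and monotonicity to showing $1-(1-p)^N \ge \tfrac14 Np$. The paper finishes via $(1-p)^N \le e^{-Np}$ together with $e^{-x} \le 1-\tfrac14 x$ on $[0,1]$, whereas you bound $(1-p)^N$ directly by the quadratic truncation $1 - Np + \binom{N}{2}p^2$; this is an equally elementary variation and even delivers the slightly sharper constant $\tfrac12$.
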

\begin{proof}
	Clearly
	\[\P\left( \bigcup\nolimits_{i=1}^N \mathcal{A}_i \right)
	=1-\P\left( \bigcap\nolimits_{i=1}^N \mathcal{A}_i^c \right)
	\geq 1-(1-p)^N.\]
	Moreover, using that  $\log(1-x) \leq - x$ for $x\in(0,1]$, 
	\[ 1-(1-p)^N
	=1-\exp(N \log(1-p))
	\geq 1- \exp\left(-Np \right).\]
	Finally, since $\exp(-x)\geq 1-\frac{1}{4}x$ for $x\in[0,1]$, we conclude that
	\[\P\left( \bigcup\nolimits_{i=1}^N \mathcal{A}_i \right)
	\geq 1- \exp\left(- Np \right)
	\geq \frac{1}{4} Np,\]
	as claimed. 
\end{proof}

\subsection{The minimax rate}
\label{sec:minimax}

Let us recall that $\widehat{T}_N$ is the plug-in estimator for $\ES_\alpha(X)$.
In this section we prove that the linear rate $\mathcal{O}(\frac{1}{N\varepsilon^2})$ for the rate of convergence of $\P( | \widehat{T}_N - \mathrm{ES}_\alpha(X) | \geq \varepsilon \sigma_{\ES_\alpha(X)})$ is optimal in a minimax sense.
The alternative result using only Pareto-distributions is presented in Section \ref{sec:Pareto}.

\begin{proposition}
\label{prop:heavy.tails.example}
	Let $\alpha\leq \frac{1}{10}$, let  $\varepsilon\leq \frac{1}{2}\sqrt\alpha$ and assume that $N\varepsilon^2\geq 5$.
	Then there exists a random variable $X$ such that $\sigma_{\ES_\alpha(X)}=1$, $F_X^{-1}$ is $1$-Lipschitz on $[1-2\alpha,1-\frac{1}{2} \alpha]$ and 
	\[ \P\left( \left| \widehat{T}_N - \mathrm{ES}_\alpha(X) \right| \geq \varepsilon \sigma_{\ES_\alpha(X)} \right)
	\geq \frac{1}{16 N\varepsilon^2}.\]
\end{proposition}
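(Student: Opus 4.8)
The plan is to construct $X$ as a scaled Bernoulli random variable $X \sim \mathrm{B}(p,x)$ for a carefully chosen small success probability $p$ and scale $x$, and then exploit the fact that if $X$ is Bernoulli then so is its empirical distribution (with a $\mathrm{Binomial}(N,p)/N$ success parameter), which makes $\widehat T_N$ explicitly computable via Lemma \ref{lem:ES.bernoulli}. Concretely, the empirical plug-in estimator based on $k$ successes out of $N$ equals $x\min\{1, (k/N)/\alpha\}$. Since $p$ will be taken to be of order $\varepsilon^2\alpha$ (hence much smaller than $\alpha/N$-ish scales relevant here — more precisely $p\alpha \ll 1$), the true value is $\ES_\alpha(X) = xp/\alpha$, and a single extra success (i.e. $k=1$ instead of $k=0$, or more generally the event that the number of successes deviates) pushes the empirical estimate to $x\min\{1, 1/(N\alpha)\} = x/(N\alpha)$ (using $N\alpha$ not too small), which differs from $xp/\alpha$ by roughly $x/(N\alpha)$. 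The idea is that with probability of order $Np$ (by Lemma \ref{lem:prob.exists}) the sample contains at least one success, and in that case the deviation $|\widehat T_N - \ES_\alpha(X)|$ is large.

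First I would fix the parameters: set $p$ so that $Np$ is of the right order (I expect $p = 4\varepsilon^2\alpha$ or similar, using the hypothesis $N\varepsilon^2 \geq 5$ to ensure $p \leq 1/N$ is not automatic — actually one wants $p$ small enough that $Np \leq 1$ so Lemma \ref{lem:prob.exists} applies, which forces $p \le 1/N$; combined with wanting $Np \gtrsim N\varepsilon^2 p/\cdots$ this pins down $p$), and then choose $x$ so that the normalization $\sigma_{\ES_\alpha(X)} = 1$ holds. By Lemma \ref{lem:ES.bernoulli}, $\sigma^2_{\ES_\alpha(X)} = x^2(p-p^2)/\alpha^2$ when $p \le \alpha$, so $x = \alpha/\sqrt{p-p^2} \approx \alpha/\sqrt p$. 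One must also check the Lipschitz condition: from \eqref{eq:Binom.F}, $F_X^{-1}$ equals $0$ on $(0,1-p]$ and $x$ on $(1-p,1]$; since $p < \alpha/2$, the interval $[1-2\alpha, 1-\tfrac12\alpha]$ lies entirely within $(0,1-p]$ where $F_X^{-1}\equiv 0$, so $F_X^{-1}$ is trivially $0$-Lipschitz there, in particular $1$-Lipschitz. (This is the same trick used throughout the optimality section — the heavy/discontinuous part of the distribution sits above the Lipschitz window.)

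Next I would lower-bound the deviation probability. Let $K_N = \sum_{i=1}^N \eins_{\{X_i = x\}} \sim \mathrm{Binomial}(N,p)$. On the event $\{K_N \geq 1\}$ we have $\widehat T_N = x\min\{1, (K_N/N)/\alpha\} \geq x/(N\alpha)$ (since $K_N \geq 1$), while $\ES_\alpha(X) = xp/\alpha$; hence $\widehat T_N - \ES_\alpha(X) \geq x(1/(N) - p)/\alpha \geq \tfrac{x}{2N\alpha}$ provided $p \leq 1/(2N)$. Then I need $\tfrac{x}{2N\alpha} \geq \varepsilon\sigma_{\ES_\alpha(X)} = \varepsilon$; substituting $x \approx \alpha/\sqrt p$ this becomes $\tfrac{1}{2N\sqrt p} \gtrsim \varepsilon$, i.e. $p \lesssim \tfrac{1}{N^2\varepsilon^2}$ — so actually $p$ should be taken of order $\tfrac{1}{N^2\varepsilon^2}$ (one checks $p \le \alpha/2$ and $p\le 1/(2N)$ follow from $N\varepsilon^2 \geq 5$ and $\alpha \le 1/10$, possibly with the chosen constant). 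Finally, apply Lemma \ref{lem:prob.exists} with $\mathcal A_i = \{X_i = x\}$, $\P(\mathcal A_i) = p \le 1/N$, to get $\P(K_N \geq 1) \geq \tfrac14 Np$, and since $p \asymp \tfrac{1}{N^2\varepsilon^2}$ this yields $\P(|\widehat T_N - \ES_\alpha(X)| \geq \varepsilon) \geq \tfrac{1}{4}Np \gtrsim \tfrac{1}{N^2\varepsilon^2 N}$ — hmm, that gives $\tfrac{1}{N^2\varepsilon^2}$, not $\tfrac{1}{N\varepsilon^2}$. So I must recalibrate: the right choice makes $Np$ itself of order $\tfrac{1}{N\varepsilon^2}$, meaning $p \asymp \tfrac{1}{N^2\varepsilon^2}$, and then $\tfrac14 Np \asymp \tfrac{1}{4N\varepsilon^2}$; one then double-checks the deviation-size constraint $\tfrac{x}{2N\alpha}\geq \varepsilon$ is compatible, tuning the absolute constant to land on exactly $\tfrac{1}{16 N\varepsilon^2}$. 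The main obstacle — and the step requiring the most care — is the simultaneous bookkeeping of the three constraints on $p$: that $p \leq 1/N$ (for Lemma \ref{lem:prob.exists}), that $p \leq \alpha/2$ (so Lemma \ref{lem:ES.bernoulli}'s first case applies and the Lipschitz window stays in the flat region), and that the scale $x = \alpha/\sqrt{p-p^2}$ makes the one-success deviation at least $\varepsilon$; verifying these are jointly satisfiable under the hypotheses $\alpha \le \tfrac1{10}$, $\varepsilon \le \tfrac12\sqrt\alpha$, $N\varepsilon^2 \ge 5$ and extracting the clean constant $\tfrac{1}{16}$ is the delicate part, while everything else is direct substitution into the two Bernoulli lemmas.
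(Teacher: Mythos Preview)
Your proposal is correct and follows essentially the same route as the paper: a scaled Bernoulli $X\sim\mathrm{B}(p,x)$ with $p=\tfrac{1}{4N^2\varepsilon^2}$ and $x=\alpha/\sqrt{p-p^2}$, the Lipschitz condition verified because $p<\alpha/2$ makes $F_X^{-1}$ constant on the relevant window, and the deviation event $\{K_N\ge 1\}$ bounded below via Lemma~\ref{lem:prob.exists} to yield $\tfrac14 Np=\tfrac{1}{16N\varepsilon^2}$. Your final calibration (after the initial false start $p\approx 4\varepsilon^2\alpha$) lands exactly on the paper's choice, and the bookkeeping you flag as ``the delicate part'' is indeed where the hypotheses $\alpha\le\tfrac1{10}$, $\varepsilon\le\tfrac12\sqrt\alpha$, $N\varepsilon^2\ge 5$ get used.
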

\begin{proof}
	For shorthand notation, put $\delta=2\varepsilon$. 
	Fix $N\geq 1$ and set 
	\[	p:= \frac{ 1 }{ \delta^2 N^2 } 
	\quad\text{and}\quad
	x:= \frac{\alpha}{\sqrt{p-p^2}}. \]
	Let $X\sim {\rm B}(p,x)$ so that  $\sigma^2_{\ES_\alpha(X)}=1$ by Lemma \ref{lem:ES.bernoulli}.
	
	Let us start by collecting some trivial estimates on $p$.
	By assumption $N\delta^2\geq 20$ and $\delta^2\leq \alpha$, hence $N\geq \frac{20}{\alpha}$ and thus $p\leq\frac{1}{N}\leq\frac{\alpha}{20}\leq \frac{1}{100}$.
	Moreover, since $N\delta^2\geq 20$, $\sqrt{p}\leq \frac{1}{20}\delta$.
	We will use these facts later.
	In particular, let us note that since $p\leq\frac{\alpha}{2}$,   $F_X^{-1}$ is constant on $(1-2\alpha,1-\alpha/2)$ (see \eqref{eq:Binom.F}) and thus also $1$-Lipschitz.
	
	Define the empirical success probability $\widehat{p}_N= \frac{1}{N}\sum_{i=1}^N \eins_{X_i=x}$ so that the  empirical distribution $\widehat{F}_N$ of $X$ is equal to that of ${\rm B}(\widehat{p}_N,x)$.
	Consider the event
	\[ \mathcal{A}:=\left\{  \widehat{p}_N \geq \frac{1}{N} \right\}
	=\bigcup_{i=1}^N \{ X_i = x\}.\]
	Since $\P( X_i = x) = p $ and $p\leq \frac{1}{N}$, Lemma  \ref{lem:prob.exists} guarantees that 
	\[\P(\mathcal{A})
	\geq\frac{1}{4}Np 
	= \frac{1}{16 N\varepsilon^2}.\]
	It remains to show that in the event $\mathcal{A}$, $| \widehat{T}_N - \ES_\alpha(X) |\geq \frac{1}{2}\delta= \varepsilon\sigma_{\ES_\alpha(X)}$.
	
	To that end, we first compute $\ES_\alpha(X)$.
	By Lemma \ref{lem:ES.bernoulli} and since  $p\leq \alpha$, 
	\[\ES_\alpha(X)
	 = \frac{xp}{\alpha}
	= \frac{p}{\sqrt{p-p^2}}
	\leq  \frac{11}{10}\sqrt{p} , \]
	where the inequality follows because $p\leq\frac{1}{100}$ and therefore $\sqrt{p-p^2}\geq \frac{10}{11}\sqrt{p}$.
	
	On the other hand, in the event $\mathcal{A}$ we have $\widehat{p}_N\geq \frac{1}{N}$.
	Since $N\alpha\geq N\varepsilon^2\geq 1$, it follows again from Lemma \ref{lem:ES.bernoulli} that
	\[  \widehat{T}_N 
	= x\min\left\{ 1, \frac{\widehat{p}_N}{\alpha}   \right\}
	\geq x \min\left\{1,  \frac{1}{N\alpha}   \right\}
	 = \frac{x}{N\alpha}. \]
	Moreover, by definition of $p$ and $x$, 
	\[  \frac{x}{N\alpha} 
	=\frac{1}{N\sqrt{p-p^2}}
	\geq \frac{1}{N\sqrt{p}}
	=\delta.\]
	Therefore, using that $\sqrt{p}\leq\frac{1}{20}\delta$, in the event  $\mathcal{A}$, 
	\[    \widehat{T}_N -\ES_\alpha(X)
	\geq \delta - \frac{11 }{10} \sqrt{p}
	\geq \frac{\delta}{2} \]
	which completes the proof.
	\end{proof}

\subsection{On the continuity of the quantile functions}
\label{sec:continuity.inverse}

In this section we make rigorous the claim in the introduction that if $u\mapsto \VAR_u(X)$ is $L$-Lipschitz continuous, then the typical behavior of the estimation error scales like
\[ \left| \widehat{T}_N -  \ES_\alpha(X) \right|
=\mathcal{O}\left( \frac{\sigma_{\ES_\alpha(X)}}{ \sqrt N} \right) + \mathcal{O}\left( \frac{L}{ N} \right)  \]
in general.
In the next proposition, we focus on the (perhaps more surprising) $ \mathcal{O}( \frac{L}{ N} ) $-term; the $\mathcal{O}( \frac{\sigma_{\ES_\alpha(X)}}{\sqrt N} )$-term will be dealt with in Proposition \ref{prop:CLT.sharp}.

\begin{proposition}
\label{prop:inverse.continuity} 
	There are absolute constants $c_0,c_1,c_2, c_3>0$ such that the following holds.
	Let $L\geq 1$ and set  $\varepsilon\leq \min\{ c_0\sqrt{\alpha}, \frac{1}{L} \}$.
	Then there exists a random variable $X$ which satisfies
	\begin{enumerate}[label = (\alph*), leftmargin=2em]
	\item $|X|\leq 1$ almost surly,
	\item $F_X^{-1}$ is $L$-Lipschitz on $(0,1)$,
	\item $\sigma^2_{\ES_\alpha(X)}=0$,
	\end{enumerate}	 
	and for every $N\geq \frac{c_1}{\varepsilon^2}$, we have that
	\[ \P\left( \left| \widehat{T}_N -  \ES_\alpha(X) \right|
	\geq c_2 \varepsilon^2 L  \right)
	\geq \exp\left( - c_3 N \varepsilon^2 \right).\]
\end{proposition}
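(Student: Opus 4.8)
The plan is to construct $X$ as (a rescaled, truncated version of) a two-point or three-point distribution for which the $\ES_\alpha$-variance vanishes while the plug-in estimator is biased at the scale $\varepsilon^2 L$ with constant probability on a favorable event. The key observation driving the choice of $X$ is the following: from the representation of the linearised term $\mathcal{L}_j$ and error term $\mathcal{E}_j$ used in the proof of Theorem \ref{thm:ES}, the second-order contribution to $\widehat{T}_N - \ES_\alpha(X)$ is governed by $\frac{1}{\alpha}\int_{t_0}^{t_1}|F(t)-\widehat F_N(t)|\,dt$ near $t = F^{-1}(1-\alpha)$, and this is exactly of order $(\text{slope of }F^{-1}) \cdot (\text{fluctuation of }\widehat F_N)^2 \sim L/N$ when $F^{-1}$ has slope $L$ in a small window around $1-\alpha$. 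To make $\sigma^2_{\ES_\alpha(X)} = 0$ simultaneously, I would take $X$ supported so that $F^{-1}$ is \emph{flat} (constant) on all of $[1-\alpha, 1)$, i.e. $X$ has an atom at its upper end carrying all the mass above the $(1-\alpha)$-quantile; by the computation in Lemma \ref{lem:ES.bernoulli} (or its three-atom analogue) this forces $\sigma^2_{\ES_\alpha(X)} = 0$. The Lipschitz slope $L$ is then placed in the interval $(1-\alpha', 1-\alpha)$ for some $\alpha' $ slightly larger than $\alpha$: concretely, let $X$ take value $0$ with probability $1-\alpha'$, take a value near $0$ that ramps linearly (a small uniform piece of width $\sim 1/L$ in $x$-space occupying $u$-mass $\sim$ a constant multiple of $\alpha'-\alpha$ just below $1-\alpha$), and take a fixed value $x^\ast$ with probability $\alpha$. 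The precise design is: $F^{-1}$ constant on $[1-\alpha,1)$, $L$-Lipschitz everywhere, with a kink at $1-\alpha$, and everything scaled so $|X|\le 1$.

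The main steps, in order, would be: (1) write down $X$ explicitly with the structure above, choosing the window width and atom sizes in terms of $\alpha, L, \varepsilon$; verify (a) $|X|\le 1$ (rescale if necessary, using $\varepsilon L \le 1$), (b) $F_X^{-1}$ is $L$-Lipschitz on $(0,1)$, and (c) $\sigma^2_{\ES_\alpha(X)} = 0$ via the formula for $\sigma^2_{\ES_\alpha}$ — the double integral is supported on $\{t,s \ge F_X^{-1}(1-\alpha)\}$ where $F_X \equiv 1$, so the integrand vanishes. (2) Compute $\ES_\alpha(X) = \frac1\alpha\int_{1-\alpha}^1 F_X^{-1}(u)\,du = x^\ast$ (the value of the top atom, since $F_X^{-1}$ is constant $=x^\ast$ there). (3) Define the favorable event $\mathcal{A}$ on which the empirical quantile $\widehat F_N^{-1}$ near $1-\alpha$ lands on the \emph{ramp} rather than at $x^\ast$ — this happens precisely when the number of samples equal to the top atom is \emph{slightly less} than $\alpha N$, say at most $\alpha N - 1$ (or $\le \alpha N - cN\varepsil^2$), so that $\widehat F_N^{-1}(u)$ for $u$ just below $1$ sits a distance $\gtrsim \varepsilon^2 L$ below $x^\ast$. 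On this event a short computation gives $\widehat T_N \le \ES_\alpha(X) - c_2\varepsilon^2 L$. (4) Lower-bound $\P(\mathcal{A})$: this is a one-sided binomial lower-tail statement, $\P(\mathrm{Bin}(N,\alpha) \le \alpha N - \Theta(\sqrt{N\varepsilon^2 \cdot \alpha/\varepsilon^2}))$-type, which for deviations of order $\sqrt{N}\cdot\sqrt\varepsilon$... more carefully, to lose only a factor $\exp(-c_3 N\varepsilon^2)$ we need the shortfall of top-atom samples to be of order $\sqrt{\alpha N}\cdot\text{const}$ only when $N\varepsilon^2 \gtrsim \alpha N$, so instead I would engineer the ramp to have $u$-width a constant multiple of $\sqrt{\alpha}\,\varepsilon$ — then the relevant binomial deviation is $\Theta(\varepsilon\sqrt{N})$ standard deviations... let me reconsider: take the ramp to occupy $u$-mass $\Theta(\sqrt\alpha\,\varepsilon)$ just below $1-\alpha$ and $x$-width $\Theta(\sqrt\alpha\,\varepsilon/L \cdot L)$... the cleanest route is a standard binomial small-deviation / anticoncentration estimate: $\P(\mathrm{Bin}(N,\alpha) = k) \ge \exp(-C N\varepsilon^2)$ for $k$ within $\Theta(\varepsilon\sqrt{N})$ of $\alpha N$, via Stirling, exactly paralleling Lemma \ref{lem:prob.exists}'s role. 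Summing over the $\Theta(\varepsilon\sqrt N)$ favorable values of $k$ gives $\P(\mathcal A)\ge \exp(-c_3 N\varepsilon^2)$.

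I expect the main obstacle to be step (4): matching the exponent. The construction must be tuned so that the \emph{amount} by which $\widehat T_N$ must deviate to reach $c_2\varepsilon^2 L$ corresponds to a binomial fluctuation whose probability is exactly $\exp(-\Theta(N\varepsilon^2))$ — not larger (which would be too weak a lower bound) and plausibly achievable. Since the $x$-distance travelled along the ramp is (slope $\le L$) $\times$ ($u$-distance from the kink), to get an $x$-deviation of $\varepsilon^2 L$ we need a $u$-deviation of $\varepsilon^2$, i.e. the empirical count of top-atom samples must undershoot $\alpha N$ by $\Theta(\varepsilon^2 N)$; and $\P(\mathrm{Bin}(N,\alpha) \le \alpha N - \varepsilon^2 N)$ is, by a one-sided Bernstein/Chernoff lower bound (using $\varepsilon^2 N \lesssim \alpha N$, guaranteed by $\varepsilon \lesssim \sqrt\alpha$), bounded below by $\exp(-C\varepsilon^4 N/\alpha) \ge \exp(-C' N\varepsilon^2)$. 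The remaining care is purely bookkeeping: choosing the ramp so that conditionally on this undershoot the quantile genuinely lands in the ramp region (not below it), ensuring $\sigma^2 = 0$ is not spoiled by the ramp, and fixing the absolute constants $c_0,c_1,c_2,c_3$ consistently with the constraints $\varepsilon \le c_0\sqrt\alpha$, $\varepsilon L \le 1$, $N \ge c_1/\varepsilon^2$.
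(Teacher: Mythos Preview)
Your construction is essentially the paper's: a three-piece distribution with a bottom atom, a short linear ramp on which $F_X^{-1}$ has slope $L$, and a top atom of mass $\alpha$ so that $F_X^{-1}$ is constant on $[1-\alpha,1)$ and hence $\sigma^2_{\ES_\alpha(X)}=0$. The paper places the top atom at $0$, the bottom atom at $x_0=-L\delta$, and takes the ramp $u$-width to be $\delta=\tfrac{1}{40}\varepsilon\sqrt{\alpha}$. Its favorable event is that the \emph{bottom} atom overshoots, $\widehat F_N(x_0)\ge 1-\alpha+\delta$; on that event $\widehat F_N^{-1}(u)\le x_0$ for all $u\le 1-\alpha+\delta$, so
\[
\widehat T_N \;\le\; \frac{1}{\alpha}\int_{1-\alpha}^{1-\alpha+\delta} x_0\,du \;=\; -\frac{L\delta^2}{\alpha}\;=\;-\frac{L\varepsilon^2}{1600},
\]
and Feller's binomial anti-concentration (an overshoot of $2\delta$ for a Bernoulli of variance $\approx\alpha$) gives probability at least $\exp(-c\,\delta^2 N/\alpha)=\exp(-c'\varepsilon^2 N)$.

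Your step (4) has a genuine scaling error. With top-atom undershoot $\eta=\varepsilon^2$ you claim a $\widehat T_N$-deviation of order $L\varepsilon^2$ from ``$x$-distance $=L\times u$-distance''. But that product is only the \emph{pointwise} quantile displacement; to obtain $\widehat T_N-\ES_\alpha(X)$ you must integrate it over the affected $u$-interval (also of width $\eta$) and divide by $\alpha$, which yields a deviation of order $L\eta^2/\alpha=L\varepsilon^4/\alpha$, and this is $\ll L\varepsilon^2$ once $\varepsilon^2\ll\alpha$. The scaling that works is $\eta=\Theta(\varepsilon\sqrt{\alpha})$ --- the one you wrote down and then abandoned --- for which both the deviation $L\eta^2/\alpha=\Theta(L\varepsilon^2)$ and the anti-concentration exponent $\eta^2 N/\alpha=\Theta(\varepsilon^2 N)$ match the target. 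A secondary point: merely conditioning on the top atom undershooting does not control where on the ramp the empirical quantile lands; the paper's choice of conditioning on the bottom-atom overshoot pins $\widehat F_N^{-1}(u)\le x_0$ outright and removes that ambiguity.
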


\begin{remark}
	In the setting and notation of Proposition \ref{prop:inverse.continuity}: 
	Suppose that $L\geq  \frac{1}{c_0\sqrt\alpha}$, that $N\geq c_1 L^2$, and set $\varepsilon= \sqrt{c_1/N}$.
	Then, with probability at least $\exp(-c_1 c_3)$,
	\[ \left| \widehat{T}_N -  \ES_\alpha(X) \right|
	\geq c_1c_2 \frac{ L}{ N};\]
	and in particular $\mathbb{E}[| \widehat{T}_N -  \ES_\alpha(X) |] \geq c \frac{L}{N}$ for $c=c_1c_2\exp(-c_1 c_3)>0$.
\end{remark}

\begin{proof}[Proof of Proposition \ref{prop:inverse.continuity}]
	{\em Step 1:}
	Let  $\delta:=\frac{1}{40}\varepsilon\sqrt\alpha$, put $x_0: = -\delta L$, and set $X$ to have the following distribution:
	\[ 
	F_X(t) 
	=\begin{cases}
	0 & \text{if } t<x_0 ,\\
	1-\alpha - \delta + \frac{x_0-t}{x_0}\delta & \text{if } t\in[x_0,0) , \\
	1 & \text{if } t\geq 0	.
	\end{cases} \]
	In other words, $\P(X=x_0)=1-\alpha-\delta$, $\P(X=0)=\alpha$ and for $A\subset (x_0,0)$ we have that   $\P(X\in A) = \frac{\delta}{|x_0|} \,{\rm Leb}(A)$.
	Clearly
	\[F_X^{-1}(u)
	=\begin{cases}
	x_0 & \text{if } u\in(0,1-\alpha-\delta) ,\\
	x_0 - \frac{x_0}{\delta}(u-(1-\alpha-\delta))  & \text{if } u\in [1-\alpha-\delta, 1-\alpha) , \\
	0 & \text{if } u\in[1-\alpha,1),
	\end{cases} \]
	and in particular $F_X^{-1}$ is Lipschitz continuous on $(0,1)$ with constant $L=\frac{|x_0|}{\delta}$.
	Moreover, since $\varepsilon\leq \frac{1}{L}$, it follows that $|x_0|\leq \varepsilon L \leq 1$, thus $|X|\leq 1$ almost surely.

	Finally, since  $F_X^{-1}(u)=0$ for $u\in(1-\alpha,1)$, it follows from the definitions of $\ES_\alpha$ and $\sigma^2_{\ES_\alpha}$ that   $\ES_\alpha(X)=\sigma^2_{\ES_\alpha(X)}=0$.
	
	\vspace{0.5em}
\noindent
{\em Step 2:}
	We claim that there is an absolute constant $c$ such that
	\[ \P\left(\widehat{F}_N(x_0) \geq 1-\alpha +\delta\right)
	\geq c \exp\left(-\frac{1}{2} \varepsilon^2 N \right).\]	
	
	To that end, we shall use the following Binomial anti-concentration result due to Feller.
	It is a consequence of  \cite[Theorem 1]{feller1943generalization} that if $(Z_i)_{i=1}^N$ are i.i.d.\ random variables satisfying  that $|Z_i|\leq 1$ and  $N\geq \frac{400}{\V[Z]}$, then for every $\lambda \in ( 0, \frac{\V[Z]}{100})$, 
	\begin{align}
	\label{eq:Feller}
	 \P\left( \frac{1}{N}\sum_{i=1}^N Z_i \geq \E[Z]+\lambda \right)
	\geq c\exp\left(-\frac{\lambda^2N}{3  \V[Z]} \right).
	\end{align}
	We apply Feller's result to $Z_i=\eins_{(-\infty,x_0]}(X_i)$.
	If we assume that $\varepsilon \leq \sqrt{\alpha}$, then
	\[ \V[Z]=(1-\alpha-\delta)(\alpha+\delta)\in\left[\frac{\alpha}{2},2\alpha\right]\]
	by the definition of $\delta=\frac{1}{40}\varepsilon\sqrt\alpha$.
	Moreover, if $\varepsilon\leq\frac{1}{40}\sqrt{\alpha}$,  $\lambda:=2\delta$ satisfies that $\lambda \leq \frac{1}{100}\V[Z]$.
	Therefore, by \eqref{eq:Feller},
	\[ \P\left( \widehat{F}_N(x_0) \geq F(x_0)+2\delta \right)
	\geq c\exp\left(- \frac{ 8\delta^2 N}{ 3\alpha} \right)
	= c\exp\left(- \tilde{c}\varepsilon^2 N \right)\]
	for $\tilde{c}:=\frac{8}{3 \cdot 40^2 }$.
	Finally, if $\varepsilon^2 N\geq \frac{2}{\tilde{c}}\log(\frac{1}{c})$, then $c\exp\left(- \tilde{c}\varepsilon^2 N \right)\geq \exp(-2\tilde{c}\varepsilon^2N)$.
	
	\vspace{0.5em}
\noindent
{\em Step 3:}
	Fix a realization in the high probability event of step 2, i.e., such that $\widehat{F}_N(x_0) \geq 1-\alpha +\delta$.
	Then, by definition of the inverse function, 
	\[\widehat{F}_N^{-1}(u)\leq x_0 \quad\text{for all } u<1-\alpha+\delta.\]
	Moreover, clearly $\widehat{F}_N^{-1}(\cdot)\leq 0$, and therefore
	\[ \widehat{T}_N = \frac{1}{\alpha} \int_{1-\alpha}^1 \widehat{F}_N^{-1}(u) \,du \leq
	\frac{1}{\alpha} \int_{1-\alpha}^{1-\alpha+\delta} \widehat{F}_N^{-1}(u) \,du 
	\leq \frac{x_0\delta}{\alpha}.\]
	In particular, since $\ES_\alpha(X)=0$, 
	\[|\widehat{T}_N-\ES_\alpha(X)|
	\geq \frac{|x_0|\delta}{\alpha}
	= \frac{L\delta^2}{\alpha}
	=\frac{L\varepsilon^2}{1600},\]
	which concludes the claim.
\end{proof}

\subsection{No estimator can outperform the CLT-rates}	\label{sec:no.estimtor.better.than.clt}

We prove the claim from the introduction that 
\[ \P\left( \left| \widehat{T}_N - \ES_\alpha(X) \right| \gtrsim \varepsilon \sigma_{\ES_\alpha(X)} \right) 
	\geq \exp\left(- N\varepsilon^2\right).\]
As it happens, such a lower bound is valid not only for the plug-in estimator $\widehat{T}_N$, but for \emph{any} estimator:

\begin{proposition}
\label{prop:CLT.sharp}
	Let $\varepsilon\leq \frac{1}{2}\sqrt\alpha$ and let $\widehat{R}_N$ be any estimator.
	Then there exists a random variable $X$ for which $\sigma_{\ES_\alpha(X)}=1$ and $F_X^{-1}$ is $1$-Lipschitz on $[1-2\alpha,1-\frac{1}{2}\alpha]$ and 
	\[ \P\left( \left| \widehat{R}_N - \ES_\alpha(X) \right| \geq  \frac{1}{10} \varepsilon \sigma_{\ES_\alpha(X)} \right) 
	\geq \exp\left(- N\varepsilon^2\right).\]
\end{proposition}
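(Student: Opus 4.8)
The plan is to reduce the problem to a two-point testing argument. I would choose two scaled Bernoulli distributions $X \sim \mathrm{B}(p_0, x_0)$ and $X' \sim \mathrm{B}(p_1, x_1)$ with the same variance $\sigma^2_{\ES_\alpha} = 1$ but with $\ES_\alpha$ values separated by roughly $\frac{1}{5}\varepsilon$, so that no estimator can be within $\frac{1}{10}\varepsilon$ of both. Concretely, using Lemma \ref{lem:ES.bernoulli}, for $p \leq \alpha$ one has $\ES_\alpha(\mathrm{B}(p,x)) = xp/\alpha$ and $\sigma^2 = x^2(p-p^2)/\alpha^2$; fixing $\sigma = 1$ forces $x = \alpha/\sqrt{p-p^2}$ and hence $\ES_\alpha = p/\sqrt{p-p^2} \approx \sqrt{p}$ for small $p$. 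So taking $p_0 \asymp \varepsilon^2$ and $p_1 \asymp 4\varepsilon^2$ (say) gives two laws with $\sigma = 1$, with $F^{-1}$ constant — hence $1$-Lipschitz — on $[1-2\alpha, 1-\tfrac12\alpha]$ since $p_i \leq \alpha/2$, and with $|\ES_\alpha(X) - \ES_\alpha(X')| \gtrsim \varepsilon$.

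The key step is the standard Le Cam two-point bound: for any estimator $\widehat R_N$ and any event $\mathcal{B} \subseteq \mathbb{R}^N$,
\[
\P_{X^{\otimes N}}(\mathcal{B}) + \P_{X'^{\otimes N}}(\mathcal{B}^c) \geq 1 - \mathrm{TV}\big( X^{\otimes N}, X'^{\otimes N}\big) \geq \big(1 - \mathrm{TV}(X, X')\big)^N,
\]
using subadditivity of total variation under products (or, equivalently, that $1 - \mathrm{TV} \geq \prod(1 - \mathrm{TV})$ for product measures — this follows from the Hellinger/coupling characterization). For two Bernoulli-type laws supported (up to the atom at $0$) on disjoint points, $\mathrm{TV}(X, X') = \max\{p_0, p_1\} \asymp \varepsilon^2$ when $p_0, p_1 \asymp \varepsilon^2$, so $(1 - \mathrm{TV}(X,X'))^N \geq \exp(-c N \varepsilon^2)$ for a suitable constant, and after calibrating the numerology (choosing the constants in $p_0, p_1$ so that $c \leq 1$) one gets the bound $\exp(-N\varepsilon^2)$. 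Taking $\mathcal{B} := \{|\widehat R_N - \ES_\alpha(X)| < \frac{1}{10}\varepsilon\}$, the separation $|\ES_\alpha(X) - \ES_\alpha(X')| > \frac{2}{10}\varepsilon$ forces $\mathcal{B}^c \supseteq \{|\widehat R_N - \ES_\alpha(X')| < \frac{1}{10}\varepsilon\}$-complement type containments, so that under at least one of the two laws the failure probability is at least half of $\exp(-c N\varepsilon^2)$; then one picks that law as the asserted $X$, and absorbs the factor $\tfrac12$ into the exponent using $N\varepsilon^2 \geq $ some lower bound (or just re-tuning constants).

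The main obstacle — really the only delicate point — is the bookkeeping of constants: one must verify simultaneously that (i) $p_0, p_1 \leq \alpha/2$ so that $F^{-1}$ is genuinely constant on $[1-2\alpha, 1-\tfrac12\alpha]$ (this uses $\varepsilon \leq \tfrac12\sqrt\alpha$), (ii) the $\sqrt{p-p^2} \approx \sqrt p$ approximation is accurate enough that the $\ES_\alpha$-separation stays $\geq \tfrac{2}{10}\varepsilon$, and (iii) the exponent constant coming out of $(1 - \max\{p_0,p_1\})^N$ combined with the loss of a factor $\tfrac12$ (from choosing the worse of the two hypotheses) still fits under the clean bound $\exp(-N\varepsilon^2)$ — which should be comfortable since $\max\{p_0, p_1\}$ can be made as small as a fixed small multiple of $\varepsilon^2$. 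I expect no conceptual difficulty beyond this calibration, since everything rests on Lemma \ref{lem:ES.bernoulli} and the elementary total-variation tensorization inequality.
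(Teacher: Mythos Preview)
Your proposal is correct and follows essentially the same approach as the paper: two scaled Bernoulli laws with $\sigma_{\ES_\alpha}=1$, separated $\ES_\alpha$ values of order $\varepsilon$, and a two-point indistinguishability argument. The paper phrases the last step via an explicit coupling (choosing $p=\varepsilon^2$, $q=\varepsilon^2/2$ and working on the event $\{X_i=Y_i\text{ for all }i\}$), whereas you phrase it via Le Cam and total variation tensorization; these are the same device, since the maximal coupling realizes $\P(X\neq Y)=\mathrm{TV}(X,X')$.
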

\begin{proof}
	Set
	\[(p,x):=\left( \varepsilon^2, \frac{\alpha}{\sqrt{\varepsilon^2-\varepsilon^4}}\right),
	\quad\text{and}\quad
	(q,y):=\left( \frac{\varepsilon^2}{2}, \frac{\alpha}{\sqrt{\varepsilon^2/2-\varepsilon^4/4}} \right).\]
	Moreover, let  $X\sim {\rm B}(p,x)$ and $Y\sim {\rm B}(q,y)$.
	It follows from Lemma \ref{lem:ES.bernoulli} that $\sigma^2_{\ES_\alpha(X)}=\sigma^2_{\ES_\alpha(Y)}=1$.
	By the assumption that $\varepsilon\leq \frac{1}{2}\sqrt\alpha$, we have that $p,q\leq \frac{1}{4}\alpha$.
	In particular, the two functions $F_X^{-1}$ and $F_Y^{-1}$ are constant on $(1-2\alpha, 1-\frac{1}{2}\alpha)$, and thus $1$-Lipschitz.
	Moreover, by  Lemma \ref{lem:ES.bernoulli},
	\[  \ES_\alpha(X)
	=\min\left\{ \frac{xp}{\alpha} ,x  \right\}
	=\frac{xp}{\alpha} 
	=\frac{p}{\sqrt{p-p^2}}
	\geq \frac{p}{\sqrt p}
 	= \varepsilon.\]
	In a similar manner, 
	\[  \ES_\alpha(Y)
	=\min\left\{ \frac{yq}{\alpha} ,y \right\}
	=\frac{q}{\sqrt{q-q^2}}\]
	and since $q= \frac{\varepsilon^2}{2}\leq \frac{1}{8}$ it holds that $q-q^2\geq \frac{7}{8}q$ and therefore
	\[\ES_\alpha(X) - \ES_\alpha(Y)
	\geq \varepsilon - \sqrt\frac{8}{7} \sqrt{q}
	\geq \frac{1}{5}\varepsilon.\]
 	
 	We couple the random variables $X$ and $Y$ such that $\P(X\neq Y)=\frac{1}{2}\varepsilon^2$, and let $(X_i,Y_i)_{i=1}^N$ be an i.i.d.\, sample of the random vector $(X,Y)$.
	Then, setting
	\[ \mathcal{A}:=\left\{  X_i=Y_i \text{ for all } i=1,\dots, N\right\},
	\]
	it follows that
	\[ \P(\mathcal{A})=\left(1-\frac{\varepsilon^2}{2}\right)^N
	=\exp\left( N\log\left(1-\frac{\varepsilon^2}{2}\right)\right)
	\geq \exp(-N\varepsilon^2),\]
	where the inequality holds because $\varepsilon^2\leq\frac{1}{4}$  and $\log(1-x)\geq -2x$ for $x\in[0,\frac{1}{8}]$.

	Now let $\widehat{R}_N\colon \R^N\to\R$ be any estimator.
	Then, in the event $\mathcal{A}$,
	\[ \widehat{R}_N\left( (X_i)_{i=1}^N \right)
	=\widehat{R}_N\left( (Y_i)_{i=1}^N \right).\]
	On the other hand, since $|\ES_\alpha(X)-\ES_\alpha(Y)|\geq \frac{1}{5}\varepsilon$, the (reverse) triangle inequality implies that it must either hold that 
	\[\left| \widehat{R}_N\left( (X_i)_{i=1}^N \right)-\ES_\alpha(X) \right|
	\geq  \frac{\varepsilon}{10}
	\text{ or }
	\left| \widehat{R}_N\left( (Y_i)_{i=1}^N \right)-\ES_\alpha(Y) \right|
	\geq  \frac{\varepsilon}{10},\]
	which completes the proof.
\end{proof}

\subsection{The level of adversarial corruption}

Suppose that an adversary can modify at most $K$ out of the $N$ samples, where $K\lesssim \alpha N$.
In Theorem \ref{thm:adversarial} (setting $\varepsilon \sim \sqrt{K/N}$) we have shown that  the estimator $\widehat{S}_N$ satisfies
\[ \P \left( \left|\widehat{S}_N-\ES_\alpha(X) \right|\gtrsim \sqrt{\frac{K}{N}} \sigma_{\ES_\alpha(X)} +  \frac{K}{N} L \right)
\leq \exp(-cK).\]
In the next proposition we focus on the $\sqrt{\frac{K}{N}} $ term and show that the given dependence is minimax optimal.

\begin{proposition}
\label{prop:adversiarl}
	Let $\alpha\leq \frac{2}{10}$, let $N\geq 10$, let $K\leq\frac{1}{2}\alpha N$, and let  $\widehat{R}_N$ be any estimator.
	Then there exists a random variable $X$ for which $\sigma_{\ES_\alpha(X)}=1$ and $F_X^{-1}$ is $1$-Lipschitz on $[1-2\alpha,1-\frac{1}{2}\alpha]$ and the following holds:
	with probability at least $1-2\exp(- \frac{K}{4})$,  an adversary can modify $K$ of the $N$ samples in a way such that 
	\[ \left| \widehat{R}_N - \ES_\alpha(X) \right| \geq  \frac{1}{20} \sqrt\frac{K}{N} \sigma_{\ES_\alpha(X)}  .\]
\end{proposition}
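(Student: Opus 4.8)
The plan is to mimic the two-point construction from the proof of Proposition \ref{prop:CLT.sharp}, but instead of relying on a lucky coincidence between two samples (an event of exponentially small probability $\exp(-N\varepsilon^2)$), we exploit the adversary's power to \emph{force} the two samples to coincide on all but a controlled number of coordinates, which only requires a typical event (probability $1-2\exp(-K/4)$). Concretely, I would take two scaled Bernoulli distributions $X\sim {\rm B}(p,x)$ and $Y\sim {\rm B}(q,y)$ chosen so that $\sigma^2_{\ES_\alpha(X)}=\sigma^2_{\ES_\alpha(Y)}=1$ (via Lemma \ref{lem:ES.bernoulli}, this forces $x=\alpha/\sqrt{p-p^2}$, $y=\alpha/\sqrt{q-q^2}$), with success probabilities of order $K/N$: say $p\sim \tfrac{2K}{N}$ and $q\sim \tfrac{K}{N}$. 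The constraint $K\leq \tfrac12\alpha N$ guarantees $p,q\leq \tfrac14\alpha$ (after fixing absolute constants), so both inverse quantile functions are constant on $(1-2\alpha, 1-\tfrac12\alpha)$ and hence $1$-Lipschitz there, and by Lemma \ref{lem:ES.bernoulli} one computes $\ES_\alpha(X)-\ES_\alpha(Y) = \tfrac{p}{\sqrt{p-p^2}} - \tfrac{q}{\sqrt{q-q^2}} \gtrsim \sqrt{K/N}$, which is the gap we want to turn into an error of order $\tfrac{1}{20}\sqrt{K/N}$ by the reverse triangle inequality.

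The core of the argument is the coupling/corruption step. Couple $X$ and $Y$ on the same probability space so that $X=Y$ except on the event $\{X=x\}\cup\{Y=y\}$; since both success probabilities are $O(K/N)$, we have $\P(X\neq Y)\lesssim K/N$, so among the $N$ i.i.d.\ pairs $(X_i,Y_i)_{i=1}^N$ the number $M:=|\{i: X_i\neq Y_i\}|$ is a Binomial random variable with mean $\lesssim K$. Choosing the constants so that $\E[M]\leq K/2$ (say), a one-sided Chernoff/Hoeffding bound for the Binomial gives $\P(M > K)\leq \exp(-cK)$ for an absolute $c$ — and a careful choice of constants yields the stated $2\exp(-K/4)$. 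On the complementary event $\{M\leq K\}$, the adversary — who sees the true data and knows the estimator — can modify exactly those (at most $K$) coordinates of the $X$-sample where $X_i\neq Y_i$, replacing $X_i$ by $Y_i$; the resulting corrupted $X$-sample is \emph{identical} to the $Y$-sample. Hence $\widehat{R}_N$ applied to the corrupted $X$-data equals $\widehat{R}_N$ applied to the genuine $Y$-data. Since $|\ES_\alpha(X)-\ES_\alpha(Y)|\geq \tfrac{1}{10}\sqrt{K/N}$ (after fixing constants), the reverse triangle inequality forces at least one of the two distances $|\widehat{R}_N(\text{corrupted }X\text{-data}) - \ES_\alpha(X)|$ or $|\widehat{R}_N(Y\text{-data}) - \ES_\alpha(Y)|$ to be $\geq \tfrac{1}{20}\sqrt{K/N}$; in the first scenario the adversary has already achieved the goal against $X$, and in the second scenario we simply rename $Y$ as the adversary's target distribution and note the adversary makes \emph{zero} modifications (the genuine $Y$-sample is $\leq K$ corruptions away from itself trivially). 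Either way, one of $X$, $Y$ is a distribution with the claimed properties for which the adversary forces error $\geq \tfrac{1}{20}\sqrt{K/N}\,\sigma_{\ES_\alpha}$ with probability $\geq 1-2\exp(-K/4)$.

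The main obstacle — and the place where I would spend the most care — is bookkeeping the absolute constants so that three inequalities hold simultaneously under the hypotheses $\alpha\leq \tfrac{2}{10}$, $N\geq 10$, $K\leq \tfrac12\alpha N$: (i) $p,q\leq\tfrac14\alpha$ so that the Lipschitz-on-$[1-2\alpha,1-\tfrac12\alpha]$ and $\sigma^2_{\ES_\alpha}=1$ normalizations are simultaneously consistent (this also needs $p-p^2,q-q^2$ bounded away from $0$, i.e.\ $p,q$ not too large, which is where $K\leq\tfrac12\alpha N$ enters); (ii) the gap estimate $\tfrac{p}{\sqrt{p-p^2}}-\tfrac{q}{\sqrt{q-q^2}}\geq \tfrac{1}{10}\sqrt{K/N}$, which follows from $p\approx 2q$ together with $p,q$ small so that $\sqrt{p-p^2}\approx\sqrt p$; and (iii) the Binomial tail $\P(M>K)\leq 2\exp(-K/4)$, for which I would apply the multiplicative Chernoff bound $\P(M \geq 2\E M)\leq \exp(-\E M/3)$ after arranging $\E M\leq K/2$, possibly at the cost of tuning the ratio $p/q$ and the coupling slightly. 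None of these is deep, but getting a clean set of constants that makes the final $\tfrac{1}{20}$ and $2\exp(-K/4)$ come out exactly requires threading the needle; everything else is a routine application of Lemma \ref{lem:ES.bernoulli} and a standard concentration inequality for the Binomial.
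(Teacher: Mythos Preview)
Your proposal is correct and follows essentially the same route as the paper's proof: the paper also sets $\varepsilon:=\sqrt{K/(4N)}$, reuses the two Bernoulli distributions $X\sim{\rm B}(\varepsilon^2,x)$, $Y\sim{\rm B}(\varepsilon^2/2,y)$ from Proposition \ref{prop:CLT.sharp} coupled with $\P(X\neq Y)=\varepsilon^2/2$, applies Bernstein's inequality to show that $|\{i:X_i\neq Y_i\}|\leq K$ with probability at least $1-2\exp(-K/4)$, lets the adversary overwrite those coordinates, and concludes via the reverse triangle inequality from the gap $|\ES_\alpha(X)-\ES_\alpha(Y)|\geq \tfrac{1}{5}\varepsilon=\tfrac{1}{10}\sqrt{K/N}$. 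Your only deviations are cosmetic (slightly different choices of $p,q$ and Chernoff in place of Bernstein), and you correctly identified the constant-bookkeeping as the only place requiring care.
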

\begin{proof}
	Set $\varepsilon:=\sqrt{\frac{K}{4N}}$.
	We use the same notation as the proof of Proposition \ref{prop:CLT.sharp}; in particular $X$ and $Y$ are the Bernoulli random variables that appear in the proof of Proposition \ref{prop:CLT.sharp} so that  $\P(X\neq Y)=\frac{1}{2}\varepsilon^2$.
	
	Set
	\[U:=|\{i \in \{1,\dots,N\}: X_i\neq Y_i\}|\]
	so that  $\E[U]=\frac{1}{2}\varepsilon^2N$.
	It follows from Bernstein's inequality that for every $\lambda\geq 0$, with probability at least $1-2\exp(-\lambda)$,
	\[  \frac{U}{N} - \frac{\varepsilon^2}{2}
	= \frac{1}{N} \sum_{i=1}^N \left( \eins_{X_i\neq Y_i} - \E[\eins_{X_i\neq Y_i}] \right)
	\leq \sqrt \frac{2 \V[ \eins_{X\neq Y} ] \lambda }{N} + \frac{2\lambda}{3N} .\]
	In particular, noting that $\V[ \eins_{X\neq Y} ]\leq 2\varepsilon^2$ and applying Bernstein's inequality to $\lambda=\varepsilon^2 N$, it follows that with probability $1-2\exp(-\varepsilon^2 N)$, one has that $U\leq 4\varepsilon^2 N=K$.
	
	Further, on the event $\{U \leq  K\}$, the adversary can modify the sample of $X$ in a way that it is equal to the sample of $Y$.
	Finally,   we have seen in the proof of Proposition \ref{prop:CLT.sharp} that $|\ES_\alpha(X)-\ES_\alpha(Y)|\geq \frac{1}{5}\varepsilon$, from which the statement readily follows.
\end{proof}
	
\section{Explicit computations in case of the Pareto distribution}
\label{sec:Pareto}

Recall that $X$ is said to have a Pareto distribution with parameters $\lambda>0$ and $x_0\in\mathbb{R}$ if $\P(X\leq t)=1- (\frac{x_0}{t})^\lambda 1_{ [x_0,\infty) }(t)$ for $t\in\mathbb{R}$.
Let us first prove the finite sample rate mentioned in the introduction.

\begin{lemma}\label{lem:paretolower}
	Assume that $X$ has a Pareto distribution with parameters $\lambda>2$ and $x_0>0$.
	Then, for all $N \geq \frac{1}{\alpha}(\frac{\lambda-1}{\lambda})^\frac{\lambda}{\lambda-1}$, it holds that
	\[ 
	\P\left( \left| \widehat{T}_N-  \ES_{\alpha}(X) \right| \geq \varepsilon \sigma_{\ES_\alpha} \right)
	\geq \frac{1}{2} \left(\frac{x_0}{\alpha \left(\ES_{\alpha}(X) + \varepsilon \sigma_{\ES_\alpha}\right)}\right)^{\lambda} \frac{1}{N^{\lambda-1}}.
	\]
\end{lemma}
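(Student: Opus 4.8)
The plan is to exploit the fact that the plug-in estimator $\widehat{T}_N$ overshoots whenever the empirical sample contains an unusually large observation. Concretely, I would condition on the event that the maximum of the sample $X_{(N)}:=\max_i X_i$ is so large that, even though it only contributes weight $\tfrac1N$ to $\widehat F_N$, it forces $\widehat T_N$ to exceed $\ES_\alpha(X)+\varepsilon\sigma_{\ES_\alpha}$. Since $\widehat T_N = \tfrac1\alpha\int_{1-\alpha}^1 \widehat F_N^{-1}(u)\,du$ and the largest order statistic $X_{(N)}$ occupies the top block $(1-\tfrac1N,1]$ of the empirical quantile function, one has the crude lower bound $\widehat T_N \geq \tfrac{1}{\alpha}\cdot\tfrac{1}{N}\,X_{(N)}$ (dropping all other nonnegative contributions, which is legitimate as $X\geq x_0>0$). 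Hence $\widehat T_N \geq \ES_\alpha(X)+\varepsilon\sigma_{\ES_\alpha}$ holds as soon as $X_{(N)} \geq \alpha N\big(\ES_\alpha(X)+\varepsilon\sigma_{\ES_\alpha}\big)=:M$. The condition $N\geq \tfrac1\alpha(\tfrac{\lambda-1}{\lambda})^{\lambda/(\lambda-1)}$ should be exactly what is needed to ensure $M\geq x_0$, so that the Pareto tail formula applies at $M$ (and also, presumably, to guarantee $\tfrac1{N\alpha}X_{(N)}\le \ES_\alpha(X)+\varepsilon\sigma_{\ES_\alpha}$ is the binding inequality rather than the truncation at a smaller quantile mattering).

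The second step is a standard maximum-of-i.i.d.\ lower bound: for any threshold $M\ge x_0$,
\[
\P\big(X_{(N)}\ge M\big) = 1-\big(1-\P(X\ge M)\big)^N = 1-\big(1-(x_0/M)^\lambda\big)^N \ge 1-\exp\!\big(-N(x_0/M)^\lambda\big).
\]
Under the standing hypothesis the exponent $N(x_0/M)^\lambda$ is small, and using $1-e^{-t}\ge t/2$ for $t\in[0,1]$ (the elementary inequality already invoked in the spirit of Lemma \ref{lem:prob.exists}) gives
\[
\P\big(X_{(N)}\ge M\big) \ge \tfrac12\,N\,(x_0/M)^\lambda = \tfrac12\left(\frac{x_0}{\alpha(\ES_\alpha(X)+\varepsilon\sigma_{\ES_\alpha})}\right)^{\!\lambda}\frac{1}{N^{\lambda-1}}.
\]
Combining with the deterministic implication $\{X_{(N)}\ge M\}\subseteq\{\widehat T_N-\ES_\alpha(X)\ge \varepsilon\sigma_{\ES_\alpha}\}\subseteq\{|\widehat T_N-\ES_\alpha(X)|\ge\varepsilon\sigma_{\ES_\alpha}\}$ yields the claim.

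The only genuinely delicate point is the first step: verifying that on the event $\{X_{(N)}\ge M\}$ one really does have $\widehat T_N\ge \ES_\alpha(X)+\varepsilon\sigma_{\ES_\alpha}$. One must check that the linearly-interpolated (or left-continuous) empirical quantile function assigns the top order statistic mass at least $\tfrac1N$ inside $(1-\alpha,1]$ — which holds since $\alpha<\tfrac12$ so $\tfrac1N$-width interval near $1$ lies inside $(1-\alpha,1)$ once $N$ is moderately large — and that all remaining integrand values $\widehat F_N^{-1}(u)$ for $u\in(1-\alpha,1)$ are nonnegative, which is immediate because the Pareto law is supported on $[x_0,\infty)\subseteq(0,\infty)$. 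I expect the role of the hypothesis $N\ge\tfrac1\alpha(\tfrac{\lambda-1}{\lambda})^{\lambda/(\lambda-1)}$ to be precisely to make $N(x_0/M)^\lambda\le1$ (equivalently $M\ge x_0$ after optimizing), so that the inequality $1-e^{-t}\ge t/2$ may be applied; tracking this is routine algebra with the explicit value of $\ES_\alpha(X)=\tfrac{\lambda}{\lambda-1}x_0\alpha^{-1/\lambda}$ for the Pareto distribution.
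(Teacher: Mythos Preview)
Your proposal is correct and follows essentially the same route as the paper: the paper also reduces to the event that some sample exceeds $\alpha N t$ with $t=\ES_\alpha(X)+\varepsilon\sigma_{\ES_\alpha}$, then invokes the elementary bound $1-(1-p)^N\geq \tfrac12 Np$ for $Np\leq 1$ (this is Lemma~\ref{lem:prob.exists} in the paper), and finally checks via the explicit formula $\ES_\alpha(X)=\tfrac{x_0\lambda}{\alpha^{1/\lambda}(\lambda-1)}$ that the hypothesis on $N$ guarantees $N(x_0/M)^\lambda\leq 1$. Your write-up is in fact slightly more explicit about why $\widehat T_N\geq \tfrac{1}{\alpha N}X_{(N)}$ (nonnegativity of the remaining order statistics) than the paper's one-line ``observe that''.
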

\begin{proof}
	Observe that for all $t >0$,
	\begin{align*}
	\P\left( \widehat{T}_N \geq t \right)
	&\geq \P\left( X_i\geq  \alpha N t \text{ for some } 1\leq i\leq N \right)
	=:(\ast).
	\end{align*}
	Moreover,  by the definition of the Pareto distribution, $\P( X\geq \alpha N t ) =(\frac{ x_0 }{\alpha t N} )^\lambda$.
	If $t>0$ is such that  $N^{\lambda-1} \geq (\frac{x_0}{\alpha t})^\lambda$, then $(\frac{ x_0 }{\alpha t N} )^\lambda\leq \frac{1}{N}$ and we may apply Lemma \ref{lem:prob.exists}, which yields
	\[ (\ast) \geq \frac{1}{2} \frac{ x_0^\lambda }{\alpha^\lambda t^\lambda N^{\lambda-1}}. \]
	Therefore, the proof is completed if we may apply the previous estimate to $t = \ES_{\alpha}(X) + \varepsilon \sigma_{\ES_\alpha}$ -- in other words, if this choice of $t$ satisfies that  $N^{\lambda-1} \geq \left(\frac{x_0}{\alpha t}\right)^{\lambda}$.
	To that end, it suffices to note that
	\[ \ES_{\alpha}(X) = \frac{x_0 \lambda}{\alpha^{1/\lambda}(\lambda-1)},
	\]
	see, e.g, \cite[Proposition 2]{norton2021calculating}; thus the desired estimate $N^{\lambda-1} \geq \left(\frac{x_0}{\alpha t}\right)^{\lambda}$ follows from the assumption that $N^{\lambda-1} \geq \frac{1}{\alpha^{\lambda-1}}\left(\frac{\lambda -1 }{\lambda}\right)^{\lambda}$.
\end{proof}

\begin{lemma}\label{lem:paretosigma}
	Assume that $X$ has a Pareto distribution with parameters $\lambda > 2$ and $x_0>0$.
	Further set  $r := \frac{\lambda+1}{\lambda}$ and let $\alpha\leq \frac{1}{2}$.
	Then the following hold.
	\begin{enumerate}[label = (\roman*), leftmargin=2em]
	\item $F_X^{-1}$ is Lipschitz continuous on $[1-2\alpha,1-\alpha/2]$ with constant
	\[ L=\frac{x_02^r}{\lambda \alpha^r}.\]
	\item 
	We have that 
	\[
	\frac{x_0^2 \alpha^{1-2 r}}{2\lambda^2(3-2r)}
	\leq 
	\sigma_{\ES_\alpha(X)}^2 
	\leq \frac{2x_0^2 \alpha^{1-2 r}}{\lambda^2(r-1)(3-2r)}.
	\]
	\end{enumerate}
	\end{lemma}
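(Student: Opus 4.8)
The plan is to exploit the explicit form of the Pareto quantile function. Since $\P(X \le t) = 1 - (x_0/t)^\lambda$ for $t \ge x_0$, one inverts directly to get $F_X^{-1}(u) = x_0 (1-u)^{-1/\lambda}$ for $u \in (0,1)$. For part (i), I would differentiate: $\frac{d}{du} F_X^{-1}(u) = \frac{x_0}{\lambda}(1-u)^{-1/\lambda - 1} = \frac{x_0}{\lambda}(1-u)^{-r}$, where $r = \frac{\lambda+1}{\lambda}$. This derivative is increasing in $u$, so on the interval $[1-2\alpha, 1-\alpha/2]$ it is maximised at the right endpoint $u = 1-\alpha/2$, where $1-u = \alpha/2$. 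Substituting gives the Lipschitz constant $L = \frac{x_0}{\lambda}(\alpha/2)^{-r} = \frac{x_0 2^r}{\lambda \alpha^r}$, which is exactly the claimed value. This step is routine.

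For part (ii), the work is in estimating $\sigma^2_{\ES_\alpha(X)}$. I would first rewrite the double integral defining $\sigma^2_{\ES_\alpha(X)}$ using the substitution $t = F_X^{-1}(u)$, $s = F_X^{-1}(v)$, turning it into an integral over $u, v \in (1-\alpha, 1)$. Concretely, with $\psi' = -\frac1\alpha \eins_{[1-\alpha,1]}$, one has (as in the proof of Lemma \ref{lem:linear})
\[
\sigma^2_{\ES_\alpha(X)} = \frac{1}{\alpha^2} \int_{F_X^{-1}(1-\alpha)}^\infty \int_{F_X^{-1}(1-\alpha)}^\infty \big(F_X(\min\{t,s\}) - F_X(t)F_X(s)\big)\,dt\,ds.
\]
Changing variables $t \mapsto u = F_X(t)$ gives $dt = (F_X^{-1})'(u)\,du = \frac{x_0}{\lambda}(1-u)^{-r}\,du$, and the integrand becomes $\min\{u,v\} - uv$. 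By symmetry this equals $2 \int\!\!\int_{1-\alpha < u \le v < 1} (u(1-v))\, (F_X^{-1})'(u)(F_X^{-1})'(v)\, du\, dv$. Substituting further $u = 1-a$, $v = 1-b$ with $0 < b \le a < \alpha$, the integrand $u(1-v) = (1-a)b$, and the measure factors are $\frac{x_0^2}{\lambda^2} a^{-r} b^{-r}$, so
\[
\sigma^2_{\ES_\alpha(X)} = \frac{2 x_0^2}{\alpha^2 \lambda^2} \int_0^\alpha \int_0^a (1-a)\, b\, a^{-r} b^{-r}\, db\, da = \frac{2 x_0^2}{\alpha^2 \lambda^2} \int_0^\alpha (1-a) a^{-r} \Big(\int_0^a b^{1-r}\,db\Big) da.
\]
The inner integral is $\frac{a^{2-r}}{2-r}$ (note $1 < r < 3/2$ since $\lambda > 2$, so $2 - r > 1/2 > 0$), leaving $\frac{2 x_0^2}{\alpha^2 \lambda^2 (2-r)} \int_0^\alpha (1-a)\, a^{2-2r}\, da$, and here $2 - 2r = -\frac{2}{\lambda} \in (-1, 0)$ so the integral converges. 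For the upper bound use $1 - a \le 1$, giving $\int_0^\alpha a^{2-2r}\,da = \frac{\alpha^{3-2r}}{3-2r}$; combined with $2 - r \ge r - 1$ (equivalently $r \le 3/2$, true) one obtains $\sigma^2_{\ES_\alpha(X)} \le \frac{2x_0^2 \alpha^{1-2r}}{\lambda^2 (r-1)(3-2r)}$. For the lower bound use $1 - a \ge 1 - \alpha \ge 1/2$ on $(0,\alpha)$, giving $\int_0^\alpha (1-a) a^{2-2r}\,da \ge \frac{\alpha^{3-2r}}{2(3-2r)}$, and bounding $\frac{1}{2-r} \ge 1$ (i.e. $r \le 1$)... here I need to be careful, since $r > 1$; instead bound $\frac{2}{\lambda^2(2-r)} \ge \frac{1}{\lambda^2}$ using $2 - r \le 2$, yielding $\sigma^2_{\ES_\alpha(X)} \ge \frac{x_0^2 \alpha^{1-2r}}{2\lambda^2(3-2r)}$, matching the claim.

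The main obstacle is bookkeeping the constants correctly through the two changes of variables and then sharpening the elementary integral bound $\int_0^\alpha (1-a) a^{2-2r}\,da$ from above and below in a way that produces precisely the stated constants $\frac{1}{2\lambda^2(3-2r)}$ and $\frac{2}{\lambda^2(r-1)(3-2r)}$; in particular I must verify that the algebraic inequalities relating $2-r$, $r-1$, and numerical constants hold throughout the admissible range $r \in (1, 3/2)$ (equivalently $\lambda \in (2,\infty)$), and double-check the convergence conditions $2 - r > 0$ and $3 - 2r > 0$ which are what force the hypothesis $\lambda > 2$.
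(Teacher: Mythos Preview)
Your proof is correct and follows essentially the same approach as the paper: both differentiate the explicit quantile function for part (i), and for part (ii) both perform the change of variables $t=F^{-1}(u)$, $s=F^{-1}(v)$ to rewrite $\sigma^2_{\ES_\alpha(X)}$ as an integral over $(1-\alpha,1)^2$ with integrand $(\min\{u,v\}-uv)\,(1-u)^{-r}(1-v)^{-r}$, then bound this integral by elementary estimates.

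The bookkeeping differs slightly: the paper keeps the variables $(u,v)$, splits the inner integral into $A_v=\int_{1-\alpha}^v(\cdots)$ and $B_v=\int_v^1(\cdots)$, computes $B_v$ exactly, and for the upper bound shows $A_v\le \frac{2-r}{(r-1)v}B_v$. You instead exploit the symmetry in $(u,v)$ to restrict to $u\le v$, then substitute $(a,b)=(1-u,1-v)$ to reduce everything to the single one-dimensional integral $\int_0^\alpha (1-a)\,a^{2-2r}\,da$, which you bound above and below via $\tfrac12\le 1-a\le 1$. Your route is marginally more direct and makes the role of the constants $2-r$ and $3-2r$ transparent; the paper's $A_v/B_v$ split achieves the same bounds with essentially the same inequalities (e.g.\ both use $2-r\ge r-1$ for the upper bound). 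Your momentary hesitation about ``$\tfrac{1}{2-r}\ge 1$'' is correctly self-corrected: the needed inequality is only $2-r\le 2$, which is trivial.
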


\begin{remark}
	Recall that the error for the estimation of the expected shortfall risk scales like 
	\begin{align}
	\label{eq:final.remark}
	|\widehat{S}_N- \ES_\alpha(X)|
	=\mathcal{O}(\varepsilon \sigma_{\ES_\alpha(X)}) + \mathcal{O}(\varepsilon^2 L),
	\end{align}
	see Theorem \ref{thm:ES}.
	Hence, the ratio $\frac{\sigma_{\ES_\alpha}}{L}$ is the threshold such that for all $\varepsilon> \frac{\sigma_{\ES_\alpha}}{L}$ the second error term in \eqref{eq:final.remark} is the dominant one, and for all $\varepsilon\leq  \frac{\sigma_{\ES_\alpha}}{L}$  the first term in \eqref{eq:final.remark} is the dominant one; see also the discussion proceeding Corollary \ref{cor:regimeswitch}.
	
	Lemma \ref{lem:paretosigma} shows that for the Pareto distribution, $\frac{\sigma_{\ES_\alpha}}{L} \geq \frac{ \sqrt{\alpha} }{ 2^r \sqrt{2}} \geq \frac{1}{4} \sqrt{\alpha}$, i.e.\ for all $\varepsilon\leq \frac{1}{4} \sqrt{\alpha}$, the $\mathcal{O}(\varepsilon\sigma_{\ES_\alpha(X)})$ term  in \eqref{eq:final.remark} is the dominant one.
\end{remark}
	
\begin{proof}[Proof of Lemma \ref{lem:paretosigma}]
	We start by proving the first claim.
	To that end, write $F=F_X$, denote by $f=F'$ the density of $X$, and note that
	\begin{align}
	\label{eq:F.inverse.diff}
	\frac{d}{du} F^{-1}(u) 
	=\frac{1}{f(F^{-1}(u))}.
	\end{align}
	It is straightforward to verify that  $f(t) = \lambda x_0^\lambda  t^{-\lambda -1}$ for $t\geq x_0$ and  that $F^{-1}(u) =  x_0(1-u)^{-1/\lambda}$, from which the first claim readily follows.
	
	As for the second claim, we assume that $x_0=1$ for simpler notation.
	By definition of $	\sigma^2_{\ES_\alpha}$ and by a change of variables
	\begin{align*}
	\sigma^2_{\ES_\alpha(X)}
	&=\frac{1}{\alpha^2} \int_{F^{-1}(1-\alpha)}^\infty\int_{F^{-1}(1-\alpha)}^\infty F(\min\{t, s\}) - F(t)F(s)\,dtds\\
	&=\frac{1}{\alpha^2} \int_{1-\alpha}^1 \int_{1-\alpha}^1  \frac{ \min\{u, v\} -uv}{\lambda^2 (1-u)^{\frac{\lambda+1}{\lambda}}  (1-v)^{\frac{\lambda+1}{\lambda}} } \, du dv.
	\end{align*}
	Indeed,  we used the substitution $u=F(t)$  (hence  $t=F^{-1}(u)$ and $dt=\frac{du}{f(F^{-1}(u))}$) and similarly $v=F(s)$.
	Recall that $r=\frac{\lambda+1}{\lambda}$ and set, for every $v\in[1-\alpha,1)$,
	\[ 
	A_v:=\int_{1-\alpha}^v  \frac{ u(1-v)}{(1-u)^{r}  } \, du 
	\quad\text{and}
	\quad 
	B_v:=\int_{v}^1  \frac{ v(1-u)}{(1-u)^{r}  } \, du . \]
	Then we have that
	\begin{align}
	\label{eq:sigma.pareto.split}
	\sigma^2_{\ES_\alpha(X)}=\frac{1}{\alpha^2 \lambda^2} \int_{1-\alpha}^1 \frac{A_v+B_v}{(1-v)^r} \,d v.
	\end{align}

	We start by proving the lower bound on $\sigma_{\ES_\alpha(X)}^2$.
	To that end, note that
	\[B_v=\int_v^1 \frac{v }{(1-u)^{r-1}  } \, du 
	= \frac{-v }{(2-r)(1-u)^{r-2}  }\Big|_v^1
	= \frac{v }{(2-r)(1-v)^{r-2}  }.\]
	Moreover, for every $v\in[1-\alpha,1]$ we have that $A_v\geq 0$  and $v\geq \frac{1}{2}$; thus by \eqref{eq:sigma.pareto.split},
	\[\sigma^2_{\ES_\alpha(X)}
	\geq \frac{1}{\alpha^2 \lambda^2} \int_{1-\alpha}^1 \frac{1}{2(2-r)(1-v)^{2r-2}} \,d v
	= \frac{1}{\alpha^2 \lambda^2 2(2-r) (3-2r)\alpha^{2r-3}}.\]
	The claimed lower bound on $\sigma_{\ES_\alpha(X)}^2$ follows by noting that $r\in(1,\frac{3}{2})$ since $\lambda>2$.
	The upper estimate follows from the same arguments, additionally noting that
	\[A_v
 	\leq  (1-v)\int_{1-\alpha}^v \frac{1}{(1-u)^r}\,du
	\leq \frac{1}{(r-1)(1-v)^{r-2}}
	= \frac{2-r}{(r-1)v} B_v.	\qedhere\]
\end{proof}

\section{Numerical experiments}
\label{sec:numerics}

This section aims to shortly extend the numerical illustration of the proposed estimator from the introduction in a few more examples. For computational reasons, we always choose $\alpha=0.1$ so that a significant proportion of the data lies in the tails and simulations are hence less costly. The proposed estimator always uses $\beta_1=0.5$ and $\beta_2 = 0.6$. This is chosen such that it works well for block sizes $m$ around $250$. As discussed in Remark \ref{rem:bias}, for smaller values of $m$, one would need to widen the interval $[\beta_1, \beta_2]$, while for larger values of $m$, one can narrow the interval. 

As a short disclaimer, we emphasize that if, e.g., $\alpha=0.01$ were used instead of $\alpha=0.1$, we expect that the presented numerical results would be roughly similar if instead ten times as much data were used. In terms of practical applicability in financial domains, this means we are certainly optimistic with respect to the amount of data which we assume is available. Thus the results in this section should mostly be regarded as indications of the potential of the proposed estimator in an idealized setting. Future work for more realistic regimes (e.g., lower $N$) may combine the key ideas of the proposed estimator with practical adjustments like partly overlapping intervals or parametric assumptions on the forms of the tails.

Regarding the results, Figure \ref{fig:regimepareto} aims to illustrate the intuition of $\widehat{S}_N$ given in Remark \ref{rem:bias}, in that it can harness the benefits of the median of blocks estimator ($\beta_1=\beta_2=0.5$) for large deviations and high confidence, and the benefits of the plug-in estimator for small deviations and moderate confidence. Figure \ref{fig:regimestudent} shows that this behavior occurs for other heavy tailed distributions as well, and not just Pareto distributions. 

In Figure \ref{fig:trimmed}, we showcase a different estimator from the literature tailored to non-parametric estimation in heavy-tailed settings, namely two versions of a trimmed estimator from \cite{hill2015expected}, one standard version and one de-biased version. While we see that the standard version satisfies desirable properties for large deviations ($\delta=1$), it is slightly worse for moderate deviations ($\delta=0.5$), and vice versa for the de-biased version. We note that while the simulation study in \cite{hill2015expected} uses trimming of the largest $[0.25 n^{2/3}]$ samples, we found this too extreme in our experiments and thus trimmed just $[0.25 n^{1/3}]$ samples, which lead to much better results.

Figure \ref{fig:nicedistributions} showcases the behavior for more light-tailed distributions like the normal or (slightly less so) the log-normal distributions. In both cases, we can see that the plug-in estimator performs very well, and the proposed estimator almost matches its performance, while the median of blocks estimator performs worse in these well-behaved cases.

Figure \ref{fig:corrupted} illustrates Theorem \ref{thm:adversarial} by showcasing the robustness of the proposed estimator against adversarial manipulation of the data, whereas the plug-in estimator is heavily distorted even by manipulating just three data points.

Finally, Figure \ref{fig:theorydoesnothold} illustrates the behavior of the proposed estimator in a setting with infinite variance $\sigma_{\ES_\alpha}^2=\infty$.
While the theoretical guarantees from Theorem \ref{thm:ES} fail,  Figure \ref{fig:theorydoesnothold} shows that even in this situation, our estimator performs well, and in particular much better than the plug-in estimator.

\begin{figure}[H]
	\begin{minipage}{0.5\textwidth}
		\includegraphics[width=1\textwidth]{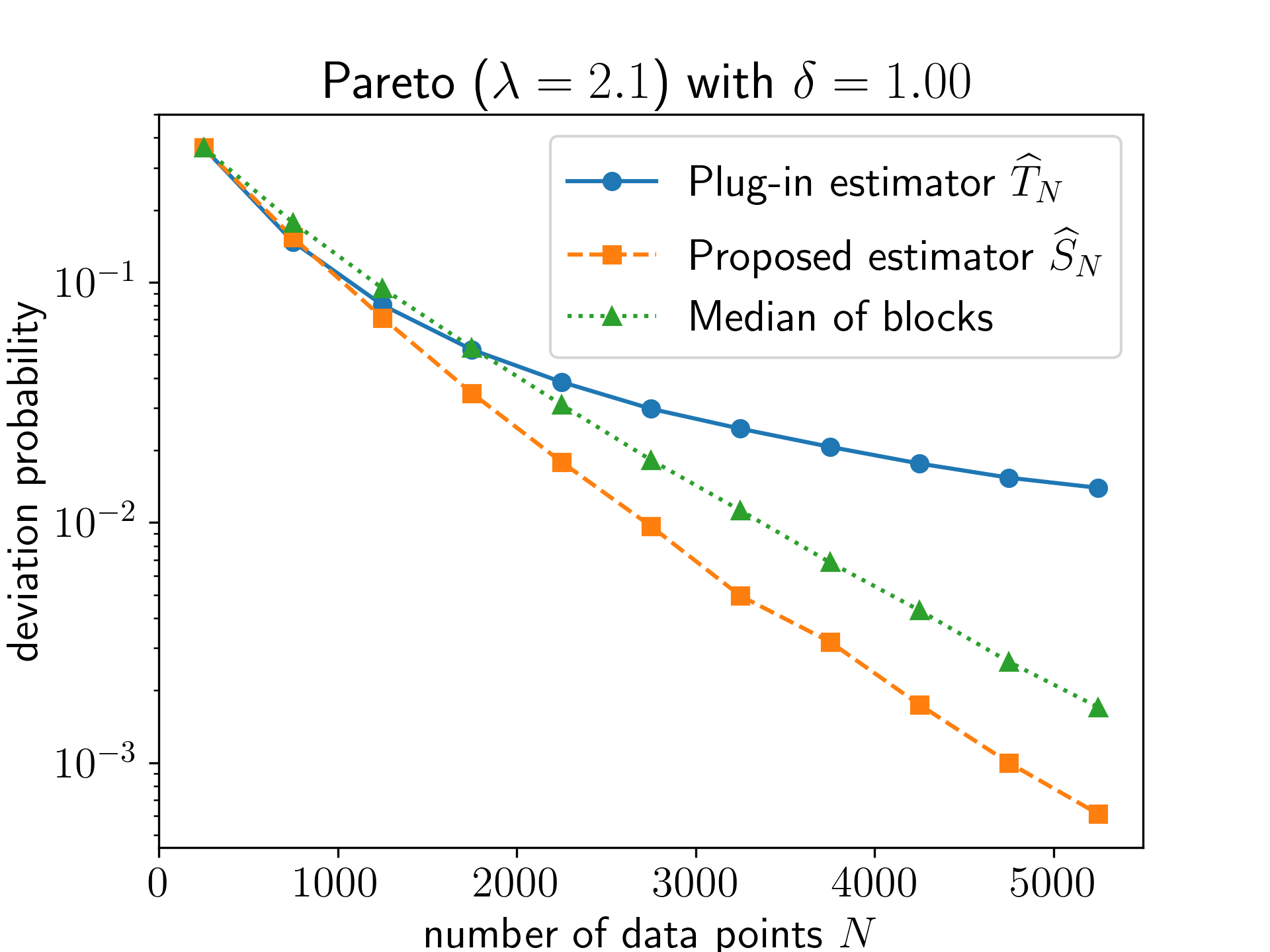}
	\end{minipage}%
	\begin{minipage}{0.5\textwidth}
		\includegraphics[width=1\textwidth]{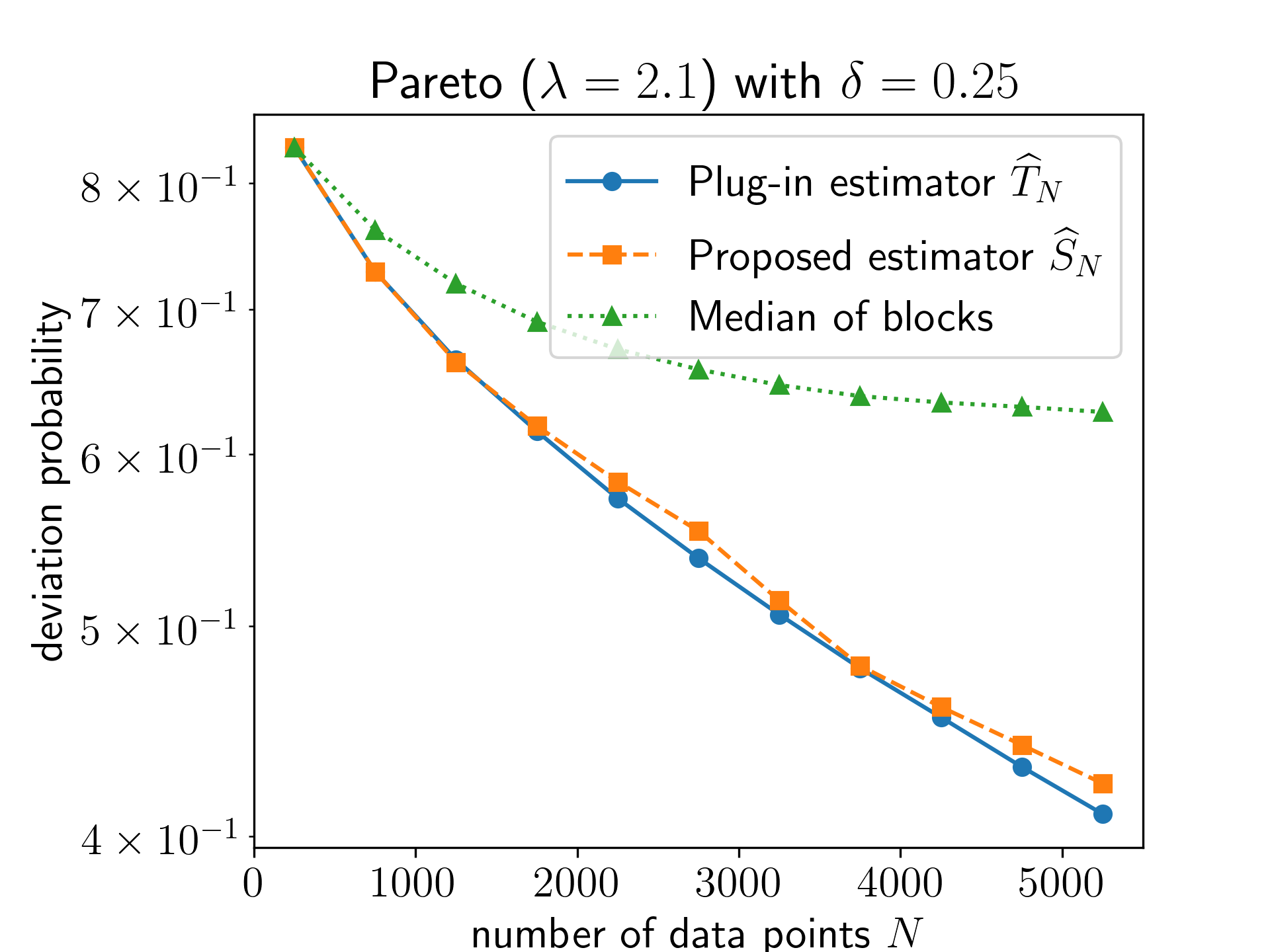}
	\end{minipage}
	\begin{minipage}{0.5\textwidth}
	\includegraphics[width=1\textwidth]{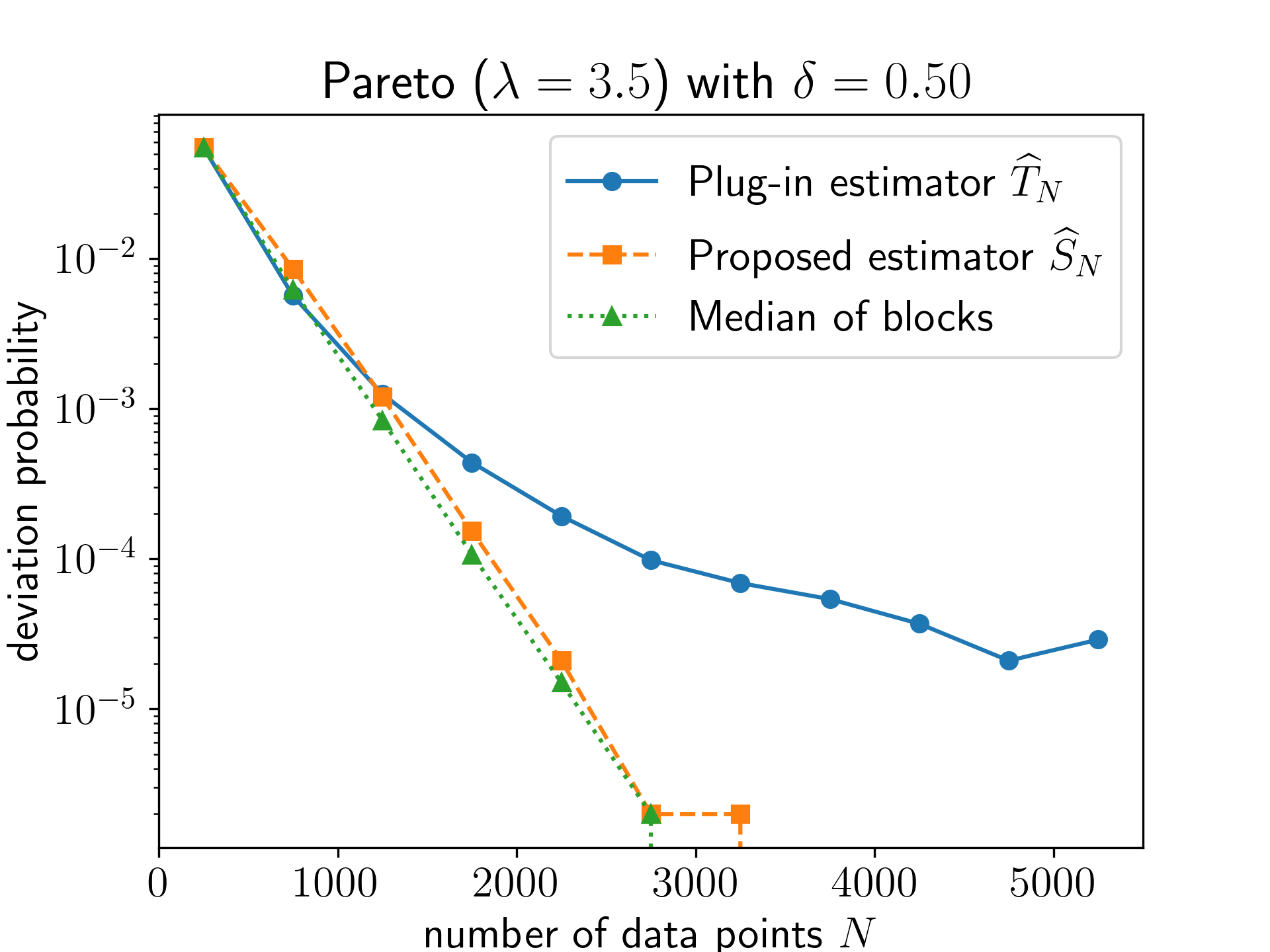}
	\end{minipage}%
	\begin{minipage}{0.5\textwidth}
		\includegraphics[width=1\textwidth]{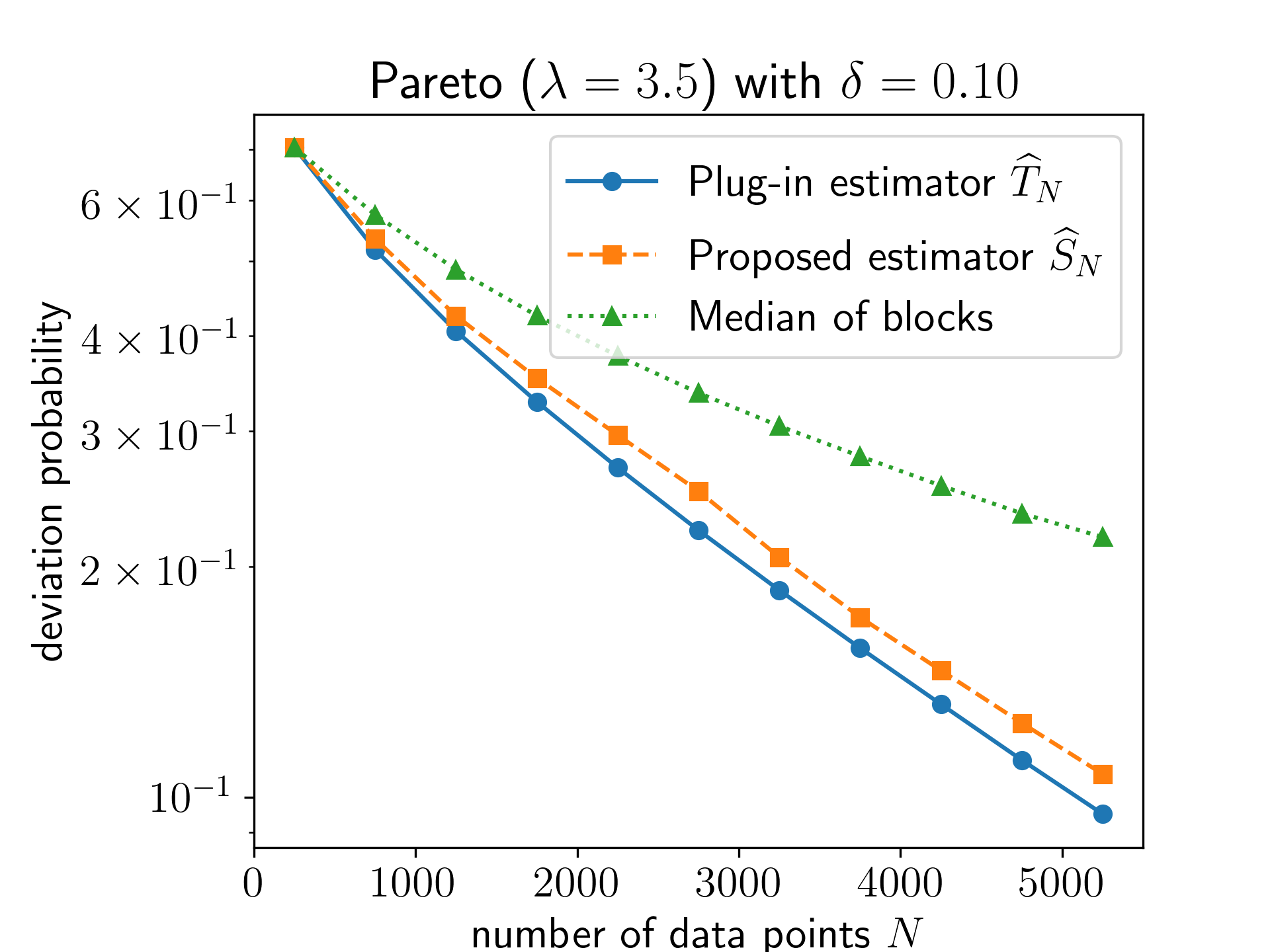}
	\end{minipage}%
\caption{The figures show $\mathbb{P}(|\widehat{R}_N - \ES_{\alpha}(X)| \geq \delta)$ (estimated using $10^6$ many experiments) for different estimators $\widehat{R}_N$, where $X$ is either Pareto distributed with $\lambda=2.1$ or $\lambda=3.5$. The proposed estimator and the median of blocks estimator use sub-intervals of size $m=250$. 
	The left hand side illustrates that the confidence bands for $\hat{S}_N$ and the median of blocks estimator behave like $\exp(-c_\delta N)$ as shown in Theorem \ref{thm:ES}, which is in contrast to the rate $\tilde{c}_\delta N^{-(\lambda - 1)}$ for $\widehat{T}_N$ as shown in Lemma \ref{lem:paretolower}.
	The figure further illustrates the regime switch for the proposed estimator. For large values of $\delta$ (left hand images, notice the scale on the $y$-axis), the proposed estimator behaves similarly to the median of blocks estimator; that is, statistically optimal with high confidence. For regimes with smaller $\delta$ (right hand images, notice the different scale on the $y$-axis), the proposed estimator manages to obtain an accuracy similar to the plug-in estimator.}\label{fig:regimepareto}
\end{figure}

\begin{figure}[H]
	\begin{minipage}{0.5\textwidth}
		\includegraphics[width=1\textwidth]{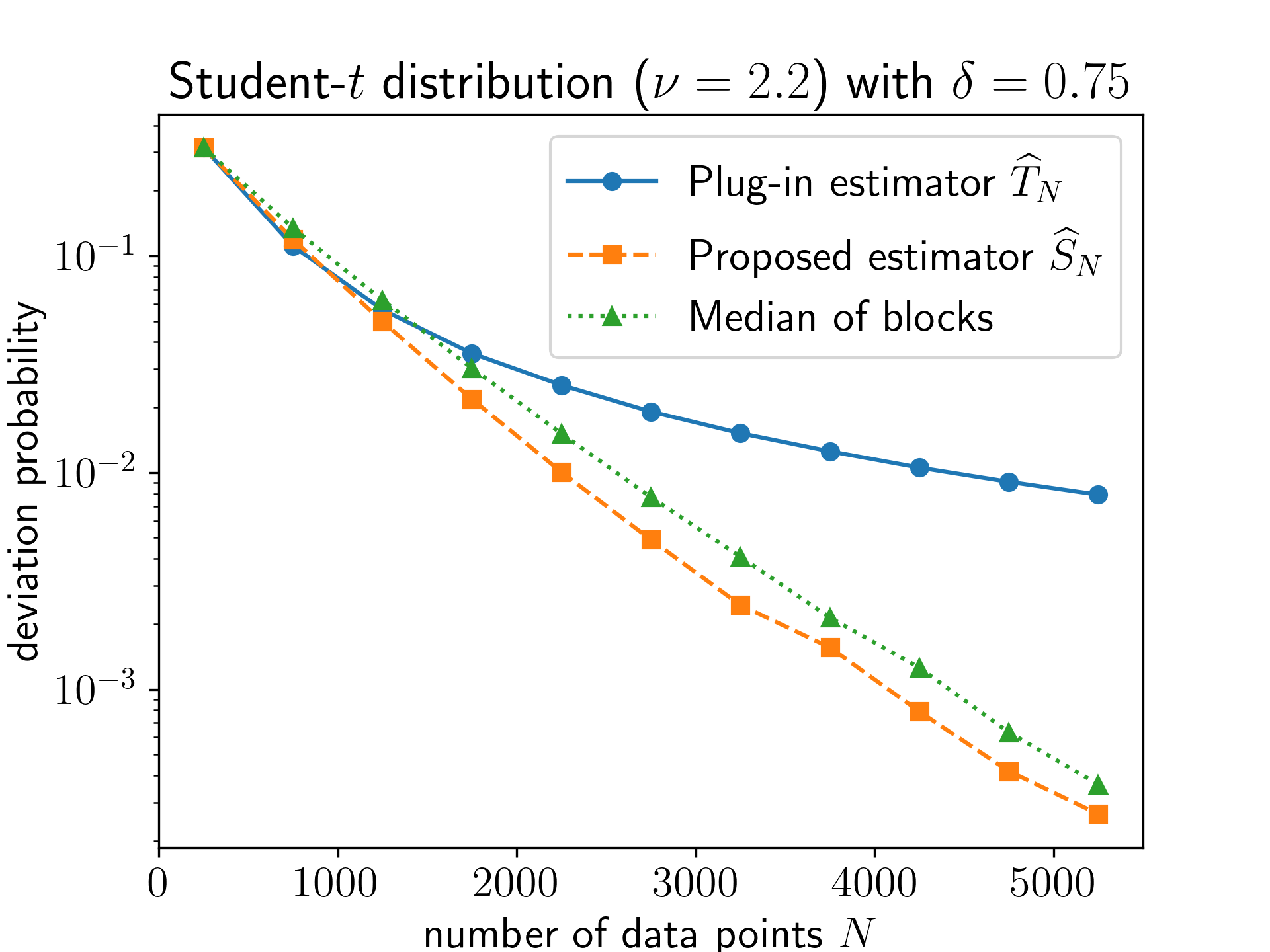}
	\end{minipage}%
	\begin{minipage}{0.5\textwidth}
		\includegraphics[width=1\textwidth]{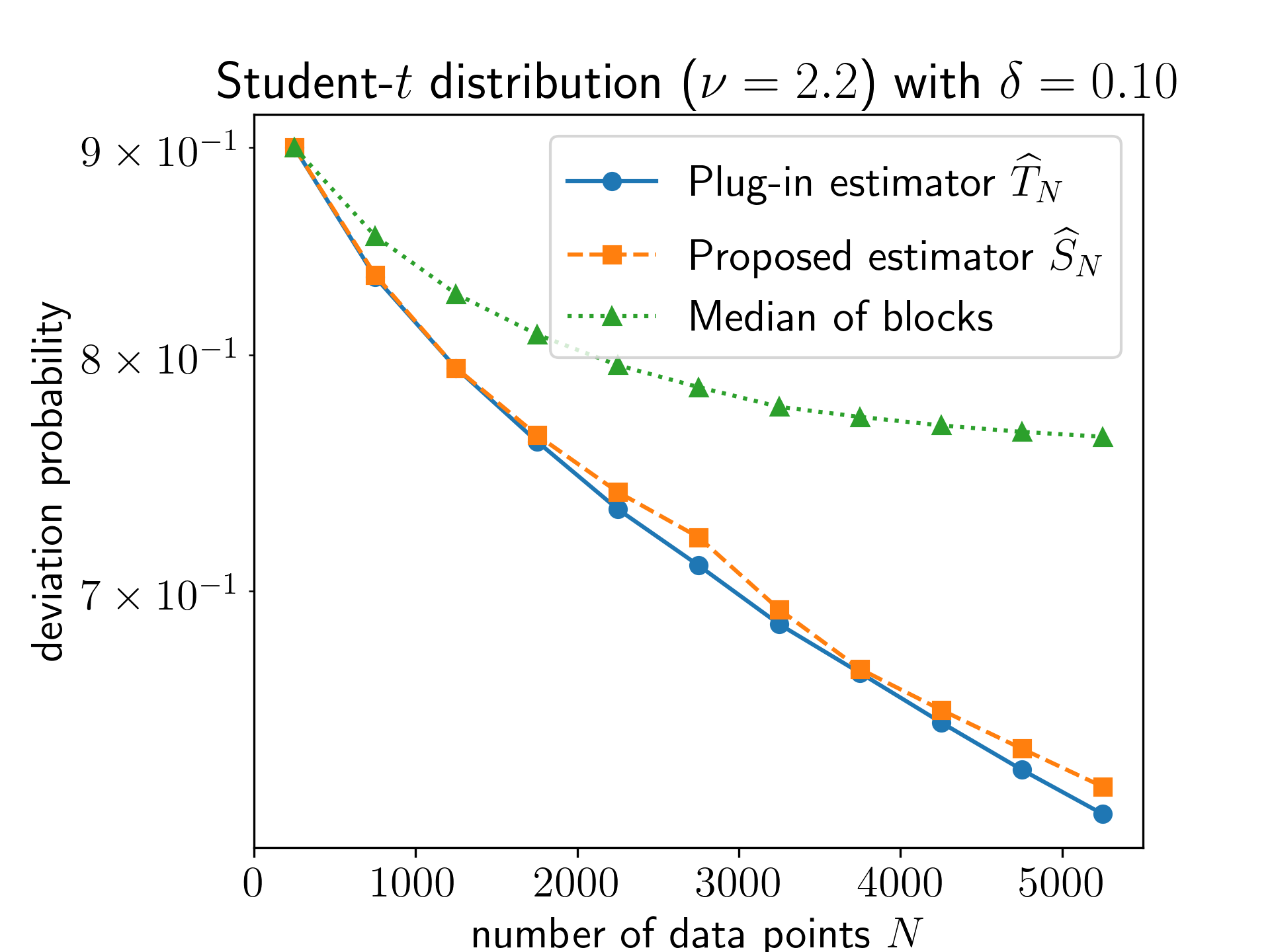}
	\end{minipage}
	\caption{The figures show $\mathbb{P}(|\widehat{R}_N - \ES_{\alpha}(X)| \geq \delta)$ (estimated using $10^6$ many experiments) for different estimators $\widehat{R}_N$, where $X$ follows a student-$t$ distribution with parameter $\nu=2.2$. The proposed estimator and the median of blocks estimator use sub-intervals of size $m=250$.}\label{fig:regimestudent}
\end{figure}

{\color{blue}
\begin{figure}[H]
	\begin{minipage}{0.5\textwidth}
		\includegraphics[width=1\textwidth]{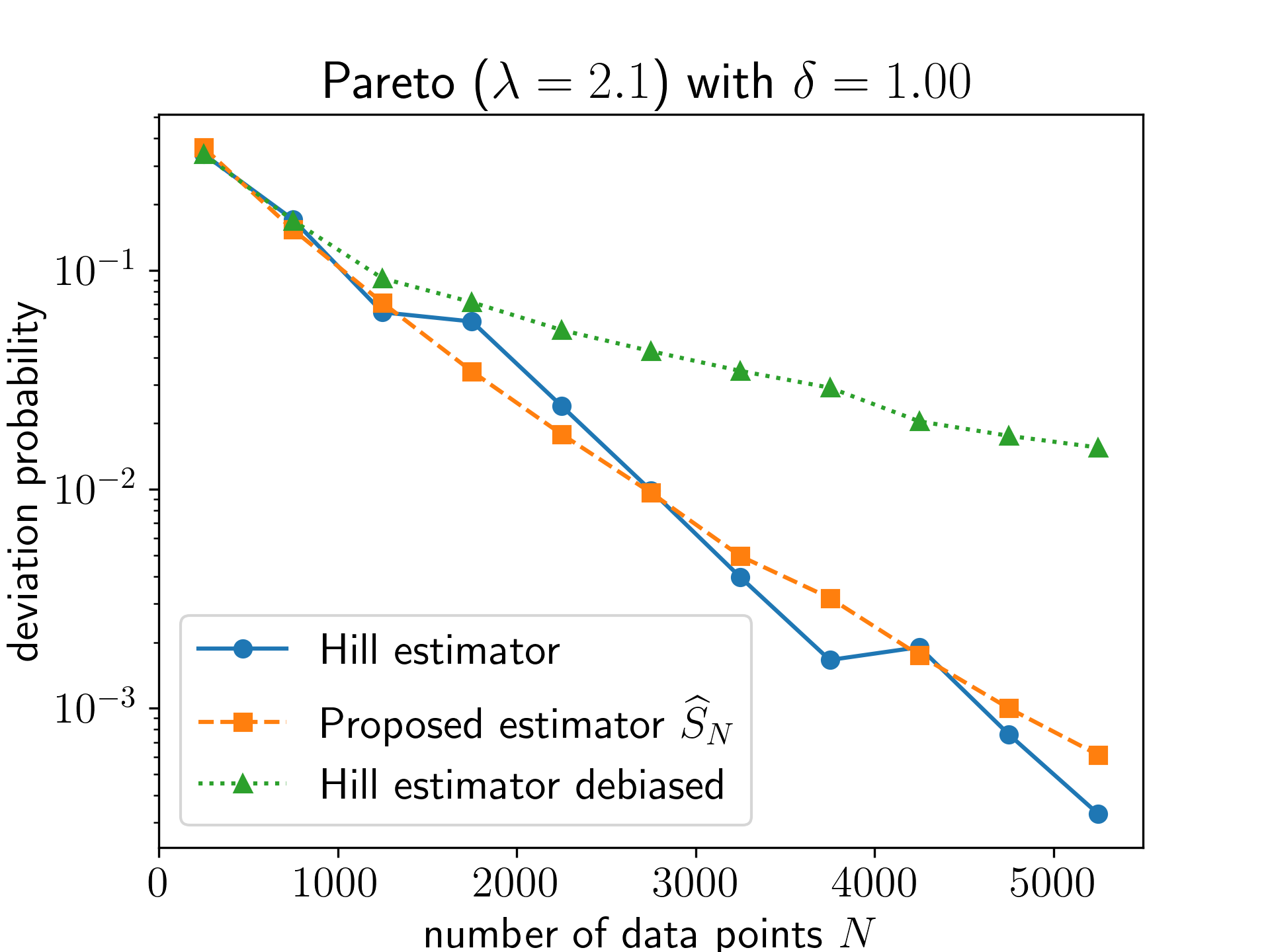}
	\end{minipage}%
	\begin{minipage}{0.5\textwidth}
		\includegraphics[width=1\textwidth]{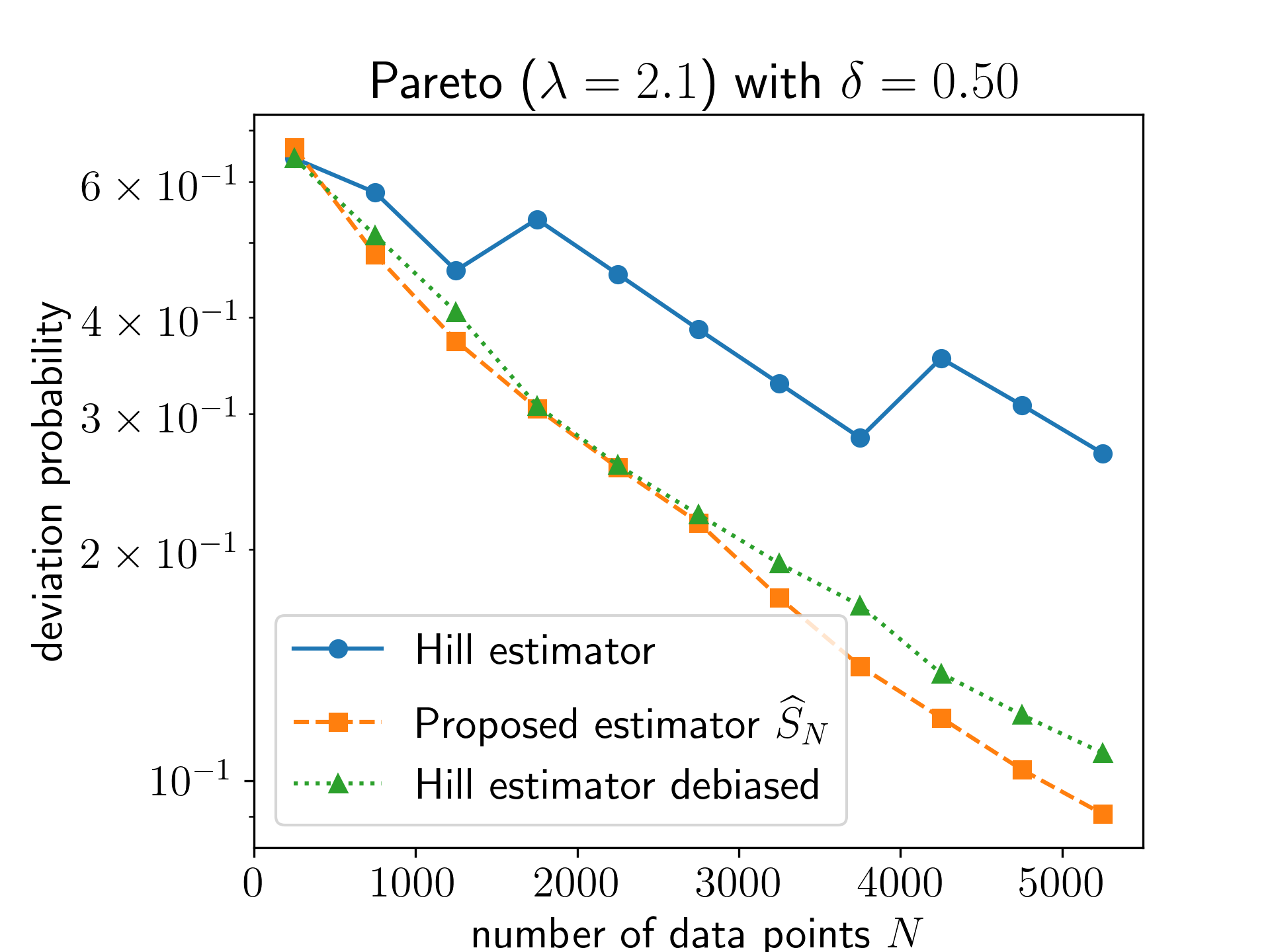}
	\end{minipage}
	\caption{The figures show $\mathbb{P}(|\widehat{R}_N - \ES_{\alpha}(X)| \geq \delta)$ (estimated using $10^6$ many experiments) for different estimators $\widehat{R}_N$, where $X$ follows a Pareto distribution with parameter $\lambda=2.1$. The proposed estimator uses sub-intervals of size $m=250$. Hill's estimator (cf.\cite{hill2015expected}) is a kind of trimmed estimator which cuts off the $[0.25 \cdot n^{1/3}]$ largest data points.}\label{fig:trimmed}
\end{figure}}

\begin{figure}[H]
	\begin{minipage}{0.5\textwidth}
		\includegraphics[width=1\textwidth]{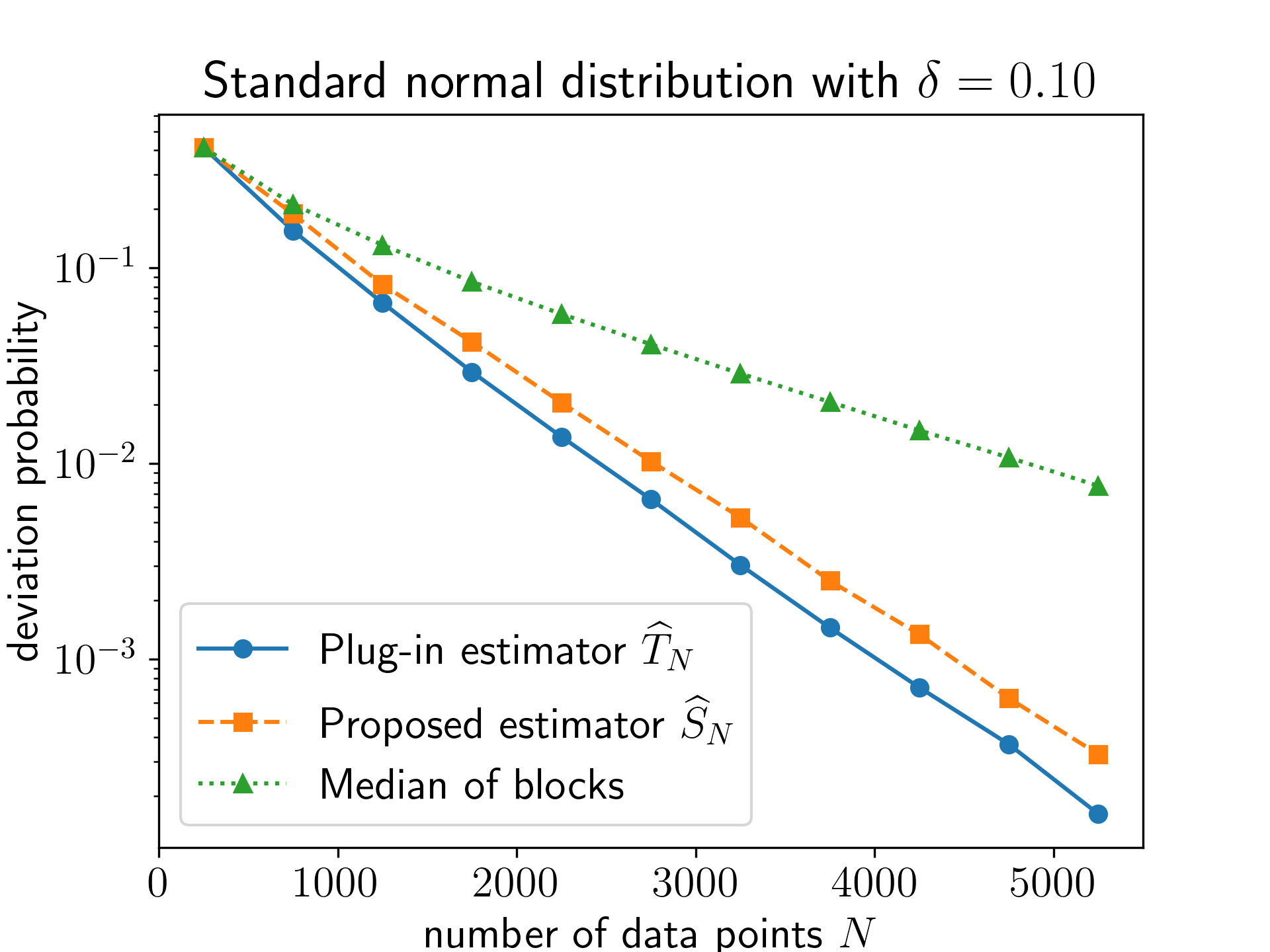}
	\end{minipage}%
	\begin{minipage}{0.5\textwidth}
		\includegraphics[width=1\textwidth]{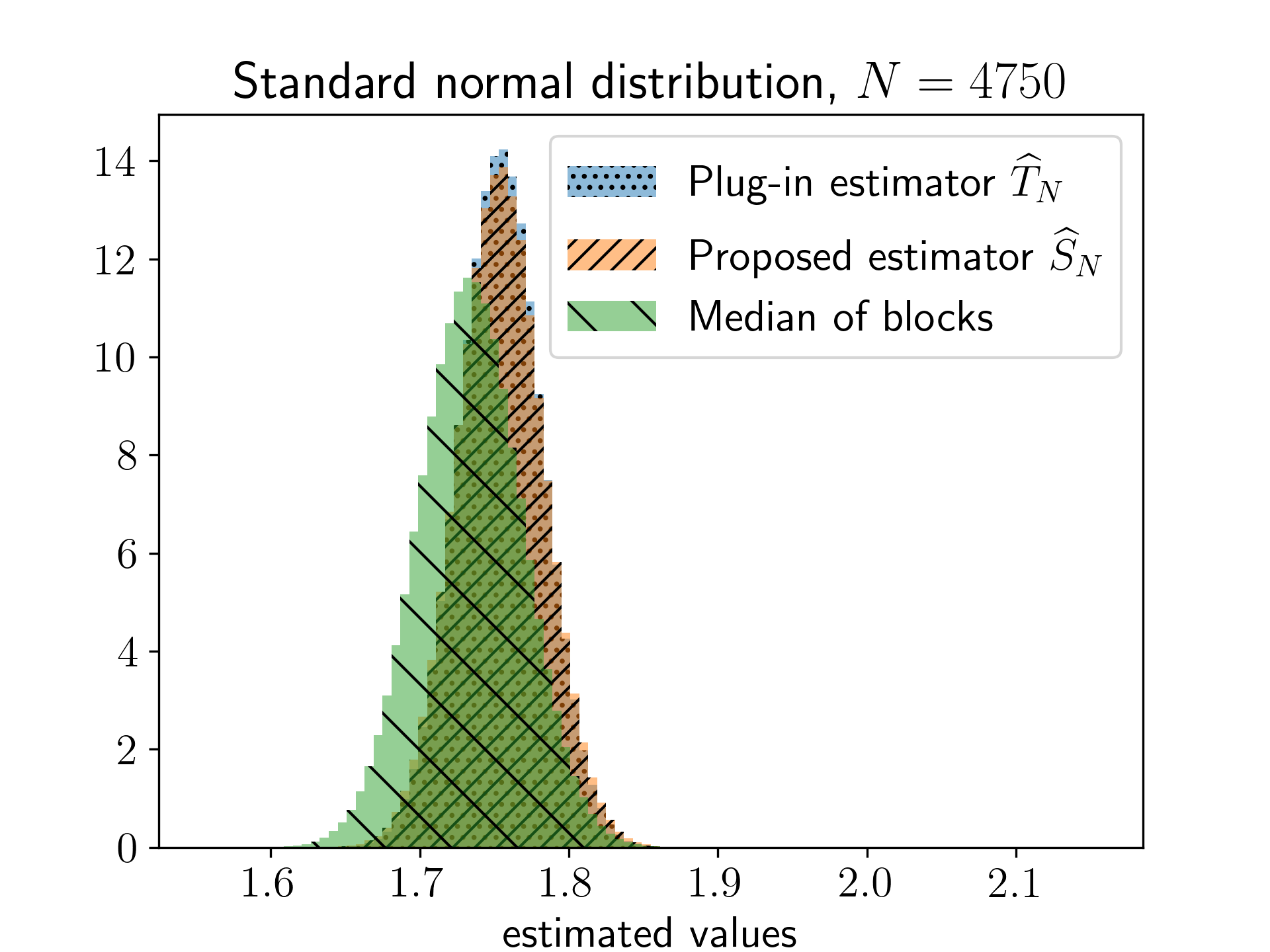}
	\end{minipage}
	\begin{minipage}{0.5\textwidth}
	\includegraphics[width=1\textwidth]{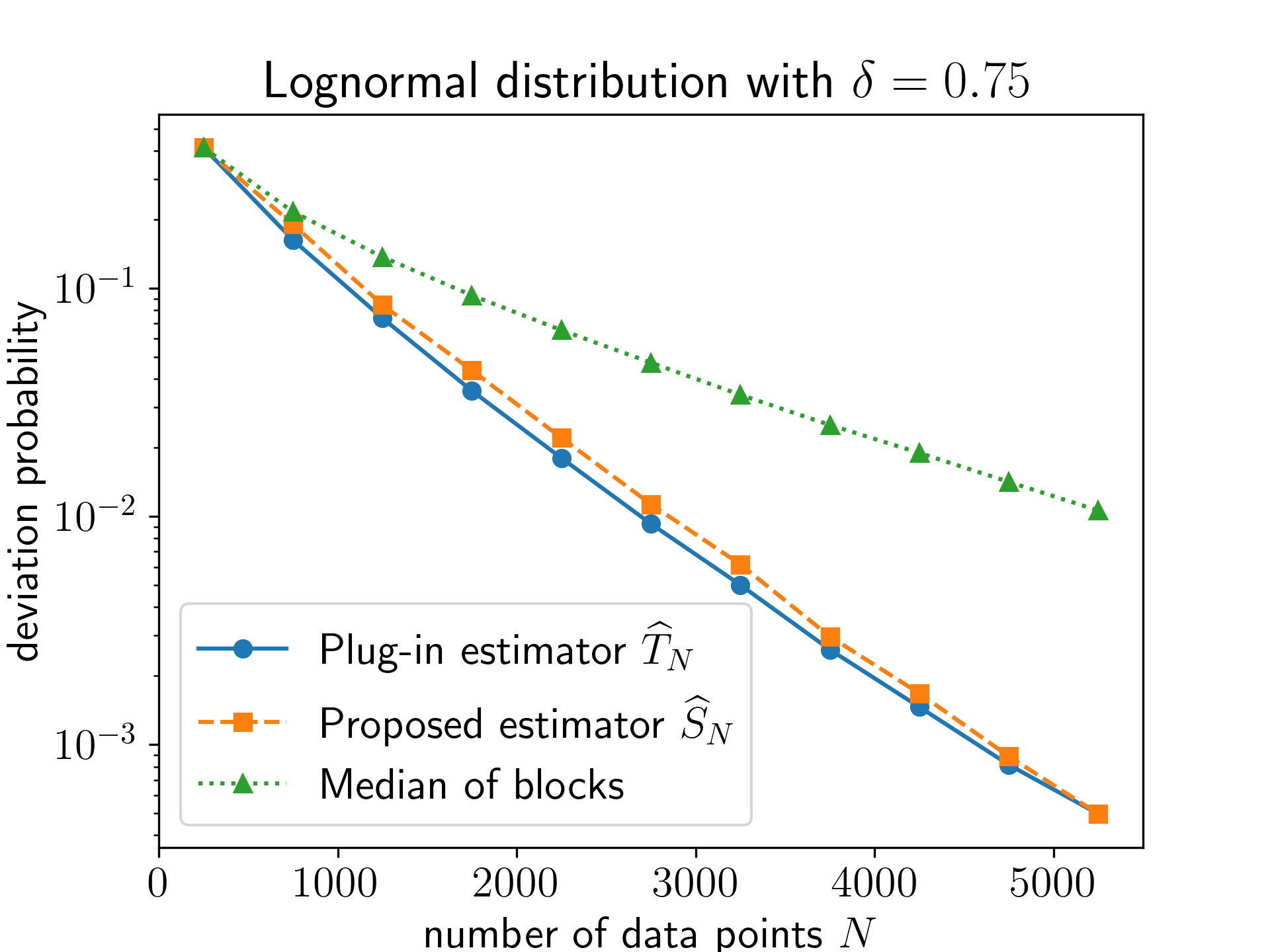}
	\end{minipage}%
	\begin{minipage}{0.5\textwidth}
		\includegraphics[width=1\textwidth]{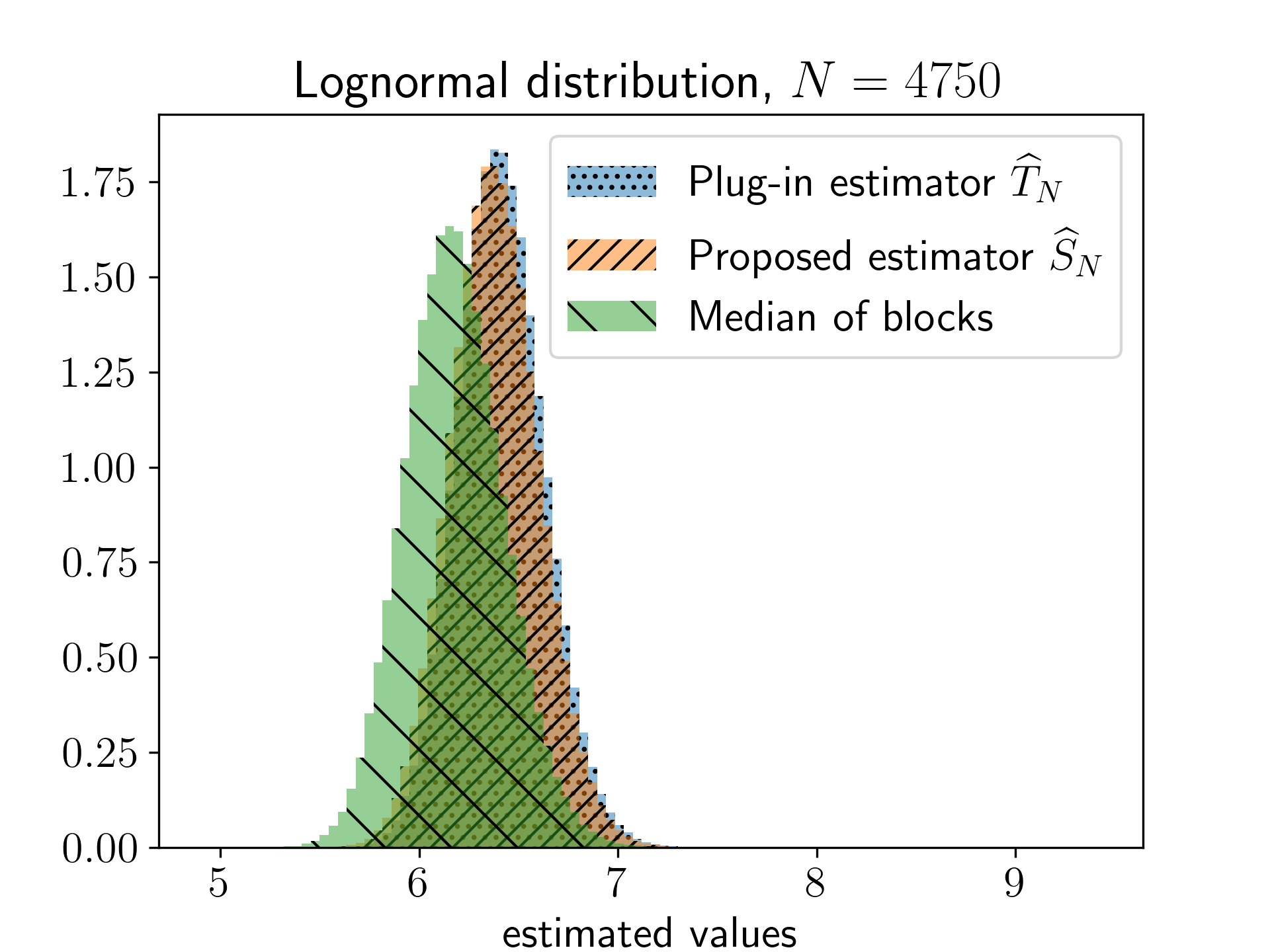}
	\end{minipage}%
	\caption{The left hand side shows $\mathbb{P}(|\widehat{R}_N - \ES_{\alpha}(X)| \geq \delta)$ (estimated using $10^6$ many experiments) for different estimators $\widehat{R}_N$, where $X$ is either standard normally distributed (top) and or standard log-normally distributed (bottom). The proposed estimator and the median of blocks estimator use sub-intervals of size $m=125$. We see that the median of blocks estimator exhibits a noticeable negative bias, as shown by the histograms on the right hand side. The proposed estimator mitigates this downside almost completely and performs similarly to the plug-in estimator in these examples.}\label{fig:nicedistributions}
\end{figure}

\begin{figure}[H]
	\begin{minipage}{0.5\textwidth}
	\includegraphics[width=1\textwidth]{Figures/fullhistpareto22_new.png}
\end{minipage}%
\begin{minipage}{0.5\textwidth}
	\includegraphics[width=1\textwidth]{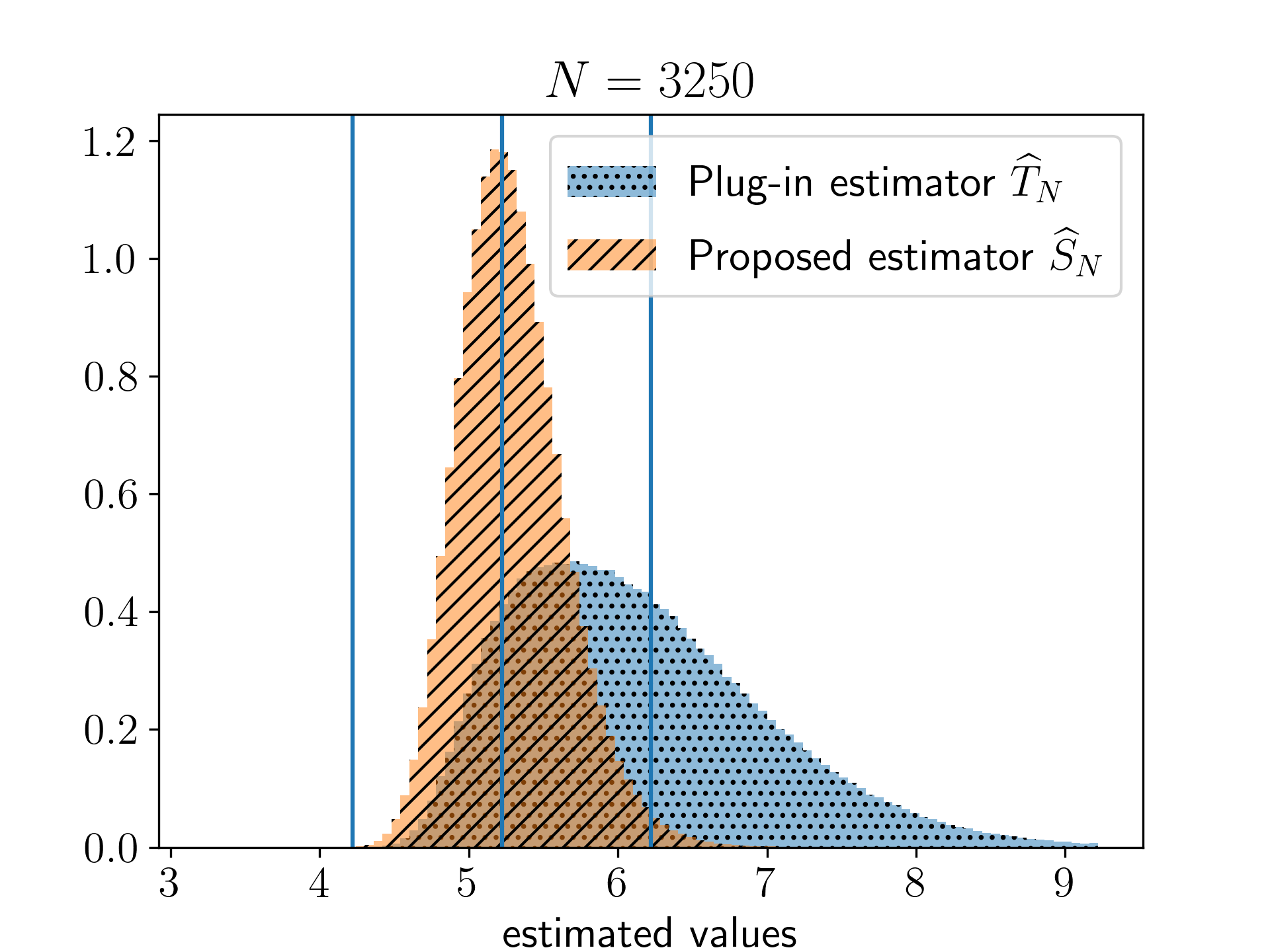}
\end{minipage}
\caption{The figures show histograms across estimated values for $\ES_{\alpha}(X)$, where $X$ is Pareto distributed with $\lambda = 2.2$, with the blue vertical lines depicting the values $\ES_{\alpha}(X)-1, \ES_{\alpha}(X), \ES_{\alpha}(X)+1$. The left hand side shows the standard case without any adversarial manipulation, as reported in the introduction. The right hand side shows the case in which three of the 3250 data points, w.l.o.g.~$X_1, X_2, X_3$, are modified to $\tilde{X}_i = \max\{X_i, U_i\}$ with $U_i \sim \mathcal{N}(5, 250^2)$ independent of all other variables, $i=1, 2, 3$. We see that the plug-in estimator is heavily distorted through those modified data points, while the proposed estimator is much more robust and its histogram remains close to the one using the uncorrupted data.}\label{fig:corrupted}
\end{figure}

\begin{figure}[H]
	\begin{minipage}{0.5\textwidth}
		\includegraphics[width=1\textwidth]{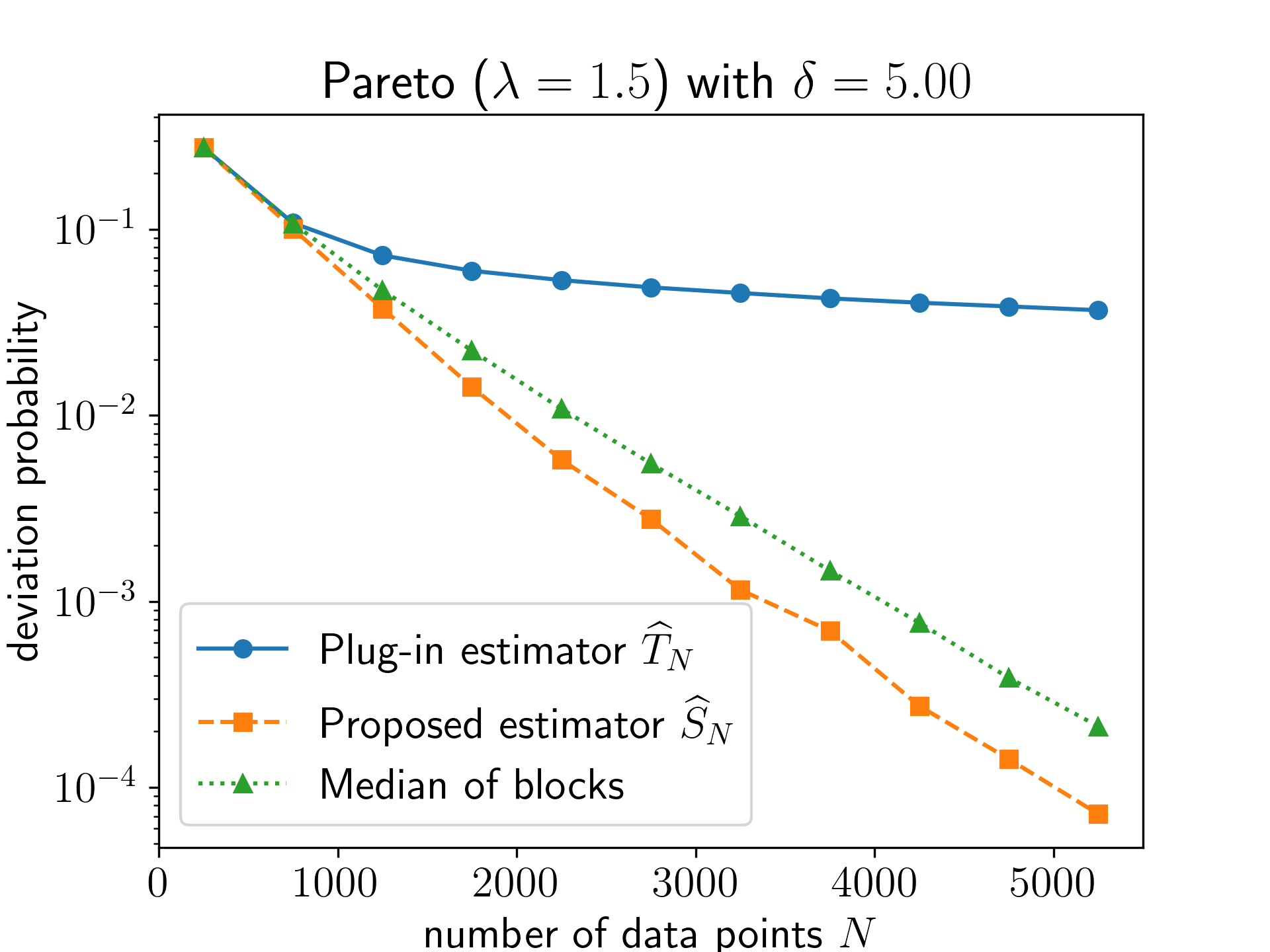}
	\end{minipage}%
	\begin{minipage}{0.5\textwidth}
		\includegraphics[width=1\textwidth]{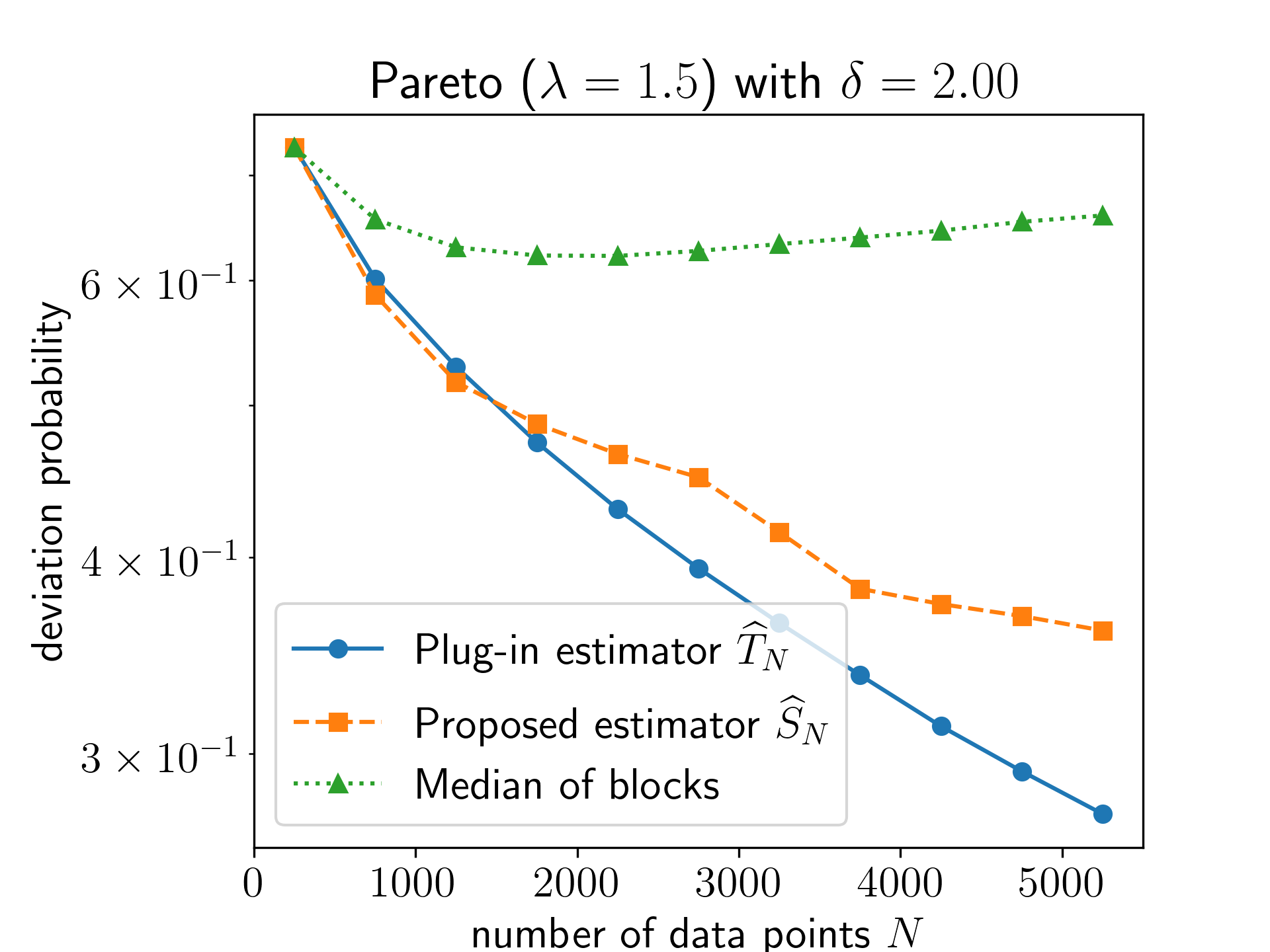}
	\end{minipage}
	\caption{The figures show $\mathbb{P}(|\widehat{R}_N - \ES_{\alpha}(X)| \geq \delta)$ (estimated using $10^6$ many experiments) for different estimators $\widehat{R}_N$, where $X$ is Pareto distributed with parameter $\lambda = 1.5$. The proposed estimator and the median of blocks estimator use sub-intervals of size $m=250$.}\label{fig:theorydoesnothold}
\end{figure}

\section{Extension to non i.i.d.\ data}
\label{sec:non.iid}

The results in this paper are derived under the assumption of (possibly corrupted) i.i.d.\ data. 
While the i.i.d.\ assumption is often used in the academic literature, it may not hold in many practical applications. 
We believe that our main results can be extended to a variety of non-i.i.d.\ settings---we refer to \cite{Cai2008,chen2008nonparametric,mcneil2015quantitative,patton2019dynamic,Tsay2010} for examples of such frameworks. 
Because each dependence setting calls for distinct techniques and assumptions, a full treatment of all extensions lies beyond the scope of this paper. 
In what follows, we outline one plausible extension of our methodology, leaving a comprehensive analysis of further settings for future work.

To that end, recall that for two $\sigma$-fields $\mathcal{A}$ and $\mathcal{B}$,
\[\beta(\mathcal{A},\mathcal{B}):= \sup_{C\in\mathcal{A}\otimes\mathcal{B}} |\P_{\mathcal{A}\otimes\mathcal{B}}(C) - (\P_\mathcal{A}\otimes\P_\mathcal{B})(C)|.\]
Here $\P_\mathcal{A}$ and $\P_\mathcal{B}$ are the restriction of $\P$ to the respective $\sigma$-fields and $\P_{\mathcal{A}\otimes\mathcal{B}}$ is defined via $\P_{\mathcal{A}\otimes\mathcal{B}}(A\times B ) = \P(A\cap B)$.

We assume that $X_1,\dots,X_N$ are observations from a strictly stationary time series which is geometrically $\beta$-mixing, that is 
\[
\beta(k)
= \sup_{t\geq 1 } \beta\left( \sigma(X_1,\dots,X_t), \sigma(X_{t+k},\dots, X_N) \right)
\]
satisfies for all $k\geq 1$,
\[\beta(k) \le C_{{\rm mix}}\rho^{k} \quad\text{for some constants }  C_{{\rm mix}} > 0 \text{ and } \rho \in (0,1).\]
Geometric $\beta$-mixing is satisfied for a wide range of standard financial time-series models used in ES estimation, including ${\rm GARCH}$ and stochastic volatility models under mild moment and contraction assumptions.
It also holds for many ergodic Markov chains and linear processes with exponentially decaying coefficients.

Define for each $t,s\in\R$ and $k \geq 0$,
\[ \gamma_{k}(t,s) = \C\bigl[\,\eins_{\{X_{0} \le t\}},\, \eins_{\{X_{k} \le s\}}\,\bigr],\]
set 
\[\Gamma(t,s) := \gamma_{0}(t,s) + 2\sum_{k \ge 1} \gamma_{k}(t,s),
\]
and put
\[
\overline{\sigma}_{\mathrm{ES}}^{2}
:= \int_{F^{-1}(1-\alpha)}^\infty \int_{F^{-1}(1-\alpha)}^\infty
\Gamma(t,s)\, ds\,dt.
\]
Note that 
\[\gamma_0(t,s) =  F(t \wedge s) - F(t)F(s)\] 
and hence
\begin{align}
\label{eq:non.iid.sigma}
\overline{\sigma}_{\mathrm{ES}}^{2}
= \sigma_{\mathrm{ES}}^{2}
+ 2\sum_{k \ge 1}
\int_{F^{-1}(1-\alpha)}^\infty \int_{F^{-1}(1-\alpha)}^\infty
\gamma_{k}(t,s)\, ds\, dt.
\end{align}
In particular, in the i.i.d.\ case (corresponding to $\rho=0$) we have  $\overline{\sigma}_{\ES_\alpha}^{2} = \sigma_{\ES_\alpha}^{2}$. 
In general, $\overline{\sigma}_{\ES_\alpha}^{2}$ may exceed $\sigma_{\ES_\alpha}^{2}$,
and there are several problem-specific approaches to estimate it. 
For instance, under the strong mixing assumption one can verify that for all $k \ge 1$,
\[
|\gamma_{k}(t,s)|
\le \min\{\, C_{{\rm mix}} \rho^{k},\, 1 - F(t),\, 1 - F(s)\,\}.
\]
Hence,  the remainder term in~\eqref{eq:non.iid.sigma} can be estimated for example if  $F$ possesses suitable higher moments.

The following is the extension of Theorem \ref{thm:ES} to the present setting of non i.i.d.\ data.
For cleaner exposition, we shall assume that $\alpha\geq 0.01$, that $\rho<0.99$, and that $C_{\rm mix} \geq 4$ (ensuring that $\log(C_{\rm mix})\geq 1$).

\begin{theorem}
\label{thm:ES.noniid}
	There are absolute constants $c_0,\dots, c_4>0$ such that the following holds.
	Assume that $\overline{\sigma}^2_{\ES_\alpha(X)}$ is finite and that $u\mapsto \VAR_u(X)$ is $L$-Lipschitz continuous on $[1-2\alpha , 1- \frac{1}{2}\alpha]$.
	Let $\varepsilon< c_0\sqrt\alpha$, assume that $N \geq c_1\log(C_{\rm mix}) \varepsilon^{-2}$ and $N\geq c_{2} \varepsilon^{-4}$.
    Then there exists an estimator $\widetilde{S}_N$ such that
	\[ \P\left( \left| \widetilde{S}_N-\mathrm{ES}_\alpha(X) \right| 
	\geq   \varepsilon \overline{\sigma}_{\ES_\alpha(X)}  + c_3 L \varepsilon^2   \right)
	 \leq 2 \exp\left(-c_4 N\varepsilon^2\right).\]
	\end{theorem}

The estimator $\widetilde{S}_N$ is defined very similarly to the estimator $\widehat{S}_N$ from Definition \ref{def:proposed} with one difference due to the non i.i.d.\ setting. We explain its definition and the differences in Subsection \ref{subsec:noniidproof}.

Note that the error estimate and the confidence in Theorem \ref{thm:ES.noniid} are the same as those established in Theorem~\ref{thm:ES} (up to the exact value of the absolute constants).
The only changes are that the conditions on $N$ have been slightly modified:
The lower bound on $N$ is now $N \gtrsim \log(C_{\rm mix}) \varepsilon^{-2}$ compared to the previous bound $N \gtrsim  \varepsilon^{-2}$. 
As the value of $C_{\rm mix}$ is typically moderate (of order $10$), the new lower bound on $N$ almost matches the previous one.
Second, the new condition on $N$, namely that $N \le c_{2}\varepsilon^{-4}$, is quite natural. 
Indeed, one is typically interested in values of $\varepsilon$ only slightly larger than $N^{-1/2}$; those choices 
yield an estimation error slightly larger than $O(\overline{\sigma}_{\ES_{\alpha}(X)} / \sqrt{N})$. 
In contrast, the upper bound $N \sim \varepsilon^{-4}$, or equivalently, $\varepsilon \sim N^{-1/4}$, already corresponds to a very large error of order $O(\overline{\sigma}_{\ES_{\alpha}(X)} / N^{1/4})$.

Finally note that as previously in Theorem \ref{thm:adversarial}, minor modifications of the proof of Theorem \ref{thm:ES.noniid} yield the following in the setting of adversarially modified non-i.i.d.\ samples.

\begin{theorem}\label{thm:adversarial.noniid}
	Let the assumptions in Theorem \ref{thm:ES.noniid} hold.
	Then there are absolute constants $C_0,C_1,C_2$ such that even if at most $C_0 N\varepsilon^2$ of the $N$ samples are maliciously modified,  
	\[ \P\left( \left| \widetilde{S}_N-\mathrm{ES}_\alpha(X) \right| \geq   \varepsilon\sigma_{\ES_\alpha(X)}  + C_1 L \varepsilon^2   \right)
	\leq 2\exp\left(-C_2 N\varepsilon^2\right).\]
\end{theorem}

\subsection{Proof of Theorems~\ref{thm:ES.noniid} and \ref{thm:adversarial.noniid}}\label{subsec:noniidproof}

We start with the proof of Theorem~\ref{thm:ES.noniid}.
To that end, let us first explain the definition of the estimator $\widetilde{S}_N$.
Let $C_0$ be a well-chosen absolute constant\footnote{One may take $C_0=150C_1$ where $C_1$ is the absolute constant that appears in the proof of Lemma \ref{lem:error}, thought we did not try to optimize constants here and this choice can likely be improved.}, set
\[m=\left\lceil  \frac{C_0  \log(C_{\rm mix})}{\varepsilon^2}\right\rceil, \qquad n:=\left\lfloor \frac{N}{2m}\right\rfloor,\]
and partition the data $X_1,\dots,X_N$ in $2n$ disjoint blocks $I'_1,\dots,I'_{2n}$ containing $m$ samples each. 
Set 
\[I_j:=I'_{2j} \quad\text{for each }j=1,\dots, n.\]
The estimator $\widetilde{S}_N$ is now defined as in Definition \ref{def:proposed} using this choice of the $I_j$'s.

The reason for this choice of the $I_j$'s is the following (important) observation: the mixing coefficient of the sequence $(\widehat{T}_{I_{j}})_{j=1}^n$, 
denoted by $\beta^{T}$, satisfies
\[
\beta^{T}(k) \le \beta(km) \le C_{{\rm mix}}\rho^{mk},
\]
where $m$ denotes the size of each block $I_{j}$.
This follows from the definition of the mixing coefficients and the fact that the gap in samples $X_1,\dots,X_N$ between $\widehat{T}_{I_{j}}$ and $\widehat{T}_{I_{k}}$ is $|j-k|m$, e.g.\ $\widehat{T}_{I_1}$ uses the samples $X_1,\dots,X_m$ and $\widehat{T}_{I_2}$ uses the samples $X_{2m+1},\dots,X_{3m}$ and so forth.

In what follows (in Lemma \ref{lem:linear.noniid} and Lemma \ref{lem:error.noniid} below), we will show that for each $j$,
\begin{align}
\label{eq:non.iid.to.show}
\P\left( \left|\widehat{T}_{I_{j}} - \ES_{\alpha}(X)\right| \geq \delta \right) \leq \frac{1}{20},
\end{align}
where here and in the reminder of this section, 
\[ \delta = \varepsilon \overline{\sigma}_{\ES_\alpha(X)}  + 10 L \varepsilon^2.\]
If that is true, the following analogue to Lemma \ref{lem:binomial.concentration} in the non-i.i.d.\ setting is used to obtain high probability estimates:

\begin{lemma}
\label{lem:binomial.noniid}
Assume that the condition on $N$ from Theorem \ref{thm:ES.noniid} is satisfied and suppose that \eqref{eq:non.iid.to.show} holds true for every $j=1,\dots,n$.
Then,
\begin{align*}
\P\left( \left|\{ j\in\{1,\dots,n\} : |\widehat{T}_{I_{j}} -  \ES_{\alpha}(X)| \geq  \delta \}\right| \geq 0.3n \right) 
\leq 2  \exp\left(-\frac{n}{50}\right).
\end{align*}

\end{lemma}

The proof of Lemma \ref{lem:binomial.noniid} is based on a coupling argument.
Though we suspect that the following statement is known, we could not find it exactly  in the literature, which is why we include a proof.

\begin{lemma}
\label{lem:coupling}
    Suppose that $(Z_t)_{t\geq 1}$ is $\beta$-mixing.
    Then one can construct (on some possible enlarged probability space) a new sequence $(Y_t)_{t\geq 1}$ such that:
    \begin{enumerate}
        \item $Y_t$ and $Z_t$ have the same distribution for each $t$,
        \item the sequence $(Y_t)_{t\geq 1}$ is independent,
        \item $\P(Y_t\neq Z_t)\leq\beta(1)$ for all $t\geq 1$.
    \end{enumerate}
\end{lemma}
\begin{proof}
The proof proceeds by induction over $T$ for the sequence $(Z_t)_{t=1}^T$.
We additionally claim that each $Y_t$ is constructed from $Z_t$ together with a uniform random variable $U_t$, where the $U_t$'s are i.i.d.\ and independent of everything else.
For $T=1$, we simply set $Y_1 := Z_1$. Thus, suppose that the claim holds for some $T \ge 1$, and consider the step from $T$ to $T+1$.
Define
\[
\mathcal{A} := \sigma(Z_T)
\qquad \text{and} \qquad
\mathcal{B} := \sigma(Z_1,\dots,Z_{T-1},\,Y_1,\dots,Y_{T-1}).
\]
Then
\[
\beta(\mathcal{A},\mathcal{B})
\le \beta(\mathcal{A},\sigma(Z_1,\dots,Z_{T-1}))
\le \beta(1).
\]
Indeed, the second inequality follows directly from the definition of $\beta$, since
\[
\mathcal{B} \subset \sigma(Z_1,\dots,Z_{T-1},U_1,\dots,U_{T-1}),
\]
and $(U_t)_{t\ge1}$ is independent of $(Z_t)_{t\ge1}$.

Therefore, by Berbee’s lemma (see, e.g., Lemma~5.1 in \cite{Rio2017}), there exists a random variable $Y_T$, independent of $\mathcal{B}$, such that
\[
Y_T \sim Z_T,
\qquad
\mathbb{P}(Y_T \neq Z_T) \le \beta(1),
\]
and moreover $Y_T$ is measurable with respect to $\sigma(Z_T,U_T)$.

This completes the induction step and hence the proof.
\end{proof}

\begin{proof}[Proof of Lemma \ref{lem:binomial.noniid}]
By Lemma \ref{lem:coupling}, one can define independent random variables $(Y_{j})_{j=1}^{n}$ such that 
\begin{align*}
\P(Y_j \neq \widehat{T}_{I_{j}} ) \leq \beta^T(1) \quad\text{for every }j=1,\dots,n.
\end{align*}
Since $\beta^T(1) \leq C_{\rm mix} \rho^{m}$, we obtain
\begin{align}
\label{eq:berbee.2}
\begin{split}
 \P( Y_j \neq \widehat{T}_{I_{j}} \text{ for some } j\leq n ) 
&\leq n C_{{\rm mix}} \rho^{m}.
\end{split}
\end{align}
Next note that 
\[ n C_{{\rm mix}}  \rho^m \leq  \exp\left( -\frac{n}{50}\right)  \]
holds whenever
\[ m\geq \frac{n/50 + \log(C_{\rm mix}) + \log(n) }{\log(1/\rho)}. \]
Since $\rho\leq 0.99$, the latter is satisfied whenever $m\geq C_1 n \log(C_{\rm mix})$ for a suitable absolute constant $C_1$.
By the definition of $n$ this inequality is satisfied if $N\leq C_2\varepsilon^{-4}$ which we may assume.

In particular, \eqref{eq:berbee.2} shows that $\P( Y_j \neq \widehat{T}_{I_{j}}) \leq\frac{1}{20}$ for all $j$ since by assumption $N\varepsilon^2 \geq  C_3 \log(C_{\rm mix})$ for a suitable large absolute constant $C_3$ and thus $\exp(-n/50) \leq\frac{1}{20}$.
Therefore, 
\[ \P( \left|Y_{j} -  \ES_{\alpha}(X)\right| \geq \delta  )
\leq \P( |\widehat{T}_{I_{j}} -  \ES_{\alpha}(X)| \geq \delta  ) + \P(Y_j \neq \widehat{T}_{I_{j}} )
\leq \frac{1}{10}.\]
and hence an application of Lemma \ref{lem:binomial.concentration} shows that
\begin{align}
\label{eq:berbee.3}
\P( |\{ j\in\{1,\dots,n\} : |Y_j -  \ES_{\alpha}(X)| \geq \varepsilon \delta \}| \geq 0.3n ) 
\leq   \exp\left(-\frac{n}{50}\right).
\end{align}
The proof of the lemma  follows from combining \eqref{eq:berbee.2} with \eqref{eq:berbee.3} and using the union bound.
\end{proof}

Thus, it remains to show \eqref{eq:non.iid.to.show}.
To that end we follow the same reasoning as in Section \ref{sec:mainproofs}, and---using the same notation as therein---write
\[\widehat{T}_{I_{j}} - \ES_{\alpha}(X) = \mathcal{L}_j + \mathcal{E}_{j}.\]

\begin{lemma}
\label{lem:linear.noniid}
	For every $j=1,\dots,n$,
\[ \P( |\mathcal{L}_{j} |\geq\varepsilon \overline{\sigma}_{\ES_{\alpha}} ) \leq \frac{1}{11} \]
\end{lemma}
\begin{proof}
	The proof follows the same reasoning as the proof of Lemma \ref{lem:linear}; in particular
	\begin{align*}
	\E[ \mathcal{L}_j^2 ]
	=\frac{1}{\alpha^2} \int_{F^{-1}(1-\alpha)}^\infty \int_{F^{-1}(1-\alpha)}^\infty  \C\left[\widehat{F}_{I_j}(t), \widehat{F}_{I_j}(s) \right] \, dt ds.
	\end{align*}
	Using the definition of $\gamma_{k}$, it follows that
	\begin{align}
	\label{eq:non.iid.cov}
    \nonumber
	\C\left[ \widehat{F}_{I_{j}}(t),  \widehat{F}_{I_{j}}(s)   \right]
	&=  \frac{1}{m^2} \sum_{i=1}^m\sum_{\ell=1}^m \C\left[ \eins_{X_i\leq t}, \eins_{X_\ell\leq s}   \right] \\
    \nonumber 
    &=\frac{1}{m^2}\sum_{i=1}^m\sum_{\ell=1}^m \gamma_{|i-\ell|}(t,s) \\
    \nonumber 
&=\frac{1}{m}\sum_{k=-(m-1)}^{m-1}\left(1-\frac{|k|}{m}\right)\gamma_{|k|}(t,s)  \\
	&\leq  \frac{1}{m}  \Gamma(t,s),
	\end{align}
	where the last inequality holds by the definition of $\Gamma$.
	Thus $\E[ \mathcal{L}_j^2 ] \leq \frac{1}{m} \overline{\sigma}^{2}_{\ES_{\alpha}}$, and the proof is completed by an application of Chebychev's inequality and since $m\geq 11/\varepsilon^2$.
\end{proof}

\begin{lemma}
\label{lem:error.noniid}
Set $\gamma:= \frac{5}{4} \varepsilon\sqrt\alpha$.
	Then, for every $j=1,\dots,n$,
\[ \P\left(|\mathcal{E}_j|>  \frac{ 3\gamma  }{\alpha} \left( F^{-1}(1-\alpha+\gamma) -F^{-1}(1-\alpha- \gamma) \right)\right)
	\leq \frac{1}{110}.\]
\end{lemma}
\begin{proof}
	The proof follows from the same strategy as in Lemma \ref{lem:error}.
	In fact, only Step 1 therein changes, all other steps follow verbatim the same. 
	
	Recall that $t_{0}$ and $t_{1}$ were chosen such that $F(t_{0}) = 1- \alpha-\gamma$ and  $F(t_{1}) = 1- \alpha+\gamma$.
	We proceed to show that  for $t=t_{0},t_{1}$,
	\begin{align}
	    \label{eq:mix.to.show}
        \P\left(| \widehat{F}_{I_{j}}(t) - F(t)|\geq \gamma \right)\leq \frac{1}{220} 
	\end{align} 
	If that is true, the proof follows from the same reasoning as in Lemma \ref{lem:error}.
	
	Fix $t\in\{t_{0},t_{1}\}$.	
    Using stationarity and the Cauchy-Schwartz inequality
    \begin{align*}
        \gamma_k(t,t)
        &\leq \sqrt{ \V[\eins_{X_1\leq t}] \V[\eins_{X_{k+1}\leq t}]} \\
        &= \V[\eins_{X_1\leq t}] \\
        &=F(t)(1-F(t))
        \leq 1- F(t).
        \end{align*}
    On the other hand, 
    \begin{align*}
        \gamma_k(t,t) 
        &= \P(X_1\leq t, X_{k+1}\leq t)  -\P(X_1\leq t)\P(X_{k+1}\leq t) \\
        &\leq \beta(\sigma(X_1),\sigma(X_{k+1})) 
        \leq \beta(k)
    \end{align*}
    and therefore
	\[\gamma_{k}(t,t) \leq \min\{ C_{{\rm mix}} \rho^{k}, 1-F(t) \}.\]
	By \eqref{eq:non.iid.cov}, this entails that
	\[ \V\left[ \widehat{F}_{I_{j}}(t) \right]
	\leq   \frac{1}{m} \sum_{k\geq 0 } \min\{ C_{{\rm mix}} \rho^{k}, 1-F(t) \}. \]
    Note that $1-F(t)\in[\frac{1}{2}\alpha,2\alpha]$ and thus 
    \[C_{{\rm mix}} \rho^{k} \leq 1-F(t)\quad\text{for }k\geq k_0:= \frac{\log(2C_{\rm mix}/\alpha) }{\log(1/\rho)}.\]
    Therefore,
	\begin{align*} 
    \V\left[ \widehat{F}_{I_{j}}(t)\right]
	&\leq  \frac{1}{m}\left( k_0 2 \alpha +  C_{\rm mix}\sum_{k\geq k_0} \rho^k \right) \\
    &=\frac{1}{m}\left( k_0 2 \alpha +  C_{\rm mix}\frac{\rho^{k_0}}{1-\rho} \right)  \\
   & \leq \frac{\alpha C_1\log(C_{\rm mix})}{m}
    	\end{align*}
    where $C_1$ is an absolute constant (since $\rho\leq 0.9$ and $\alpha\geq 0.01$).
	Thus, by Chebychev's inequality, 
	\begin{align*}
	\P\left( | \widehat{F}_{I_{j}}(t) - F(t)|\geq \gamma \right) 
	&\leq  \frac{\alpha C_1\log(C_{\rm mix})}{m \gamma^2}.
	\end{align*}
	Finally, by the choice of $\gamma$,
	\begin{align*}
         \frac{\alpha C_1\log(C_{\rm mix})}{m \gamma^2}
         =  \frac{16 \varepsilon^2 C_1\log(C_{\rm mix})}{25 m}
         \leq \frac{1}{220},
	\end{align*}
	where  inequality holds by the condition on $m$.
    This shows \eqref{eq:mix.to.show} and therefore completes the proof. 
\end{proof}

Note that Lemma \ref{lem:linear} combined with Lemma \ref{lem:error} imply \eqref{eq:non.iid.to.show}.

\begin{proof}[Proof of Theorem \ref{thm:ES.noniid}]
The proof follows exactly as the proof of Theorem \ref{thm:ES}, only substituting  Lemma \ref{lem:binomial.concentration} with Lemma \ref{lem:binomial.noniid}, Lemma \ref{lem:linear} with Lemma \ref{lem:linear.noniid}, and Lemma \ref{lem:error} with Lemma \ref{lem:error.noniid}.
\end{proof}

\begin{proof}[Proof of Theorem \ref{thm:adversarial.noniid}]
The proof follows from an identical reasoning as used for the proof of Theorem \ref{thm:adversarial}.
\end{proof}

\vspace{1em}
\noindent
\textsc{Acknowledgements:}
Daniel Bartl is grateful for financial support through the Austrian Science Fund [grant doi: 10.55776/ESP31 and 10.55776/P34743]  and a Presidential-Young-Professorship grant [‘Robust Statistical Learning from Complex Data’]. 
Stephan Eckstein is grateful for support
by the German Research Foundation through Project 553088969 as well as the Cluster
of Excellence “Machine Learning — New Perspectives for Science” (EXC 2064/1 number
390727645).
The authors are grateful to Felix Liebrich and Gilles Stupfler for helpful comments and suggestions.

\vspace{1em}
\noindent
\textsc{Competing interests:}
There are no competing interests to declare.

 \bibliographystyle{abbrv}
 \bibliography{bib_risk_heavy_tails}

\end{document}